\newcommand{\commentout}[1]{}
\declaretheorem{theorem}
\declaretheorem{lemma}
\declaretheorem{definition}
\declaretheorem{corollary}
\newcommand{\alert}[1]{\textbf{\color{red}
[[[#1]]]}\marginpar{\textbf{\color{red}**}}\typeout{ALERT:
\the\inputlineno: #1}}
\newcommand{\tower}{{\rm tower}}
\newcommand{\Bin}{{\rm Bin}}
\newcommand{\N}{\mathbb{N}}
\newcommand{\R}{\mathbb{R}}
\newcommand{\poly}{{\rm poly}}
\newcommand{\E}{{\mathbb{E}}}
\newcommand{\mommit}[1]{}
\newcommand{\namedref}[2]{\hyperref[#2]{#1~\ref*{#2}}}
\newcommand{\sectionref}[1]{\namedref{Section}{#1}}
\newcommand{\theoremref}[1]{\namedref{Theorem}{#1}}
\newcommand{\remarkref}[1]{\namedref{Remark}{#1}}
\newcommand{\algref}[1]{\namedref{Algorithm}{#1}}
\newcommand{\lemmaref}[1]{\namedref{Lemma}{#1}}
\newcommand{\corollaryref}[1]{\namedref{Corollary}{#1}}
\begin{document}
\title{Prioritized Metric Structures and Embedding}

\author[1]{Michael Elkin}
\author[1]{Arnold Filtser}
\author[1]{Ofer Neiman}

\affil[1]{Department of Computer Science, Ben-Gurion University of the Negev,
Beer-Sheva, Israel. Email: \texttt{\{elkinm,arnoldf,neimano\}@cs.bgu.ac.il}}

\date{}
\maketitle

\begin{abstract}
Metric data structures (distance oracles, distance labeling schemes, routing schemes) and low-distortion embeddings provide a powerful
 algorithmic methodology, which has been successfully applied for approximation algorithms \cite{llr}, online algorithms \cite{BBMN11}, distributed algorithms \cite{KKMPT12} and  for computing sparsifiers \cite{ST04}.
 However, this methodology appears to have a limitation: the worst-case performance inherently depends on the cardinality of the metric, and one could not specify in advance which vertices/points should enjoy a better
 service (i.e., stretch/distortion, label size/dimension) than that given by the worst-case guarantee.

 In this paper we alleviate this limitation by
 devising a suit of {\em prioritized} metric data structures and embeddings. We show that given a priority ranking $(x_1,x_2,\ldots,x_n)$
 of the graph vertices (respectively, metric points) one can devise a metric data structure (respectively, embedding) in which the stretch
(resp., distortion) incurred by any pair containing a vertex $x_j$ will depend on the rank $j$ of the vertex. We also show that other important
parameters, such as the label size and (in some sense) the dimension, may depend only on $j$. In some of our
metric data structures  (resp., embeddings) we achieve both prioritized stretch (resp., distortion) and label size (resp.,
dimension) {\em simultaneously}.
The worst-case performance of our metric data structures and embeddings is typically asymptotically no worse than of their non-prioritized counterparts.
\end{abstract}

\thispagestyle{empty}
\newpage
\setcounter{page}{1}
\section{Introduction}
The celebrated distance oracle of Thorup and Zwick \cite{TZ01} enables one to preprocess an undirected weighted $n$-vertex graph
$G= (V,E)$ so that to produce a data structure (aka {\em distance oracle}) of size $O(t \cdot n^{1+1/t})$ (for a parameter $t = 1,2,\ldots$)
that supports distance queries between pairs $u,v \in V$ in time $O(t)$ per query. (The query time was recently improved to $O(1)$ by
\cite{C14,WN13}.) The distance estimates provided by the oracle are within a factor of $2t-1$ from the actual distance $d_G(u,v)$
between $u$ and $v$ in $G$. The approximation factor ($2t-1$ in this case) is called the {\em stretch}. Distance oracles can serve as an example of a {\em metric data structure}; other very well-studied
examples include {\em distance labeling} \cite{Peleg99,GPPR01} and {\em routing} \cite{TZ01a,AP92}.
Thorup-Zwick's oracle can also be converted into a distance-labeling scheme: each vertex is assigned a label of size
$O(n^{1/t}\cdot\log ^{1-1/t}n )$ so that given labels of $u$ and $v$ the query algorithm can provide a $(2t-1)$-approximation of $d_G(u,v)$.
Moreover, the oracle also gives rise to a routing scheme \cite{TZ01a} that exhibits a similar tradeoff.

A different but closely related thread of research concerns {\em low-distortion embeddings}. A celebrated theorem of Bourgain
\cite{bourgaintrees} asserts that any $n$-point metric $(X,d)$ can be embedded into an $O(\log n)$-dimensional Euclidean space with
{\em distortion} $O(\log n)$. (Roughly speaking, distortion and stretch are the same thing. See \sectionref{sec:preliminaries} for formal definitions.)
 Fakcharoenphol et al. \cite{FRT} (following Bartal \cite{bartal1,bartal2}) showed that any mertic $(X,d)$ embeds into a distribution over trees (in fact,
 ultrametrics) with expected distortion $O(\log n)$.

These (and many other) important results are not only appealing from a mathematical perspective, but they also were found extremely useful
for numerous applications in Theoretical Computer Science and beyond \cite{llr,BBMN11,KKMPT12,ST04}. A natural disadvantage is the
dependence of all the relevant parameters on $n$, the cardinality of the input graph/metric. However, all these results are either
completely tight, or very close to being completely tight. In order to address this issue, metric data structures and embeddings in which
some pairs of vertices/points enjoy better stretch/distortion or with improved label size/dimension were developed. Specifically, \cite{KSW04,ABCD05,ABN06,CDG06} studied
embeddings and distance oracles in which the distortion/stretch of at least $1-\epsilon$ fraction of the pairs is improved as a function
of $\epsilon$, either for a fixed $\epsilon$ or for all $\epsilon\in[0,1]$ simultaneously
(e.g.  for a fixed $\epsilon$, embeddings into Euclidean space of dimension $O(\log 1/\epsilon)$ with distortion $O(\log(1/\epsilon))$,
or a distance oracle with stretch $2\lceil t \cdot {{\log (2/\epsilon)} \over {\log n}} \rceil + 1$ for $1-\epsilon$ fraction of the pairs).
Also, \cite{ABN07,SS09,AC14} devised embeddings and distance oracles that provide distortion/stretch $O(\log k)$ for all pairs $(x,y)$ of points such that $y$ is among the $k$ closest points to $x$, and
distance labeling schemes that support queries only between $k$-nearest neighbors, in which the label size depends only
on $k$ rather than $n$.

An inherent shortcoming of these results is, however, that the pairs that enjoy better than worst-case distortion cannot be specified in advance.
In this paper we alleviate this shortcoming and devise a suit of prioritized metric data structures
and low-distortion embeddings. Specifically, we show that one can order the graph vertices $V = (x_1,\ldots,x_n)$ {\em arbitrarily in
advance}, and devise metric data structures  (i.e., oracles/labelings/routing schemes) that, for a parameter $t= 1,2,\ldots$, provide
stretch $2\lceil t \cdot {{\log j} \over {\log n}} \rceil - 1$ (instead of $2t-1$) for {\em all pairs involving $x_j$}, while using the
same space as corresponding non-prioritized  data structures! In some cases the label size can be simultaneously improved for the high priority
points, as described in the sequel.

The same phenomenon occurs for low-distortion embeddings. We devise an embedding of general metrics into an $O(\log n)$-dimensional
Euclidean space that provides {\em prioritized distortion} $O(\log j \cdot (\log\log j)^{1/2 + \epsilon})$, for any constant $\epsilon>0$
(i.e., the distortion for all pairs containing  $x_j$ is $O(\log j \cdot (\log\log j)^{1/2 + \epsilon})$).
 Similarly, our embedding into a distribution of trees provides prioritized expected distortion $O(\log j)$.

We introduce a novel notion of {\em improved dimension} for high priority points. In general we cannot expect that the dimension of
an Euclidean embedding with low distortion (even prioritized) will be small (as Euclidean embedding into dimension $D$ has worst-case distortion of
 $\Omega(n^{1/D}\cdot\log n)$ for some metrics \cite{ABN06}). What we can offer is an embedding in which the high ranked points have only
 a few "active" coordinates. That is, only the first $O(\poly(\log j))$ coordinates in the image of $x_j$ will be nonzero, while the distortion
 is also bounded by $O(\poly(\log j))$. This could be
 useful in a setting where the high ranked points participate in numerous computations, then since representing these points requires very
 few coordinates, we can store many of them in the cache or other high speed memory.
We remark that our framework is {\em the first} which allows simultaneously improved
distortion and dimension (or improved stretch and label size) for the high priority points, while providing some guarantee for all pairs.

We have a construction of prioritized distance oracles that exhibits a qualitatively different behavior than of our aforementioned oracles. Specifically, we devise a
distance oracle with space $O(n \log \log n)$ (respectively, $O(n \log^* n)$) and prioritized stretch $O({{\log n} \over {\log (n/j)}})$
(respectively, $2^{O({{\log n} \over {\log (n/j)}})}$). Observe that as long as $j < n^{1-\epsilon}$ for any fixed $\epsilon > 0$, the
prioritized stretch of both these oracles is $O(1)$. The query time is $O(1)$. These oracles are, however,  not path-reporting (a path reporting oracle can return an actual approximate
shortest path in the graph, in time proportional to its length). We also
devise a path-reporting prioritized oracle, which was mentioned above: it has space $O(t \cdot n^{1+1/t})$, stretch
$2 \lceil t \cdot {{\log j} \over {\log n}} \rceil - 1$, and the query time\footnote{We believe the query time can be improved to
$O(1)$: \cite{C14} combines the oracles of \cite{TZ01} and of \cite{MN06} to obtain query time $O(1)$. In the full version of our paper
we show that the oracle of \cite{MN06} can be altered to give prioritized stretch, similar to that of \cite{TZ01} we show here. Using the
techniques of \cite{C14} should thus yield prioritized stretch with $O(1)$ query time.}   is $O(t \cdot {{\log j} \over {\log n}})$. In the
full version of this paper we also devise  a path-reporting prioritized distance oracle (extending \cite{EP15}) with space $O(n \log\log n)$,
stretch $O( ({{\log n} \over {\log (n/j)}})^{\log_{4/3} 7})$, and query time $O(\log ( {{\log n} \over {\log (n/j)}}))$. (Observe that this
stretch and query time are  $O(1)$ for all $j \le n^{1-\epsilon}$.)

This second oracle can be distributed as a labeling scheme, in which not only the stretch $2 \lceil t \cdot {{\log j} \over {\log n}} \rceil - 1$
is prioritized, but also the label size is smaller for high priority points:
it is $O(n^{1/t} \cdot \log j)$ rather than the non-prioritized $O(n^{1/t} \cdot \log n)$. In our routing scheme, if $j$ is the priority rank
of the destination $x_j$, it has prioritized stretch $4 \lceil t \cdot {{\log j} \over {\log n}} \rceil - 3$
(instead of $4t-5$),
the routing tables have size $O(n^{1/t} \cdot \log j)$ (instead of $O(n^{1/t} \cdot \log n)$), and labels have size $O(\log j \cdot
\lceil t {{\log j} \over {\log n}} \rceil)$
(instead of $O(t \cdot \log n)$).

We also consider the dual setting in which the stretch is fixed, and label size $\lambda(j)$ of $x_j$ is smaller when $j\ll n$. The function $\lambda(j)$ will be called
{\em prioritized label size}. Specifically, with prioritized label size $O(j^{1/t} \cdot \log j)$ we can
have stretch $2t-1$. For certain points on the tradeoff curve we can even have both stretch and label size prioritized simultaneously! In
particular, a variant of our distance labeling scheme provides a prioritized stretch $2 \lceil \log j \rceil -1$ and prioritized label size $O(\log j)$.
For routing we have similar gaurantees independent of $n$.
We also devise a distance labeling scheme for graphs that exclude a fixed minor with stretch $1 + \epsilon$ and prioritized label size $O(1/\epsilon \cdot \log j)$ (extending \cite{AG06,T01}).

Another notable result in this context is our prioritized embedding into a {\em single tree}.  It is well-known that any metric can be embedded into a single dominating tree with linear distortion, and that it is tight \cite{RR98}. We show that any $n$-point metric $(X,d)$ enjoys an embedding into a single dominating tree  with prioritized
distortion $\alpha(j)$ {\em if and only if} the sum of reciprocals $\sum_{j=1}^\infty 1/\alpha(j)$ converges. In particular, prioritized distortion $\alpha(j) = j \cdot \log j \cdot (\log\log j)^{1.01}$ is admissible, while $\alpha(j) = j \cdot \log j \cdot \log\log j$ is not, i.e., both our upper and lower bounds are tight. This lower bounds stands out as it shows that it is not  always possible to replace non-prioritized distortion of $\alpha(n)$ by a prioritized distortion $\alpha(j)$. For single-tree embedding the non-prioritized distortion is linear, while the prioritized one is provably superlinear.

\subsection{Overview of Techniques}

We elaborate briefly on the methods used to obtain our  results.

\paragraph{Distance Oracles, Distance Labeling and Routing.}

We have two basic techniques for obtaining distance oracles with prioritized stretch. The first one is manifested in \theoremref{thm:simple-oracle}, and the idea is as follows: Partition the vertices into sets according to their priority, and for each set $K\subseteq V$, apply as a black-box a known distance oracle on $K$, while for the other vertices store the distance to their nearest neighbor in $K$. We show that the stretch of pairs in $K\times V$ is only a factor of 2 worse than the one guaranteed for $K\times K$. Furthermore, we exploit the fact that for sets $K$ of small size, we can afford very small stretch and still maintain small space. The exact choice of the black-box oracle and of the partitions enables a range of tradeoffs between space and prioritized stretch.

Our second technique for an oracle with prioritized stretch, used in \theoremref{thm:prior-stretch}, is based on a non-black-box variation of the \cite{TZ01} oracle.
In their construction for stretch $2t-1$, a (non-increasing) sequence of $t-1$ sets is generated by repeated random sampling. We show that if a vertex is chosen $i$ times, then the query algorithm can be changed to improve the stretch from $2t-1$ to $2(t-i)-1$, for {\em any pair} containing such a vertex. This observation only shows that there exists a priority ranking for which the oracle has the required prioritized stretch. In order to handle {\em any} given ranking, we alter the construction by forcing high ranked elements to be chosen numerous times, and show that this increases the space usage by at most a factor of 2.

In order to build a distance labeling scheme out of their distance oracle, \cite{TZ01} pay an additional factor of $O(\log^{1-1/t} n)$ in the label size (which essentially comes from applying concentration bounds). Attempting to circumvent this logarithmic dependence on $n$, in \theoremref{thm:label} we give a different bound on the deviation probability that depends on the priority ranking of the point. Thus the increase in the label size for the $j$-th point in the ranking is only $O(\log j)$. To obtain arbitrary fixed stretch $2t-1$ for all pairs,  in \theoremref{thm:full-prior-label} we combine this scheme with an iterative application of a {\em source restricted} distance labeling of \cite{RTZ05}.

Most results on distance labeling for bounded treewidth graphs, planar graphs, and graphs excluding a fixed minor, are based on recursively partitioning the graph into small pieces using small separators (as in \cite{LT79}). The label of a vertex essentially consists of the distances to (some of) the vertices in the separator. In order to obtain prioritized label size, such as those given in \theoremref{thm:label-separate} and \theoremref{thm:app-planar}, high ranked vertices should participate in few iterations. To this end, we define multiple phases of applying separators, where each phase tries to separate only certain subset of the vertices (starting with the highest ranked, and finishing in the lowest). This way high ranked vertices will belong to a separator after a few levels, thus their label will be short.

Tree-routing of \cite{T01} is based on categorizing tree vertices as either heavy or light, depending on the size of their subtree. Our prioritized tree-routing assigns each vertex a weight which depends on its priority, and a vertex is heavy if the sum of weights of its descendents is sufficiently large. This idea paves way to our prioritized routing scheme for general graphs as well.

\paragraph{Embeddings}

It is folklore that a metric minimum spanning tree (henceforth, MST) achieves distortion $n-1$.  For our prioritized embedding of general metrics $(X,d)$ into a single tree we consider a complete graph $G = (X,{X \choose 2})$  with weight function that depends on the priority ranking. Specifically, edges incident on high-priority points get higher weights. We then compute an MST in this (generally non-metric) graph, and show that, given a certain convergence condition on the priority ranking, this MST provides a desired prioritized single-tree embedding. Remarkably, we also show that when this condition is not met, no such an embedding is possible even for the metric induced by $C_n$. Hence this embedding is tight.

Our probabilistic embedding to trees with prioritized expected distortion in \theoremref{thm:strong-ultra} is based on the construction of \cite{FRT}. The method of \cite{FRT} involves sampling a random permutation and a random radius, then using these to create a hierarchical partitioning of the metric from which a tree is built. We make the observation that, in some sense, the expected distortion of a point depends on its position in the permutation. Rather than  choosing a permutation uniformly at random, we choose one which is strongly correlated with the given priority ranking. One must be careful to allow sufficient randomness in the permutation choice so that the analysis can still go through, while guaranteeing that high ranked points will appear in the first positions of the permutation.

The embedding of \theoremref{thm:lp-strong-forall} for arbitrary metrics $(X,d)$ into Euclidean space (or any $\ell_p$ space) with prioritized distortion uses similar ideas. We partition the points to sets according to the priorities, for every set $K\subseteq X$ apply as a black-box the embedding of \cite{bourgain}. We show that since the embedding has certain  properties, it can be extended in a Lipschitz manner to all of the metric, while having distortion guarantee for any pair in $K\times X$.

The result of \theoremref{thm: Priority distortionn and space}, which gives prioritized distortion and dimension, is more technically involved. In order to ensure that high priority points are mapped to the zero vector in the embeddings tailored for the lower priority points, we change Bourgain's embedding, which is defined as distances to randomly chosen sets. Roughly speaking, when creating the embedding for a set $K$, we add all the higher ranked points to the random sets. This means the original analysis does not work directly, and we turn to a subtle case analysis to bound the distortion; see \sectionref{sec:prior-dimension} for more details.

\subsection{Organization}

After a few
preliminary definitions, we show the single tree prioritized embedding in \sectionref{sec:tree}, and the probabilistic version in \sectionref{sec:ultrametric}. In \sectionref{sec:oracle}  we discuss our prioritized distance oracles, and in \sectionref{sec:label}  the prioritized labeling schemes. The prioritized routing is shown in \sectionref{sec:route}. Finally, in \sectionref{sec:embed} we present our prioritized embedding results into normed spaces.

\section{Preliminaries}\label{sec:preliminaries}

All the graphs $G=(V,E)$ we consider are undirected and weighted. Let $x_1,\dots,x_n\in V$ be a priority ranking of the vertices. Let $d_G$ be the shortest path metric on $G$, and let $\alpha,\beta:[n]\to\R_+$ be monotone non-decreasing functions.

A distance oracle for a graph $G$ is a succinct data structure, that can approximately report distances between vertices of $G$. The parameters of this data structure we will care about are its space, query time, and stretch factor. We always measure the space of the oracle as the number of words needed to store it (where each word is $O(\log n)$ bits). The oracle has {\em prioritized stretch} $\alpha(j)$, if for any $1\le j<i\le n$, when queried for $x_j,x_i$ the oracle reports a distance $\tilde{d}(x_j,x_i)$ such that
\[
d_G(x_j,x_i)\le\tilde{d}(x_j,x_i)\le \alpha(j)\cdot d_G(x_j,x_i)~.
\]
Some oracles can be distributed as a labeling scheme, where each vertex is given a short label, and the approximate distance between two
vertices should be computed by inspecting their labels alone. We say that the a labeling scheme has {\em prioritized label size} $\beta(j)$,
 if for every $j\in[n]$, the label of $x_j$ consists of at most $\beta(j)$ words. See \sectionref{sec:route} for the precise settings of routing that we consider.

Let $(X,d_X)$ be a finite metric space, and let $x_1,\dots,x_n$ be a priority ranking of the points in $X$. Given a target metric $(Y,d_Y)$, and a non-contractive map $f:X\to Y$,\footnote{The map $f$ is non-contractive if for any $u,v\in X$, $d_X(u,v)\le d_Y(f(u),f(v))$.} we say that $f$ has {\em priority distortion} $\alpha(j)$ if for all $1\le j<i\le n$,
\[
d_Y(f(x_j),f(x_i))\le \alpha(j)\cdot d_X(x_j,x_i)~.
\]
Similarly, if $f:X\to Y$ is non-expansive, then it has priority distortion $\alpha(j)$ if for all $1\le j<i\le n$, $d_Y(f(x_j),f(x_i))\ge d_X(x_j,x_i)/\alpha(j)$.
For probabilistic embedding, we require that each map in the support of the distribution is non-contractive, and the prioritized bound on the distortion holds in expectation.

In the special case that the target metric is a normed space $\ell_p$, we say that the embedding has {\em prioritized dimension} $\beta(j)$, if for every $j\in[n]$, only the first $\beta(j)$ coordinates in $f(x_j)$ may be nonzero.

\section{Single Tree Embedding with Prioritized Distortion}\label{sec:tree}

In this section we show tight bounds on the priority distortion for an embedding into a single tree. The bounds are somewhat non-standard, as they are not attained for a single specific function, but rather for the following family of functions.
Define $\Phi$ to be the family of functions $\alpha:\N\to\R_+$ that satisfy the following properties:
\begin{itemize}
\item $\alpha$ is non-decreasing.
\item $\sum_{i=1}^\infty 1/\alpha(i)\le 1$.
\end{itemize}

\subsection{Upper Bound}

\begin{figure}
\begin{centering}
\includegraphics[scale=0.8]{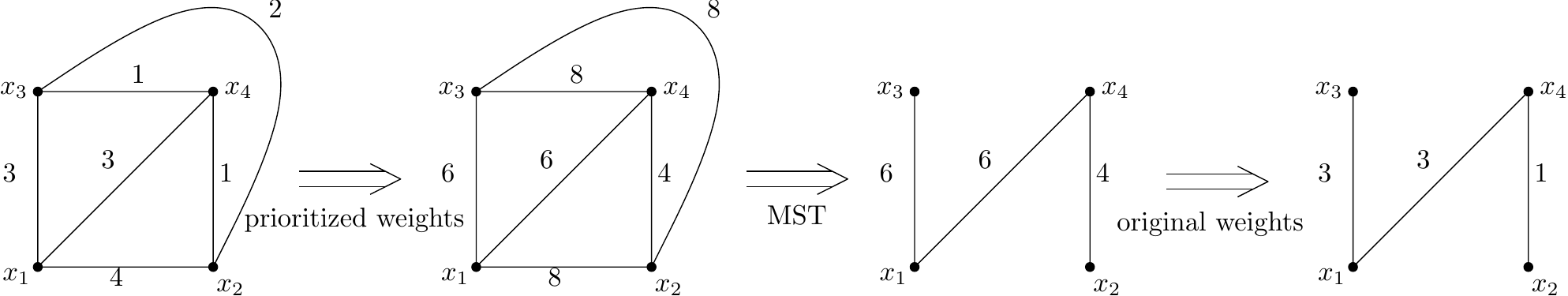}
\par\end{centering}

\caption{\label{fig:LowerBoundSingleTree}An illustration for the algorithm presented during the proof of \theoremref{thm:tree-up}.
We are given a metric space over $X=\{x_1,x_2,x_3,x_4\}$, with the function $\alpha(1)=2,\alpha(2)=4,\alpha(3)=8,\alpha(4)=16$.
In the first step we assign new weights over the edges, then find an MST in the new graph, and finally, restore the original weights.
For example the original distance between $v_2,v_3$ was $2$, while in the returned tree the distance is $7$.
Hence the pair $v_2,v_3$ suffers distortion $3.5<4$. }
\end{figure}

\begin{theorem}\label{thm:tree-up}
For any finite metric space $(X,d)$ and any $\alpha\in \Phi$, there is a (non-contractive) embedding of $X$ into a single tree with priority distortion $2\alpha(j)$.
\end{theorem}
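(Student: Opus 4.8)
The plan is to follow the construction suggested by Figure~\ref{fig:LowerBoundSingleTree}: build a complete graph $G=(X,\binom{X}{2})$ whose edge weights are a rescaling of the original metric that penalizes edges touching high-priority points, take a minimum spanning tree $T$ of this reweighted graph, and then equip $T$ with the \emph{original} metric distances $d$ (so $T$ becomes a tree metric that dominates $d$ because $T\subseteq\binom{X}{2}$ and $d$ is a metric, giving the triangle inequality along tree paths — hence the embedding is non-contractive). Concretely, I would set the new weight of the edge $\{x_i,x_j\}$ (with $i<j$, say) to $w(x_i,x_j)=\alpha(i)\cdot d(x_i,x_j)$; that is, we scale by $\alpha$ of the \emph{higher-priority} (smaller-index) endpoint, matching the figure where $\alpha(1)=2,\dots,\alpha(4)=16$.

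The first key step is the non-contractiveness: for any pair $x_j,x_i$, the $T$-distance is $\sum$ of $d$-lengths along the unique tree path, which is at least $d(x_j,x_i)$ by repeated triangle inequality. The second and main step is the distortion bound: fix a pair $x_j,x_i$ with $j<i$, and consider the tree path $P$ between them in $T$. I want to show $\sum_{e\in P} d(e)\le 2\alpha(j)\cdot d(x_j,x_i)$. The standard MST exchange argument says that for every edge $e=\{a,b\}$ on $P$, the reweighted length $w(e)$ is at most $w(x_j,x_i)=\alpha(j)d(x_j,x_i)$ (otherwise swapping $e$ for $\{x_j,x_i\}$ would reduce the MST weight). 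Now for an edge $e=\{x_a,x_b\}$ on $P$ with $a<b$, we have $d(e)=w(e)/\alpha(a)\le \alpha(j)d(x_j,x_i)/\alpha(a)$. Summing over the path, $\sum_{e\in P}d(e)\le \alpha(j)d(x_j,x_i)\cdot\sum_{e\in P}1/\alpha(a_e)$, where $a_e$ is the smaller-index endpoint of $e$. The crucial combinatorial observation is that the smaller-index endpoints of the edges along a \emph{simple} path are distinct except possibly for repetitions forced by the path structure — in fact each vertex $x_a$ on $P$ is the smaller-index endpoint of at most $2$ of the path edges (it has at most two incident path edges, and in at most both of them it could be the smaller endpoint), so $\sum_{e\in P}1/\alpha(a_e)\le 2\sum_{a}1/\alpha(a)\le 2\cdot 1=2$ using the defining property $\sum 1/\alpha(a)\le1$ of $\Phi$. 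This yields exactly the claimed $2\alpha(j)$.

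I expect the main obstacle to be pinning down this last counting step cleanly: one must argue that the indices $a_e$ appearing as smaller endpoints along the path, counted with multiplicity at most $2$ each, still let us invoke $\sum_{i=1}^\infty 1/\alpha(i)\le1$ — i.e., that distinct path vertices give distinct indices (immediate, since vertices are distinct) and that no index is overcounted beyond the factor $2$. A secondary subtlety is making sure the exchange argument is applied correctly when $\{x_j,x_i\}$ may itself already be an edge of $T$ (then $d$-distance in $T$ is exactly $d(x_j,x_i)\le 2\alpha(j)d(x_j,x_i)$ trivially) and handling ties in edge weights in the MST (break ties consistently, or perturb). Everything else is routine, and the figure's numerical example ($v_2,v_3$ at original distance $2$, tree distance $7$, distortion $3.5<4=\alpha(2)$) is a good sanity check for the constants.
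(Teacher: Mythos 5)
Your proposal is essentially identical to the paper's proof: the same reweighting $w(\{x_j,x_i\})=\alpha(\min(i,j))\cdot d(x_j,x_i)$, the same MST on the reweighted graph returned with original distances, the same MST-cycle exchange bound $w(e')\le\alpha(j)d(x_j,x_i)$ for every path edge $e'$, and the same counting argument that each vertex touches at most two path edges so $\sum_{e'\in P}1/\alpha(a_{e'})\le 2\sum_k 1/\alpha(k)\le 2$. The extra care you take to handle the case $\{x_j,x_i\}\in T$ and to note non-contractiveness via the triangle inequality is correct and matches the paper's (more terse) treatment.
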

\begin{proof}
Let $x_1,\dots,x_n$ be the priority ranking of $X$, and let $G=(X,E)$ be the complete graph on $X$. For $e=\{u,v\}\in E$, let $\ell(e)=d(u,v)$. We also define the following (prioritized) weights $w:E\to\R$, for any $1\le j<i\le n$ the edge $e=\{x_j,x_i\}$ will be given the weight $w(e)=\alpha(j)\cdot \ell(e)$. Observe that the $w$ weights on $G$ do not necessarily satisfy the triangle inequality. Let $T$ be the minimum spanning tree of $(X,E,w)$ (this tree is formed by iteratively removing the heaviest edge from a cycle). Finally, return the tree $T$ with the edges weighted by $\ell$. We claim that this tree has priority distortion $\alpha(j)$.

Consider some $x_j,x_i\in X$, if the edge $e=\{x_j,x_i\}\in E(T)$ then clearly this pair has distortion $1$. Otherwise, let $P$ be the unique path between $x_j$ and $x_i$ in $T$. Since $e$ is not in $T$, it is the heaviest edge on the cycle $P\cup \{e\}$, and for any edge $e'\in P$ we have that $w(e')\le w(e)=\alpha(j)\cdot d(x_j,x_i)$. Consider some $x_k\in X$, and note that there can be at most 2 edges touching $x_k$ in $P$. If $e'\in P$ is such an edge, and its weight by $w$ was changed by a factor of $\alpha(k)$, then $\alpha(k)\cdot\ell(e')\le \alpha(j)\cdot d(x_j,x_i)$. Summing this over all the possible values of $k$ we obtain that the length of $P$ is at most
\begin{equation}\label{eq:tree}
\sum_{e'\in P}\ell(e')\le 2\sum_{k=1}^n \frac{\alpha(j)}{\alpha(k)}\cdot d(x_j,x_i)\le 2\alpha(j)\cdot d(x_j,x_i)~.
\end{equation}
\end{proof}

\begin{corollary}\label{cor:single tree 1+eps}
For any fixed $0<\epsilon<1/2$, one can take the function $\alpha:\N\to\R$ defined by $\alpha(1)=1+\epsilon$, and for $j\ge 2$, $\alpha(j)=\frac{j(\log j)^{1+\epsilon}}{c}$, which lies in $\Phi$ for $c\approx \epsilon^2$, and obtain priority distortion $O\left(j(\log j)^{1+\epsilon}\right)$. Furthermore, the distortion of the pairs containing $x_1$ is only $1+3\epsilon$.
\end{corollary}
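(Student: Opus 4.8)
The plan is to verify that the stated function $\alpha$ belongs to $\Phi$, apply \theoremref{thm:tree-up}, and then reopen that proof to obtain the sharper bound for pairs incident to $x_1$.

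First I would check $\alpha\in\Phi$. Monotonicity is immediate: on $\{2,3,\dots\}$ the quantity $j(\log j)^{1+\epsilon}$ is a product of positive non-decreasing functions, and since $c$ will be chosen of order $\epsilon^2$ (hence small), $\alpha(2)=\Theta(1/c)$ exceeds $\alpha(1)=1+\epsilon$. For the convergence requirement, a standard integral bound ($\int_2^\infty \frac{dx}{x(\log x)^{1+\epsilon}}=O(1/\epsilon)$, plus the first term) gives $\sum_{j\ge 2}\frac{1}{j(\log j)^{1+\epsilon}}=O(1/\epsilon)$ for $0<\epsilon<1/2$; combined with $1-\frac{1}{1+\epsilon}=\frac{\epsilon}{1+\epsilon}=\Theta(\epsilon)$, this shows that setting $c$ to an appropriate absolute constant times $\epsilon^2$ makes $\sum_{j\ge 1}\frac{1}{\alpha(j)}=\frac{1}{1+\epsilon}+c\cdot O(1/\epsilon)\le 1$ — which is exactly why the right order of magnitude for $c$ is $\epsilon^2$. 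Then \theoremref{thm:tree-up} produces an embedding into a single tree with priority distortion $2\alpha(j)$, which for $j\ge 2$ equals $\frac{2}{c}\,j(\log j)^{1+\epsilon}=O\!\left(j(\log j)^{1+\epsilon}\right)$ since $\epsilon$, and hence $c$, is a fixed constant.

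For the $1+3\epsilon$ bound on pairs containing $x_1$ (sharper than the $2\alpha(1)=2(1+\epsilon)$ given directly by \theoremref{thm:tree-up}), I would reopen the proof of that theorem for a pair $\{x_1,x_i\}$ with $i>1$. If this edge lies in the output tree $T$ the distortion is $1$; otherwise, letting $P$ be the $x_1$--$x_i$ path in $T$, the edge $\{x_1,x_i\}$ is the heaviest $w$-edge of the cycle $P\cup\{\{x_1,x_i\}\}$, so every $e'=\{x_k,x_\ell\}\in P$ with $k<\ell$ satisfies $\ell(e')\le\frac{\alpha(1)}{\alpha(k)}\,d(x_1,x_i)$, and we charge $e'$ to its lower-index endpoint $x_k$. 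In the original argument each vertex absorbs at most $\deg_P(\cdot)\le 2$ such charges, producing the factor $2$. The one new observation is that $x_1$ is an \emph{endpoint} of $P$, hence $\deg_P(x_1)=1$, and since $1$ is the minimal index its unique incident $P$-edge is charged to $x_1$ itself, contributing only $\frac{\alpha(1)}{\alpha(1)}\,d(x_1,x_i)=d(x_1,x_i)$. Summing the charges yields
\[
\sum_{e'\in P}\ell(e')\ \le\ d(x_1,x_i)+2\alpha(1)\left(\sum_{k\ge 2}\frac{1}{\alpha(k)}\right)d(x_1,x_i)\ \le\ d(x_1,x_i)+2(1+\epsilon)\cdot\frac{\epsilon}{1+\epsilon}\cdot d(x_1,x_i)\ =\ (1+2\epsilon)\,d(x_1,x_i),
\]
where I used $\sum_{k\ge 2}\frac{1}{\alpha(k)}\le 1-\frac{1}{1+\epsilon}=\frac{\epsilon}{1+\epsilon}$ from the first step. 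Since $1+2\epsilon\le 1+3\epsilon$, this gives the claimed distortion.

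The only step that is not bookkeeping is the last one: spotting that the factor $2$ in the proof of \theoremref{thm:tree-up} drops to $1$ exactly at the path endpoint $x_1$, and then being careful enough with the constant hidden in the $\Phi$-condition to turn the naive estimate $2\alpha(1)=2(1+\epsilon)$ into $1+2\epsilon$. Pinning down the right order $c=\Theta(\epsilon^2)$ of the normalizing constant is the other, routine, point.
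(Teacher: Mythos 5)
Your argument is correct and follows the same route as the paper: verify $\alpha\in\Phi$ via the integral bound on $\sum 1/\alpha$, invoke \theoremref{thm:tree-up} for general $j$, and then reopen the charging argument noting that $x_1$ is a path endpoint (degree $1$ in $P$, and the unique edge it meets is charged to $x_1$ itself with coefficient $\alpha(1)/\alpha(1)=1$). The only cosmetic difference is that you keep the exact bound $\sum_{k\ge 2}1/\alpha(k)\le\epsilon/(1+\epsilon)$, so the $(1+\epsilon)$ factors cancel and you obtain $1+2\epsilon$ directly, whereas the paper loosens to $\sum_{k\ge 2}1/\alpha(k)<\epsilon$ and absorbs the resulting $2\epsilon^2$ into the constant to get $1+3\epsilon$.
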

\begin{proof}
The fact that $\alpha\in \Phi$ follows by noting that $\int\frac{dx}{\alpha(x)}=\frac{-c}{\epsilon \cdot\log^\epsilon x}+C$. To see the small distortion for pairs $x_1,x_i$, observe that in the case $\{x_1,x_i\}\notin T$, the first edge of the path $P$ from $x_1$ to $x_i$ has weight at most $d(x_1,x_i)$, while none of the other edges on $P$ is touching $x_1$. Furthermore, since $1/\alpha(1)>1-\epsilon$, we have that $\sum_{k=2}^\infty 1/\alpha(k) <\epsilon$, and so so we can replace \eqref{eq:tree} by
\[
\sum_{e'\in P}\ell(e')\le d(x_1,x_i)+2\sum_{k=2}^n \frac{\alpha(1)}{\alpha(k)}\cdot d(x_1,x_i)\le (1+3\epsilon)\cdot d(x_1,x_i)~.
\]
\end{proof}

\subsection{Lower Bound}

Here we show a matching lower bound (up to a constant, which is only 2 for trees without Steiner nodes
\footnote{We say that the target tree has Steiner nodes if it contains more vertices than the original graph.}
on the possible functions admitting an embedding into a tree with priority distortion. We first show that a (non-decreasing) function which is not in $\Phi$ cannot bound the priority distortion in a spanning tree embedding. Then using an argument similar to that of \cite{G01}, we extend this for arbitrary dominating trees,\footnote{A tree $T$ dominates a graph $G$ if $d_T\ge d_G$.} while losing a factor of 8 in the lower bound.

\begin{theorem}\label{thm:spanning-tree-low}
For any non-decreasing function $\alpha:\N\to\R$ with $\alpha\notin\Phi$, there exists an integer $n$, a graph $G=(V,E)$ with $|V|=n$ vertices, and a priority ranking of $V$, such that no spanning tree of $G$ has priority distortion less than $\alpha$.
\end{theorem}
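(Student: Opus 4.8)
# Proof Proposal

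The plan is to exhibit, for a non-decreasing $\alpha \notin \Phi$, a graph on which every spanning tree incurs priority distortion at least $\alpha$ on some pair. Since $\alpha \notin \Phi$ means $\sum_{i=1}^\infty 1/\alpha(i) > 1$, there is a finite $n$ with $\sum_{i=1}^n 1/\alpha(i) > 1$; fix this $n$. The natural candidate graph is the cycle $C_n$ (as hinted in the introduction, where the authors mention tightness ``even for the metric induced by $C_n$''), with the priority ranking $x_1, \ldots, x_n$ being the vertices in cyclic order, and with carefully chosen edge weights. The key structural fact is that any spanning tree of $C_n$ is obtained by deleting exactly one edge $e^* = \{x_j, x_{j+1}\}$; the tree distance between the two endpoints of $e^*$ is then the sum of all the \emph{other} $n-1$ edge weights. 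So the distortion suffered by that pair is $\left(\sum_{e \neq e^*} \ell(e)\right) / \ell(e^*)$.

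First I would set up the edge weights so as to force a contradiction. Let the edges be $e_1, \ldots, e_n$ in cyclic order, where $e_i$ joins $x_i$ and $x_{i+1}$ (indices mod $n$); I want to choose $\ell(e_i)$ so that $e_i$ is ``charged'' to the better-ranked of its two endpoints, i.e. to $x_i$ (assuming the ranking increases around the cycle, the endpoints of $e_i$ are $x_i$ and $x_{i+1}$, so the relevant rank is $i$). Concretely, set $\ell(e_i)$ proportional to $1/\alpha(i)$ — say $\ell(e_i) = 1/\alpha(i)$ after rescaling. Then for the spanning tree that deletes $e_j$, the distortion of the pair $(x_j, x_{j+1})$ is
\[
\frac{\sum_{i \neq j} \ell(e_i)}{\ell(e_j)} = \frac{\sum_{i=1}^n 1/\alpha(i) - 1/\alpha(j)}{1/\alpha(j)} = \alpha(j)\left(\sum_{i=1}^n \frac{1}{\alpha(i)}\right) - 1.
\]
Since $\sum_{i=1}^n 1/\alpha(i) > 1$, this exceeds $\alpha(j)\cdot 1 - 1 = \alpha(j) - 1$; to get strictly more than $\alpha(j)$ I would instead pick $n$ large enough that $\sum_{i=1}^n 1/\alpha(i) \ge 1 + \delta$ for a fixed $\delta > 0$, and note that then the distortion is at least $\alpha(j)(1+\delta) - 1 \ge \alpha(j)$ once $\alpha(j)$ is not too small; a tiny bit of care (possibly prepending a few dummy high-rank points, or using that $\alpha$ is non-decreasing so $\alpha(n) \ge \alpha(j)$) handles the edge cases. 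The point is that whichever edge $e_j$ the spanning tree deletes, the pair $(x_j, x_{j+1})$ — whose better rank is $j$ — suffers distortion $\ge \alpha(j)$, contradicting the assumption that the tree has priority distortion \emph{less than} $\alpha$.

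One subtlety I need to address: the priority-distortion guarantee for the pair $(x_j, x_{j+1})$ is governed by $\alpha(j)$ (the smaller index), which is exactly the index of the edge I assigned weight $1/\alpha(j)$, so the bookkeeping lines up — but I should double-check the boundary edge $e_n = \{x_n, x_1\}$, whose better-ranked endpoint is $x_1$; deleting $e_n$ forces distortion governed by $\alpha(1)$, and $\ell(e_n) = 1/\alpha(n)$ which is the \emph{smallest} weight, so the ratio $\sum_{i \neq n}\ell(e_i)/\ell(e_n)$ is huge — comfortably above $\alpha(1)$. So this case is the easiest, not a problem. I also need the metric to actually be the shortest-path metric of $C_n$ with these weights, which it is by construction (the graph \emph{is} $C_n$), so no triangle-inequality issue arises, unlike in the upper bound. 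I expect the main obstacle to be purely quantitative: making sure the chosen $n$ and the rescaling constant push the distortion strictly past $\alpha(j)$ \emph{simultaneously for all $j$}, rather than just for one $j$ — this is where I'd invoke the strict inequality $\sum_{i=1}^n 1/\alpha(i) > 1$ with a margin, and possibly argue by taking $n$ slightly larger so that $\alpha(j)\big(\sum 1/\alpha(i)\big) - 1 > \alpha(j)$ uniformly. Everything else is routine.
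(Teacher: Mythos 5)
Your choice of graph (a cycle) is the same as the paper's, but the constructions differ in an essential way: the paper keeps \emph{unit edge weights} on $C_n$ and instead encodes $\alpha$ in the \emph{positions} of the ranked vertices (placing $x_1,\dots,x_{n'}$ so that disjoint arcs of lengths $2n/(\alpha(i)+1)$ around each $x_i$ are all forced to remain in the tree, which is impossible since these lengths sum to more than $n$). You instead put the ranked vertices in cyclic order and encode $\alpha$ in the \emph{edge weights} $\ell(e_i)=1/\alpha(i)$.

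This choice introduces a gap that is not merely quantitative. Your distortion formula $\bigl(\sum_{i\ne j}\ell(e_i)\bigr)/\ell(e_j)$ for the pair $(x_j,x_{j+1})$ implicitly assumes $d_G(x_j,x_{j+1})=\ell(e_j)$, i.e.\ that the direct edge is the shortest path. With weights $1/\alpha(i)$ this can fail: if $1/\alpha(1)>\sum_{i\ge 2}1/\alpha(i)$, then $e_1$ is not on \emph{any} shortest path in $C_n$, and the tree $C_n\setminus\{e_1\}$ has distortion exactly $1$ on every pair, so your instance admits a perfect spanning tree. A concrete counterexample: take $\alpha(j)=\tfrac{3}{2}j^2$. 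Then $\sum_{j\ge 1}1/\alpha(j)=\pi^2/9\approx 1.097>1$, so $\alpha\notin\Phi$, yet $\ell(e_1)=2/3$ while $\sum_{i\ge 2}\ell(e_i)\le \pi^2/9-2/3\approx 0.43 < 2/3$, so every shortest path avoids $e_1$ and $C_n\setminus\{e_1\}$ is an isometric spanning tree. This happens precisely in the regime the theorem must handle: $\sum 1/\alpha(i)$ converging to a value only slightly above $1$, with $\alpha(1)$ close to $1$. The paper's unit-weight construction sidesteps this because, using $\alpha(i)\ge 1$, each forbidden arc around $x_i$ has length $n/(\alpha(i)+1)\le n/2$, so the relevant shortest path is always the direct one. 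To repair your approach you would either have to prove that some \emph{other} pair is badly distorted whenever an early edge is deleted (which does not hold, as the counterexample shows), or abandon the non-uniform weights and adopt the paper's placement argument.
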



\begin{figure}
\begin{centering}
\includegraphics[scale=0.4]{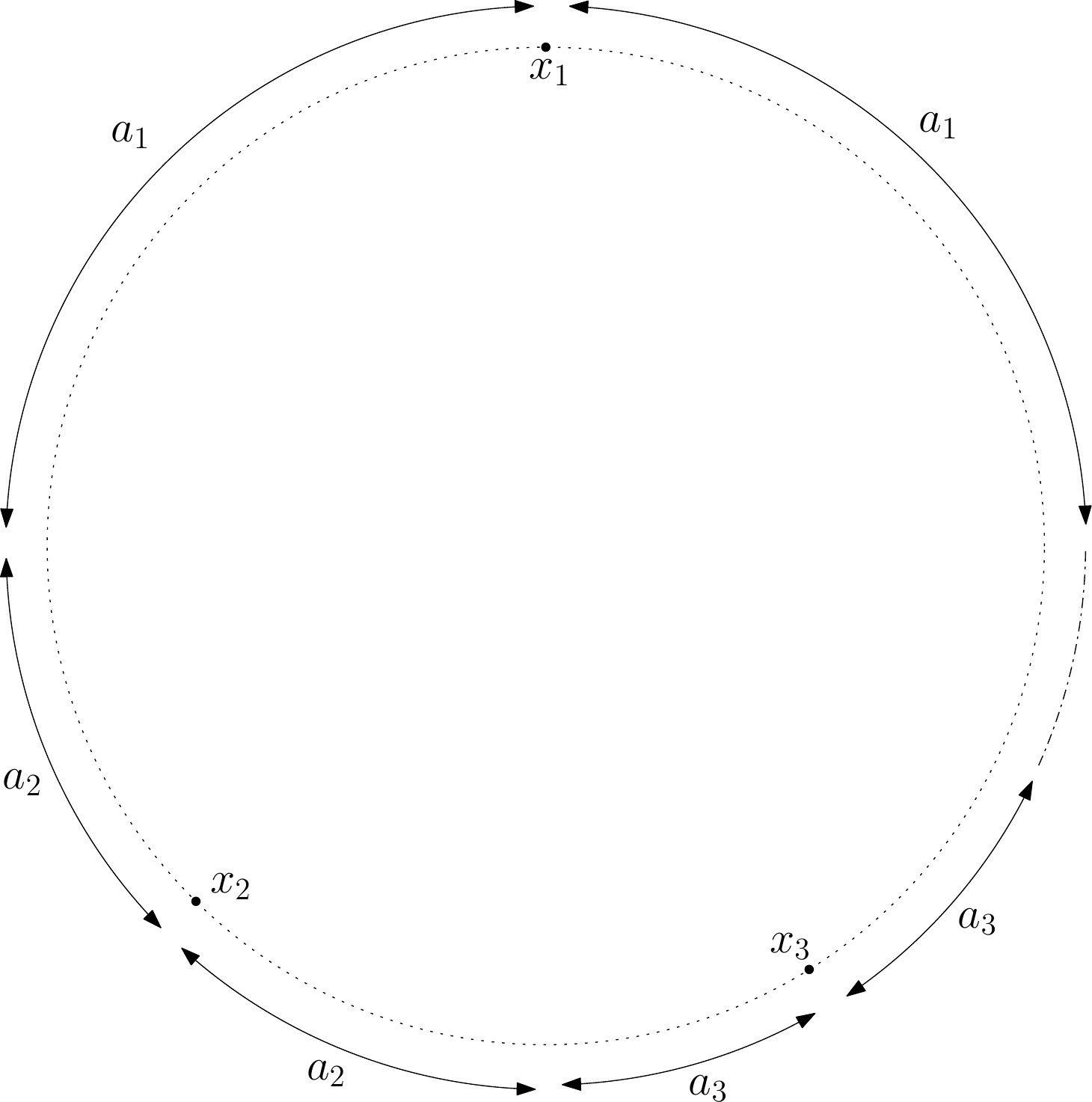}
\par\end{centering}

\caption{\label{fig:UpperBoundSingleTree}An illustration for the proof of \theoremref{thm:spanning-tree-low}. As all the pairs
containing $x_i$ cannot suffer distortion greater than $\alpha(i)$, all the edges of distance at most $a_i$ from $x_i$ cannot be
deleted from the tree.
As $\sum a_i>n$, placing $x_1,x_2,\dots$ so that the relevant
sets of edges are disjoint and cover all the edges, there is no edge that can be deleted.}
\end{figure}

\begin{proof}
Since $\alpha\notin\Phi$, there exists an integer $n'$ such that $\sum_{i=1}^{n'}1/\alpha(i)>1$. Take some integer $n>n'$ such that $\frac{n}{\alpha(i)+1}$ is an integer for all $1\le i\le n'$ (assume w.l.o.g that the $\alpha(i)$ are rational numbers). Then let $G=C_n$, a cycle on $n$ points with unit weight on the edges. Clearly, a spanning tree of $C_n$ is obtained by removing a single edge, thus we will choose the priorities $x_1,\dots,x_n\in V$ in such a way that no edge can be spared.

Seeking contradiction, assume that there exists a spanning tree with priority distortion less than $\alpha$. Let $x_1$ be an arbitrary vertex, and note that if $u$ is a vertex within distance $a_1=n/(\alpha(1)+1)$ from $x_1$, then all the edges on the shortest path from $x_1$ to $u$ must remain in the tree. Otherwise, the distortion of the pair $\{x_1,u\}$ will be at least $\frac{n-a_1}{a_1}=\alpha(1)$. There are $\frac{2n}{\alpha(1)+1}$ such edges that must belong to the tree (since we consider vertices from both sides of $x_1$). Now take $x_2$ to be a vertex at distance $\frac{n}{\alpha(1)+1}+\frac{n}{\alpha(2)+1}$ from $x_1$. By a similar argument, the $\frac{2n}{\alpha(2)+1}$ edges closest to $x_2$ must be in the tree as well. Observe that these edges form a continuous sequence on the cycle with those edges near $x_1$. Continue in this manner to define $x_3,\dots,x_{n'}$, and conclude that there are at least
\begin{equation}\label{eq:edges}
\sum_{i=1}^{n'}\frac{2n}{\alpha(i)+1}\ge \sum_{i=1}^{n'}\frac{n}{\alpha(i)}>n
\end{equation}
edges that are not allowed to be removed, but this is a contradiction, as there are only $n$ edges in $C_n$.
\end{proof}

\begin{theorem}\label{thm:tree-low}
For any non-decreasing function $\alpha:\N\to\R$ with $\alpha\notin\Phi$, there exists an integer $n$, a metric $(X,d)$ on $n$ points and a priority ranking $x_1,\dots,x_n\in X$, such that there is no embedding of $X$ into a dominating tree metric with priority distortion less than $\alpha/8$.
\end{theorem}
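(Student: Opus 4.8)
The plan is to reduce the case of a general dominating tree (possibly with Steiner nodes) to the spanning-tree lower bound of \theoremref{thm:spanning-tree-low}, using the same $C_n$ instance and the same priority ranking $x_1,\dots,x_n$ with unit-weight edges. Suppose, for contradiction, that there is a non-contractive embedding $f$ of the cycle into a dominating tree $(T,d_T)$ with priority distortion strictly less than $\alpha/8$. The idea is to convert $T$ into a spanning tree of $C_n$ that still has priority distortion less than $\alpha$, contradicting \theoremref{thm:spanning-tree-low}. This is exactly the type of argument used in \cite{G01}: a dominating tree embedding of a cycle can be ``snapped back'' onto the cycle with only a constant-factor loss.

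Concretely, the steps I would carry out are as follows. First, since $T$ is non-contractive and every edge $\{u,v\}$ of $C_n$ has $d_{C_n}(u,v)=1$, each cycle edge maps to a path in $T$ of $T$-length at least $1$; and since the embedding has priority distortion less than $\alpha/8$, the two endpoints of an edge incident to $x_j$ are at $T$-distance less than $\alpha(j)/8$. Second, I would use the standard fact that $C_n$ minus any single edge embeds isometrically into a path, and more importantly that any dominating tree containing the image of a cycle must, when you walk around the cycle $x_{i_0}\to x_{i_1}\to\cdots\to x_{i_{n-1}}\to x_{i_0}$, traverse a closed walk in $T$; some tree edge $\hat e$ is used by this closed walk, and deleting the corresponding cycle edge (or a nearby one) yields a spanning tree $T'$ of $C_n$. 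The quantitative heart is to show that for every pair $\{x_j,x_i\}$, the $T'$-distance (i.e., the cycle-distance after removing one edge) is at most $8$ times the priority-distortion bound coming from $T$, hence less than $\alpha(j)\cdot d_{C_n}(x_j,x_i)$. This uses that $d_T(f(x_j),f(x_i)) \le d_{C_n}(x_j,x_i)$ going around the ``short'' side, together with the observation that the detour forced by routing through $T$ rather than along the arc can inflate distances by at most a universal constant — the factor $8$ — because on a cycle the two arcs between any pair sum to $n$ and the tree path is ``between'' them in a controlled way.

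The main obstacle I expect is precisely this last quantitative comparison: making rigorous the claim that the tree metric $d_T$ restricted to the cycle points is, up to a factor of $8$, the path metric obtained by deleting a single well-chosen edge of $C_n$. One clean way is to pick the deleted cycle edge to be one whose image-path in $T$ contains an edge that separates $T$ into two roughly balanced halves with respect to the cyclic order of the $x_i$'s, so that for any pair $\{x_j,x_i\}$ the shorter arc of $C_n\setminus\hat e$ between them has length comparable to $d_T(f(x_j),f(x_i))$; controlling ``comparable'' is where the constant $8$ is spent (loss of up to $2$ from non-contractiveness/triangle inequality on each side, and up to $4$ from the imbalance of the split and from the fact that a pair's tree-path need not align with either cycle arc). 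Once this structural lemma is in place, plugging into \eqref{eq:edges} with $\alpha/8$ in place of $\alpha$ — i.e., choosing $n'$ with $\sum_{i=1}^{n'} 8/\alpha(i) > 1$, which exists since $\alpha/8 \notin \Phi$ whenever $\alpha \notin \Phi$ — immediately forces more than $n$ un-removable edges in $C_n$, the desired contradiction.
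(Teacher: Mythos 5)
Your high-level strategy lines up with the paper's: same cycle instance and priority ranking as in \theoremref{thm:spanning-tree-low}, and the factor~$8$ is indeed paid when passing to Steiner trees. Where you diverge, and where the gap sits, is the central mechanism. You want to \emph{reduce} to \theoremref{thm:spanning-tree-low} by manufacturing a spanning tree $T''$ of $C_n$ (i.e., $C_n$ minus one edge) with $d_{T''}\le 8\,d_T$ pointwise. The paper does not attempt such a reduction. It instead quotes two structural results of Gupta~\cite{G01}: (i) Steiner nodes can be removed from any dominating tree at a cost of a factor~$8$ in distances --- this, and only this, is where the~$8$ comes from; and (ii) the resulting tree on the $n$ cycle vertices can be restructured into a tree $T'$ with $d_{C_n}\le d_{T'}\le d_T$ in which each vertex has at most one edge into each semicircle, and, crucially, satisfying the \emph{dichotomy}: every pair at cycle distance $k$ has $d_{T'}$ equal to $k$ or at least $n-k$. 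The paper then re-runs the counting argument of \theoremref{thm:spanning-tree-low} directly on $T'$: the dichotomy forces each vertex within cycle distance $a_j$ of $x_j$ to have a $T'$-edge back toward $x_j$, these forced edges are pairwise distinct across the $x_j$'s, and so more than $n-1$ edges are required, a contradiction. No spanning tree of $C_n$ is ever constructed.

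The missing idea in your sketch is precisely the one the paper imports from \cite{G01}, and without it your reduction does not go through. The restructured tree need not be $C_n$ minus an edge: on $C_{10}$ the tree with edges $\{1,2\},\{2,3\},\{3,4\},\{4,5\},\{5,10\},\{10,9\},\{9,8\},\{8,7\},\{7,6\}$, each weighted by its $C_{10}$-distance, is dominating, has at most one edge into each semicircle, and satisfies the dichotomy, yet it is not $C_{10}$ minus an edge. So ``delete the corresponding cycle edge'' has no canonical meaning, and establishing that a single deletion simultaneously $8$-approximates $d_T$ for \emph{every} pair would essentially require re-deriving Gupta's dichotomy anyway --- at which point you can just run the counting argument on $T'$ as the paper does. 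Your proposed split of the constant (a $2$ from the triangle inequality and a $4$ from imbalance) also does not reflect the paper's accounting, where the non-Steiner case incurs no constant loss at all. Finally, a small slip: you wrote $d_T(f(x_j),f(x_i))\le d_{C_n}(x_j,x_i)$; domination gives the reverse inequality.
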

\begin{proof}
Take $n$, the metric $(X,d)$ induced by $C_n$, and the same priority ranking as in \theoremref{thm:spanning-tree-low}.
First consider any tree $T$ with exactly $n$ vertices, but which is not necessarily spanning. That is, $T$ is allowed to have edges that did not exist in $C_n$. Since $T$ must be dominating, we may assume that an edge in $T$ connecting vertices of distance $k$ in $C_n$ will have weight exactly $k$ (if it has larger weight, reducing it to $k$ can only improve the distortion). We extend an argument of \cite{G01} to prove that the priority distortion of $T$ is at least $\alpha$.

The argument at Section 7 of \cite{G01} says that $T$ can be replaced by a tree $T'$ satisfying $d\le d_{T'}\le d_T$, and such that any vertex in $T'$ has at most one edge to its left semicircle and one edge to its right semicircle.\footnote{If the vertices of $C_n$ are labeled $0,1,\dots,n-1$ as ordered on the cycle, the right semicircle of vertex $i$ is $\{i+1,i+2,\dots i+\lfloor n/2\rfloor\}$ (addition is modulo $n$), and the left semicircle is $V\setminus\{i,i+1,i+2,\dots i+\lfloor n/2\rfloor\}$.} A crucial observation (made in \cite{G01}) is that for any pair of vertices at distance $k$ in $C_n$, their distance in $T'$ can be either $k$ or at least $n-k$. Now we may use a similar reasoning as in the proof of \theoremref{thm:spanning-tree-low}; Assume that $x_1$ is the $i$-th vertex of $C_n$, and observe that any vertex $i+j$ for $1\le j\le a_1$, must be connected by an edge to one of the vertices $i,i+1,\dots,i+j-1$, as otherwise $d_{T'}(i,i+j)\ge n-a_1$, and the distortion of the pair $\{x_1,j\}$ will be at least $\alpha(1)$. Notice that the edges $x_2$ forces to exist are disjoint from those of $x_1$. It follows that for each $1\le i\le n'$, $x_i$ forces at least $\frac{2n}{\alpha(i)+1}$ disjoint edges to be in the tree, which is impossible due to \eqref{eq:edges}.

Finally, consider arbitrary dominating tree metrics, which may have Steiner nodes (nodes which no vertex of $C_n$ is mapped onto). By a result of \cite{G01}, such nodes may be removed while increasing the distance between any pair of points by at most 8, so we conclude that such a tree cannot have priority distortion less than $\alpha/8$.

\end{proof}

\section{Probabilistic Embedding into Ultrametrics with Prioritized Distortion}\label{sec:ultrametric}
An ultrametric $\left(U,d\right)$ is a metric space satisfying a
strong form of the triangle inequality, that is, for all $x,y,z\in U$,
$d(x,z)\le\max\left\{ d(x,y),d(y,z)\right\} $. The following definition
is known to be an equivalent one (see \cite{BLMN03}).

\begin{restatable}{definition}{Ultradef}\label{def:ultra}
An ultrametric $U$ is a metric space $\left(U,d\right)$ whose elements
are the leaves of a rooted labeled tree $T$. Each $z\in T$ is associated
with a label $\Phi\left(z\right)\ge0$ such that if $q\in T$ is a
descendant of $z$ then $\Phi\left(q\right)\le\Phi\left(z\right)$
and $\Phi\left(q\right)=0$ iff $q$ is a leaf. The distance between
leaves $z,q\in U$ is defined as $d_{T}(z,q)=\Phi\left(\mbox{lca}\left(z,q\right)\right)$
where $\mbox{lca}\left(z,q\right)$ is the least common ancestor of
$z$ and $q$ in $T$.
\end{restatable}

\begin{restatable}{theorem}{StrongUltra}\label{thm:strong-ultra}
For any metric space $(X,d)$, there exists a distribution over embeddings of $X$ into ultrametrics with expected prioritized distortion $O(\log j)$.
\end{restatable}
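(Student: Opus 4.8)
## Proof proposal for \theoremref{thm:strong-ultra}

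The plan is to follow the hierarchical partitioning scheme of \cite{FRT}, but to bias the random permutation so that high-priority points behave like points in a small metric. Recall that the \cite{FRT} construction picks a uniformly random permutation $\pi$ of $X$ and a single random radius parameter $\beta\in[1,2]$ (chosen with density proportional to $1/r$), and then, at scale $2^i\cdot\beta$, assigns each point $u$ to the cluster of the first point in $\pi$ that lies within distance $2^i\cdot\beta$ of $u$. Nesting these partitions over all scales $i$ yields a laminar family, hence an ultrametric (equivalently, a labeled tree as in \defref{def:ultra}). The standard analysis shows that a pair $x,y$ at distance $d(x,y)\approx 2^i$ is separated at scale $i+O(1)$ with probability controlled by $\sum_{w}\frac{d(x,y)}{d(w,\{x,y\})}$ summed over points $w$ ordered by increasing distance from $\{x,y\}$; the harmonic sum gives the $O(\log n)$ bound. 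The key observation I would isolate is that a point $w$ contributes to separating $x,y$ only if $w$ precedes both $x$ and $y$ in $\pi$, so if $w$ is the $t$-th closest point to $\{x,y\}$, its contribution is a $1/t$ factor \emph{times} the probability that $w$ comes before the closer of $x,y$ in $\pi$ — and with a uniform permutation that probability is $\Theta(1/t)$ only in expectation over the ordering; what really matters is how early $x$ or $y$ appears.

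The modification: instead of a uniform $\pi$, I would sample $\pi$ so that it is strongly correlated with the priority ranking $x_1,\dots,x_n$, while retaining enough entropy. Concretely, partition $[n]$ into blocks $B_0=\{1\}$, $B_1=\{2,3\}$, $B_2=\{4,\dots,7\}$, and in general $B_\ell=\{2^\ell,\dots,2^{\ell+1}-1\}$ (ranks), and build $\pi$ by first placing a uniformly random permutation of $\{x_j: j\in B_0\}$, then appending a uniformly random permutation of $\{x_j: j\in B_1\}$, then of $B_2$, and so on. Thus $x_j$ always appears before any $x_i$ with $i$ in a strictly later block, but within a block the order is uniform. This guarantees that $x_j$ occupies one of the first $2^{\lceil\log j\rceil+1}=O(j)$ positions of $\pi$. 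Now fix a pair $x_j,x_i$ with $j<i$; I want to bound the expected distortion, i.e. the expected value of $d_T(x_j,x_i)/d(x_j,x_i)$. Running the separation analysis of \cite{FRT} but ordering the points $w$ by distance from $\{x_j,x_i\}$ and conditioning on blocks: a point $w=x_m$ can help separate $x_j$ from $x_i$ at the relevant scale only if $w$ precedes \emph{both} $x_j$ and $x_i$ in $\pi$; in particular $w$ must lie in a block no later than that of $x_j$, so $m = O(j)$. Hence only the $O(j)$ highest-priority points can ever be "cluster centers" that cut this pair. Re-running the harmonic-sum argument over just these $O(j)$ candidate points (ordering them by increasing distance to $\{x_j,x_i\}$, using that the $t$-th such point is chosen before $x_j$ with probability $O(1/t)$ by the within-block uniformity plus the fact that $x_j$ is early) yields expected contribution $\sum_{t=1}^{O(j)} O(1/t)\cdot O(1) = O(\log j)$ to the stretch, which is exactly the prioritized distortion claimed. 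Non-contractiveness of every tree in the support is immediate, since $d_T(x,y)\ge d(x,y)$ holds pointwise in the \cite{FRT} construction regardless of $\pi$.

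The main obstacle — and the step I would spend the most care on — is verifying that enough randomness survives the biasing so that the crucial probabilistic estimate "the $t$-th closest candidate point precedes $x_j$ in $\pi$ with probability $O(1/t)$" still holds. With a uniform permutation this is the clean fact that a uniformly random point among $t$ is first with probability $1/t$; after blocking, the candidate points and $x_j$ may lie in different blocks, so I must argue that (a) candidates in strictly earlier blocks than $x_j$'s are few — at most $O(2^{\lceil\log j\rceil}) = O(j)$ of them, absorbing into the sum's upper limit — and (b) candidates in the \emph{same} block as $x_j$ are ordered uniformly relative to $x_j$, so the $1/t$-type bound applies to them directly. One must also check that $\beta$ is still chosen with the same $1/r$ density so that the scale-selection part of the analysis is untouched, and that replacing "first point within radius" by "first point within radius among those appearing early" does not break the laminarity of the partition family (it does not: the partition at each scale is still defined by a genuine permutation $\pi$, just not a uniform one). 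Assembling these, together with \corollaryref{cor:single tree 1+eps}-style bookkeeping of the harmonic sum, gives the theorem; I would state the within-block uniformity lemma separately to keep the distortion computation clean.
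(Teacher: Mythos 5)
Your high-level strategy matches the paper exactly: bias the FRT permutation so that it is a concatenation of uniformly random permutations on blocks of increasing priority, observe that only centers from blocks no later than $x_j$'s block can ever settle a pair containing $x_j$, and then run the FRT charging argument block by block. However, there is a genuine quantitative gap in the block sizing, and it kills the claimed bound.

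You take geometric blocks $B_\ell=\{2^\ell,\dots,2^{\ell+1}-1\}$, so $|B_\ell|=2^\ell$. The FRT charging argument is intrinsically a \emph{within-block} harmonic sum: for $w_s$ to settle a pair it must come first in $\pi$ among the $s$ points of its \emph{own block} that are at least as close to the pair, and within-block uniformity gives probability exactly $1/s$ for that event. Across blocks there is no such bound --- a candidate in a strictly earlier block than $w_1,\dots,w_{s-1}$ precedes all of them with probability $1$, not $O(1/s)$. So the event you actually control is not ``the $t$-th closest candidate precedes $x_j$ with probability $O(1/t)$'' (which is also not the right event --- what matters is being first among the closer candidates, not merely preceding $x_j$), and your step ``(a) $\dots$ absorbing into the sum's upper limit'' does not repair this. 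Running the correct per-block analysis, the contribution of block $B_\ell$ to $\E[d_T(x_j,x_i)]$ is $O(\log|B_\ell|)\cdot d(x_j,x_i) = O(\ell)\cdot d(x_j,x_i)$, and summing over $\ell=0,\dots,a$ where $x_j\in B_a$ (so $a=\Theta(\log j)$) yields $\sum_{\ell=0}^a O(\ell)=O(a^2)=O(\log^2 j)$, a quadratic loss.

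The missing idea is the \emph{doubly-exponential} block sizing: the paper takes $K_0=\{x_1,x_2\}$ and $K_\ell=\{x_h: 2^{2^{\ell-1}}<h\le 2^{2^\ell}\}$, so that $\log|K_\ell|\le 2^\ell$ and the per-block contributions $\sum_{\ell\le a}\log|K_\ell|=\sum_{\ell\le a}2^\ell=O(2^a)$ form a geometric series dominated by the last term. Since $x_j\in K_a$ forces $2^{a-1}<\log j$, this gives $O(2^a)=O(\log j)$ as required. With geometric blocks the per-block logarithms grow linearly and the sum is quadratic. Everything else in your outline --- non-contractiveness, preservation of laminarity under a block-uniform permutation, the $1/r$ density on the radius, and the observation that later-block points can never settle a pair involving $x_j$ --- is correct and is the same as in the paper; you only need to replace $|B_\ell|=2^\ell$ by $|K_\ell|\approx 2^{2^\ell}$ and rerun your own per-block bookkeeping.
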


\begin{proof}
Let $x_1,\dots,x_n$ be the priority ranking of $X$, and let $\Delta$ be the diameter of $X$. We assume w.l.o.g that the minimal
distance in $X$ is $1$, and let $\delta$ be the minimal integer so that $\Delta\le 2^{\delta}$. We shall create a hierarchical laminar partition, where for each $i\in\{0,1,\dots,\delta\}$, the clusters of level $i$ have diameter at most $2^i$, and each of them is contained in some level $i+1$ cluster. The ultrametric is built in the natural manner, the root corresponds to the level $\delta$ cluster which is $X$, and each cluster in level $i$ corresponds to an inner node of the ultrametric with label $2^i$, whose children correspond to the level $i-1$ clusters contained in it. The leaves correspond to singletons, that is, to the elements of $X$. Clearly, the ultrametric will dominate $(X,d)$.

In order to define the partition, we choose a random permutation $\pi:X\to[n]$ which is strongly correlated with the priority ranking, and in addition we choose some number $\beta\in\left[1,2\right]$. Let $K_0=\{x_1,x_2\}$, and for any integer $1\le j\le\lceil\log\log n\rceil$ let $K_j=\{x_h~:~2^{2^{j-1}}<h\le 2^{2^j}\}$. The permutation $\pi$ is created by choosing a uniformly random permutation
on each $K_i$, and concatenating these. Note that $\pi^{-1}\left(\left\{h\in\N~:~h\in\left(2^{2^{j-1}},2^{2^{j}}\right]\right\}\right)=K_j$, and $\pi^{-1}(\{1,2\})=K_0$.

In each step $i$, we partition a cluster $S$ of level $i+1$ as follows. Each point $x\in S$ chooses the point $u\in X$ with minimal value according to $\pi$ among the points of distance at most $\beta_{i}:=\beta\cdot 2^{i-2}$
from $x$, and joins to the cluster of $u$. Note that a point may not
belong to the cluster associated with it, and some clusters may be empty (which we can discard). The description of the hierarchical partition appears in \algref{alg:mod-frt}.

\begin{algorithm}[h]
\caption{\texttt{Modified FRT}$(X,\pi)$}\label{alg:mod-frt}
\begin{algorithmic}[1]
\STATE Choose a random permutation $\pi:X\to[n]$ as above.
\STATE Choose $\beta\in\left[1,2\right]$ randomly by the distribution with the following probability density function $p\left(x\right)=\frac{1}{x\ln2}$.
\STATE Let $D_{\delta}=X$; $i\leftarrow\delta-1$.
\WHILE {$D_{i+1}$ has non-singleton clusters}
\STATE Set $\beta_{i}\leftarrow\beta\cdot 2^{i-2}$.
\FOR {$l=1,\dots,n$}
\FOR {every cluster $S$ in $D_{i+1}$}
\STATE Create a new cluster in $D_i$, consisting of all unassigned points in $S$ closer than $\beta_{i}$ to $\pi\left(l\right)$.
\ENDFOR
\ENDFOR
\STATE $i\leftarrow i-1$.
\ENDWHILE
\end{algorithmic}
\end{algorithm}

Let $T$ denote the ultrametric created by the hierarchical partition of \algref{alg:mod-frt},
and $d_{T}\left(u,v\right)$ the distance between $u$ to $v$ in
$T$. Consider the
clustering step at some level $i$, where clusters in $D_{i+1}$ are picked for partitioning. In each iteration $l$, all unassigned
points $z$ such that $d\left(z,\pi(l)\right)\le\beta_{i}$,
assign themselves to the cluster of $\pi(l)$.
Fix an arbitrary pair $\left\{ v,u\right\} $. We say that center $w$ {\em settles} the pair
$\left\{ v,u\right\} $ at level $i$, if it is the first center so that at least one of $u$ and $v$ gets assigned to its cluster. Note that exactly
one center $w$ settles any pair $\left\{ v,u\right\}$ at any particular
level. Further, we say that a center $w$ {\em cuts} the pair $\left\{ v,u\right\} $
at level $i$, if it settles them at this level,
and exactly one of $u$ and $v$ is assigned to the cluster of $w$ at level $i$.
Whenever $w$ cuts a pair $\left\{ v,u\right\} $ at level $i$, $d_{T}\left(v,u\right)$
is set to be $2^{i+1}\le 8\beta_i$. We blame this length to the
point $w$ and define $d_T^w\left(v,u\right)$ to be $\sum_{i}\mathbf{1}\left(w\mbox{ cuts \ensuremath{\left\{ v,u\right\} } at level \ensuremath{i}}\right)\cdot 8\beta_i$
(where $\mathbf{1}\left(\cdot\right)$ denotes an indicator function). We also define $d_{T}^{K_{j}}\left(v,u\right)=\sum_{w\in K_{j}}d_{T}^{w}\left(v,u\right)$.  Clearly,
$d_{T}\left(v,u\right)\le\sum_{j}d_{T}^{K_{j}}\left(v,u\right)$.

Fix some $0\le j\le\left\lceil \log\log n\right\rceil$, our next goal is to bound the expected value of $d_{T}^{K_{j}}(v,u)$ by $O\left(\log\left(\left|K_{j}\right|\right)\right)$. We arrange the points of $K_j$ in non-decreasing order of their distance
from the pair $\{v,u\}$ (breaking ties arbitrarily).
Consider the $s$th point $w_{s}$
in this sequence. W.l.o.g assume that $d\left(w_{s},v\right)\le d\left(w_{s},u\right)$.
For a center $w_{s}$ to cut $\left\{ v,u\right\} $, it must be the
case that:
\begin{enumerate}
\item $d\left(w_{s},v\right)\le\beta_{i}<d\left(w_{s},u\right)$
for some $i$.
\item $w_{s}$ settles $\{v,u\}$ at level $i$.
\end{enumerate}
Note that for each $x\in[d\left(w_{s},v\right),d\left(w_{s},u\right))$, the probability that $\beta_i\in[x,x+dx)$ is at most $\frac{dx}{x\cdot\ln 2}$. Conditioning on $\beta_i$ taking such a value $x$, any one of $w_1,\dots ,w_s$ can settle $\{v,u\}$. The probability that $w_s$ is first in the permutation $\pi$ among $w_1,\dots w_s$ is $\frac{1}{s}$. (In fact, there may be points from $\bigcup_{0\le r<j}K_{r}$ that settle $\{v,u\}$ before $w_s$. It is safe to ignore that, as it can only decrease the probability that $w_s$ cuts $\{v,u\}$.) Thus we obtain,
\begin{equation}\label{eq:w_s}
\E[d_T^{w_s}(v,u)]\le \int_{d(w_{s},v)}^{d(w_{s},u)}8x\cdot\frac{dx}{x\ln 2}\cdot\frac{1}{s}
=\frac{8}{s\cdot\ln 2}(d(w_{s},u)-d(w_{s},v))
\le\frac{16}{s}\cdot d(v,u)~.
\end{equation}
Hence we conclude,
\begin{equation}\label{eq:K_j}
\E[d_{T}^{K_{j}}(v,u)]\le\sum_{w_{s}\in K_{j}}\E[d_T^{w_s}(v,u)]\stackrel{\eqref{eq:w_s}}{\le}16d(v,u)\sum_{s=1}^{|K_{j}|}\frac{1}{s}=\log |K_{j}|\cdot O(d(v,u))~.
\end{equation}

Assume $v=x_h$ is the $h$-th vertex in the priority ranking for some $h>2$. Let $a$ be the integer such that $v\in K_a$, and recall that $2^{2^{a-1}}<h\le 2^{2^{a}}$, i.e., $2^a\le 2\log h$. The crucial observation is that if $y\in K_b$ such that $b>a$, then $y$ cannot settle $\{v,u\}$. The reason is that $v$ always appears before $y$ in $\pi$, so $v$ will surely be assigned to a cluster when it is the turn of $y$ to create a cluster. This leads to the conclusion that for all $b>a$, $\E[d_{T}^{K_{b}}(v,u)]=0$. We conclude:
\begin{eqnarray*}
\E[d_{T}(v,u)]
&\le&\sum_{j=0}^{a}\E[d_{T}^{k_{j}}(v,u)]\\
&\stackrel{\eqref{eq:K_j}}{\le}&O(d(v,u))\sum_{j=0}^{a}\log\left|K_{j}\right|\\
&=&O(d(v,u))\sum_{j=0}^{a}\log\left(2^{2^{j}}\right)\\
&=&O(d(v,u))\sum_{j=0}^{a}2^{j}\\
&=&O(d(v,u))\cdot 2^{a}\\
&=&O(d(v,u))\cdot\log h~.
\end{eqnarray*}
When $h\in\{1,2\}$ we can take $a=0$, and thus obtain a bound of $O(d(v,u))$.
\end{proof}

\section{Distance Oracles with Prioritized Stretch}\label{sec:oracle}

In this section we consider distance oracles where the stretch scales with the priority of the vertices. See \sectionref{sec:preliminaries} for the basic definitions.
A classical result of \cite{TZ01} (with improved query time due to \cite{C14}), asserts that for any parameter $t\ge 1$ and any graph on $n$ vertices, there exists a $(2t-1)$-stretch distance oracle of space $O(t\cdot n^{1+1/t})$ with $O(1)$ query time. An additional important result of \cite{MN06} allows for very small space: their oracle has space $O(n^{1+1/t})$ with stretch $O(t)$, and $O(1)$ query time as well.

\subsection{Prioritized Stretch with Small Space}\label{sec:OracleBlackBox}

Our first result provides a range of distance oracles with prioritized stretch and extremely low space. They also exhibit a somewhat non-intuitive (although very good) dependence of the stretch on the priority of the vertices. The drawbacks of these oracles are that they cannot report the approximate paths in the graph between the queried vertices, and it is not clear that they can be distributed as a labeling scheme.

For the sake of brevity, denote by $\tau(j)=\left\lfloor\frac{\log n}{\log(n/j)}\right\rfloor$ (where $n$ is always the number of vertices).
For a function $f:\N\to\N$, define its iterative application $F:\N\to\N$ as follows: $F(0)=1$, and for integer $k\ge 1$ as $F(k)=f(F(k-1))$. That is, $F(k)$ is determined by iteratively applying $f$ for $k$ times starting at 1.
\begin{theorem}\label{thm:simple-oracle}
Let $G=(V,E)$ be a weighted graph on $n$ vertices. For any positive integer $T$, let $f:\N\to\R_+$ be any monotone increasing function such that $f(1)=2$ and $F(T)\ge\log n$. Then there exists a distance oracle that requires space $O\big(\sum_{k=1}^TF(k)\cdot n\big)$ and has prioritized stretch
\[
\min\left\{4f\left(\tau(j)\right)-5,\log n\right\}~.
\]
Alternatively, one may obtain a distance oracle with space $O(T\cdot n)$ and prioritized stretch
\[
\min\left\{O\left(f\left(\tau(j)\right)\right),\log n\right\}~.
\]
Both oracles have $O(1)$ query time.
\end{theorem}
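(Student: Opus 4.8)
\textbf{Proof plan for \theoremref{thm:simple-oracle}.}

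The plan is to combine the black-box reduction sketched in the introduction with a careful choice of a partition of $V$ driven by the function $F$. First I would establish the basic building block, which I'll state as an auxiliary lemma: given a subset $K\subseteq V$ of size $k$ and a parameter $t\ge 1$, one can build a data structure of size $O(t\cdot k^{1+1/t}+n)$ that answers any query $(u,v)\in K\times V$ with stretch $4t-3$ (or with stretch $O(t)$ and size $O(k^{1+1/t}+n)$ if we instead plug in the \cite{MN06} oracle in place of \cite{TZ01}). The construction is exactly the one described in the overview: run the chosen non-prioritized oracle on the complete graph on $K$ whose edge weights are the $G$-distances (this requires space depending only on $k$), and for every $u\in V\setminus K$ store a pointer to its nearest neighbor $k_u\in K$ together with $d_G(u,k_u)$ (space $O(n)$). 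On a query $(u,v)$ with $v\in K$, return $d_G(u,k_u)+\tilde d(k_u,v)$. Non-contractiveness is immediate from the triangle inequality, and for the upper bound one uses $d_G(k_u,v)\le d_G(k_u,u)+d_G(u,v)\le 2d_G(u,v)$ exactly as in the (commented-out) proof of \lemmaref{lem:spanner}, yielding stretch $1+2(2t-1)=4t-3$; when $u\in K$ as well the stretch is just $2t-1$. Query time is $O(1)$ since it is one table lookup plus one query to an $O(1)$-query-time oracle.

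Next I would define the partition. Set $K_k=\{x_j : F(k-1)<j\le F(k)\}$ for $k=1,\dots,T$, where recall $F(0)=1$ and $F(T)\ge\log n$; if $F(T)<n$, dump all remaining low-priority vertices into the last class (or simply put a single $2\lceil\log n\rceil-1$-stretch oracle of \cite{ADDJS93}/\cite{TZ01} with $t=\log n$ and $O(n)$ space on all of $V$ to cover the $\log n$ ceiling in the stretch bound). For each $k$ I would apply the auxiliary lemma to $K_k$ with parameter $t=t_k:=f(F(k-1))$. Since $|K_k|\le F(k)$ and $F(k)=f(F(k-1))=t_k$, the size of the $k$-th structure is $O(t_k\cdot F(k)^{1+1/t_k}+n)=O(F(k)\cdot F(k)^{1/F(k)}\cdot F(k)+n)$; using $m^{1/m}=O(1)$ this is $O(F(k)^2+n)$ — hmm, this is slightly worse than claimed, so in fact I should apply the lemma with $t=t_k$ but note $|K_k|^{1+1/t_k}\le F(k)^{1+1/F(k)}=O(F(k))$ and the leading factor $t$ in the \cite{TZ01} space bound is exactly $t_k=F(k)$, giving $O(F(k)\cdot F(k))$ — so to get the advertised $O(\sum_k F(k)\cdot n)$ bound I would instead use the \emph{second} variant of the lemma (the \cite{MN06}-based one) for the first, $O(T\cdot n)$-space oracle, and for the first oracle bound I'd set $t_k=f(F(k-1))$ and observe the space of structure $k$ is $O(t_k\cdot n + t_k\cdot F(k)^{1+1/t_k})$, which with $F(k)\le n$ and $F(k)^{1/t_k}=F(k)^{1/F(k)}=O(1)$ is $O(F(k)\cdot n)$; summing over $k$ gives $O(\sum_{k=1}^T F(k)\cdot n)$. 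The total oracle is the union of all $T$ structures plus the fallback $\log n$-stretch oracle.

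For the stretch guarantee, take a query pair $x_j,x_i$ with $j<i$, and let $k$ be the index with $x_j\in K_k$, i.e.\ $F(k-1)<j\le F(k)$. The $k$-th structure handles the pair (since $x_j\in K_k$ and $x_i\in V$) with stretch $4t_k-3=4f(F(k-1))-3$. It remains to check $F(k-1)\le j$ — hmm, I have $F(k-1)<j$, so $F(k-1)\le j-1<j$, hence $f(F(k-1))\le f(j)$ by monotonicity; but I want the bound in terms of $\tau(j)=\lfloor\log n/\log(n/j)\rfloor$, not $j$. The right way is to observe that $F$ grows fast enough that $k\le\tau(j)+O(1)$: since $F(k-1)<j$ and $F$ is obtained by iterating $f$ starting from $1$ with $f(1)=2$, a counting argument (each application at least doubles, so $F(k-1)\ge 2^{k-2}$ roughly, and more precisely one shows $F(\tau(j))\ge$ the relevant threshold) gives $F(k-1)<j\le n$ together with the defining property $F(T)\ge\log n$ to pin down $k-1\le\tau(j)$, whence the stretch is at most $4f(\tau(j))-5$ after absorbing constants (the $-5$ versus $-3$ comes from $f$ monotone and $F(k-1)$ being one step behind, together with the $t_k\ge 2$ normalization $f(1)=2$). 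Capping at $\log n$ using the fallback oracle finishes the first bound; the second bound follows identically using the $O(t)$-stretch \cite{MN06} oracle in the lemma, which gives stretch $O(f(\tau(j)))$ and the cleaner $O(T\cdot n)$ space. The main obstacle I anticipate is precisely this last translation between the iteration index $k$ and $\tau(j)$ — getting the constants in the $F(k-1)\le$ (threshold) $\Rightarrow k-1\le\tau(j)$ implication exactly right, and making sure the $f(1)=2$ and $F(T)\ge\log n$ hypotheses are used to handle the boundary cases $j$ close to $1$ and $j$ close to $n$; everything else is routine.
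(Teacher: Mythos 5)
Your auxiliary lemma is essentially the paper's \lemmaref{lem:part-oracle} (modulo a small arithmetic slip: $1+2(2t-1)=4t-1$, not $4t-3$), but the partition you build on top of it is wrong, and this is fatal to the stretch analysis. You set $K_k=\{x_j: F(k-1)<j\le F(k)\}$, so the union $K_1\cup\cdots\cup K_T$ only covers indices $j\le F(T)\approx\log n$. Any vertex $x_j$ with $\log n<j<n^{1-\epsilon}$ falls through to your fallback $\log n$-stretch oracle, whereas the theorem promises it stretch $O(f(\tau(j)))=O(f(1/\epsilon))=O_\epsilon(1)$. Even for the vertices you do cover the exponent is off: the defining property of $\tau(j)=\lfloor\log n/\log(n/j)\rfloor$ is $j\approx n^{1-1/\tau(j)}$, so the thresholds must be of the form $n^{1-1/m}$, not $F(k)$. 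Your ``counting argument'' that $k-1\le\tau(j)$ does not hold: for $j=n^{1/2}$ we have $\tau(j)=2$, but your $k$ would be $\Theta(\log j)=\Theta(\log n)$. The paper instead uses \emph{nested} sets $S_i=\{x_j : j\le n^{1-1/F(i)}\}$ and applies the lemma to $K=S_i$ with $t_i=F(i)-1$; the minimal $i$ with $x_j\in S_i$ then satisfies $F(i-1)\le\tau(j)$ by construction, and the stretch is $4t_i-1=4f(F(i-1))-5\le 4f(\tau(j))-5$.

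Your space accounting also only works because you settle for the inflated bound $t_k\cdot|K_k|^{1+1/t_k}\le F(k)\cdot n$; with the paper's nested sets the exponent is chosen precisely so that $|S_i|^{1+1/t_i}=n^{(1-1/F(i))(1+1/(F(i)-1))}=n$, giving $O(t_i\cdot|S_i|^{1+1/t_i}+n)=O(F(i)\cdot n)$ with no slack. You also use $t_k=f(F(k-1))=F(k)$ rather than $F(k)-1$, which is another off-by-one feeding into the $-5$ versus $-1$ discrepancy. The fix is to replace your disjoint partition by the nested one indexed by $n^{1-1/F(i)}$; once that is done the rest of your outline (lemma, fallback $\log n$ oracle via \cite{MN06}, second tradeoff by switching to the \cite{MN06}-based variant of the lemma) is correct and matches the paper.
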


\begin{restatable}{corollary}{simoracle}\label{cor:simple-oracle}
Any weighted graph $G=(V,E)$ on $n$ vertices admits distance oracles with the following possible tradeoffs between space and prioritized
stretch.

\noindent 1) Space $O(n\log^2n)$ and prioritized stretch
$\min\{4\tau(j)-1,\log n\}$;\\ 2) Space $O(n\log n)$ and prioritized stretch $\min\{8\tau(j)-5,\log n\}$;\\ 3) Space $O(n\log\log n)$ and
prioritized stretch $\min\{O\left(\tau(j)\right),\log n\}$;\\ 4) Space $O(n\log\log\log n)$ and prioritized stretch
$\min\{O\left(\tau(j)^2\right),\log n\}$;\\ 5) Space $O(n\log^*n)$ and prioritized stretch
$\min\{O(2^{\tau(j)}),\log n\}$.

\end{restatable}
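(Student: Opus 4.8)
The plan is to derive each of the five tradeoffs by instantiating \theoremref{thm:simple-oracle} with an appropriate monotone increasing function $f$ satisfying $f(1)=2$, together with an appropriate value of the parameter $T$, chosen so that the hypothesis $F(T)\ge\log n$ is met while the resulting space bound $O(n\sum_{k=1}^T F(k))$ (for the first variant) or $O(Tn)$ (for the second variant) matches the claim. Items~1 and~2 use the first variant, inheriting the stretch $\min\{4f(\tau(j))-5,\log n\}$; items~3, 4 and 5 use the second variant, inheriting the stretch $\min\{O(f(\tau(j))),\log n\}$ and the space $O(Tn)$. In every case the work reduces to computing the iterates $F(0)=1,F(1)=f(1),F(2),\dots$ of the chosen $f$ and estimating one short sum.

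For item~1 take $f(x)=x+1$; then $F(k)=k+1$, so $4f(\tau(j))-5=4\tau(j)-1$, and choosing $T=\lceil\log n\rceil$ gives $F(T)\ge\log n$ with $\sum_{k=1}^T F(k)=O(T^2)=O(\log^2 n)$, i.e.\ space $O(n\log^2 n)$. For item~2 take $f(x)=2x$; then $F(k)=2^k$, so $4f(\tau(j))-5=8\tau(j)-5$, and choosing $T=\lceil\log\log n\rceil$ gives $F(T)\ge\log n$ while the geometric sum $\sum_{k=1}^T 2^k=O(2^T)=O(\log n)$, i.e.\ space $O(n\log n)$. For item~3 use $f(x)=2x$ again but with the second variant and $T=\lceil\log\log n\rceil$: space $O(Tn)=O(n\log\log n)$ and stretch $\min\{O(\tau(j)),\log n\}$. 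For item~4 take $f(x)=\max\{x^2,2\}$ (so that $f(1)=2$ and $f$ is monotone); then $F(k)=2^{2^{k-1}}$ for $k\ge1$, which reaches $\log n$ once $2^{k-1}\ge\log\log n$, i.e.\ for $T=O(\log\log\log n)$, giving space $O(Tn)=O(n\log\log\log n)$ and stretch $\min\{O(\tau(j)^2),\log n\}$. For item~5 take $f(x)=2^x$ (so $f(1)=2$); then $F(k)$ is a tower of $2$'s of height $k$, which exceeds $\log n$ once $k=\log^*(\log n)+O(1)=\log^*n+O(1)$, so $T=O(\log^*n)$ works, giving space $O(Tn)=O(n\log^*n)$ and stretch $\min\{O(2^{\tau(j)}),\log n\}$.

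There is no genuine conceptual obstacle: the corollary is a direct specialization of \theoremref{thm:simple-oracle}. The points that require care are (i) ensuring $f(1)=2$ and monotonicity for every choice — which is the reason for the clamping in item~4; (ii) correctly identifying the growth rate of the iterates and hence the minimal $T$ with $F(T)\ge\log n$, where the doubly-exponential (item~4) and tower (item~5) growth must be recognized; and (iii) checking in item~2 that the geometric sum is $\Theta(\log n)$, so that the space is $O(n\log n)$ and not larger.
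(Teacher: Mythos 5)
Your proposal is correct and matches the paper's own proof essentially verbatim: the paper also instantiates \theoremref{thm:simple-oracle} with $f(k)=k+1$ and $T=\log n$ for item~1, $f(k)=2k$ and $T=\log\log n$ for items~2--3 (using the first and second assertion respectively), $f(1)=2$, $f(k)=k^2$ for $k\ge 2$ with $T=1+\log\log\log n$ for item~4 (your $\max\{x^2,2\}$ clamping is the same function on the integers), and $f(k)=2^k$ with $T=\log^*n-1$ for item~5. The iterate computations and sum estimates you give are exactly those in the paper.
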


Observe that the first two oracles have stretch 3 for all points of priority less than $\sqrt{n}$, and that in all of these oracles, for any fixed $\epsilon>0$, all vertices of priority at most $n^{1-\epsilon}$ have {\em constant stretch}.

\begin{proof}[Proof of Corollary \ref{cor:simple-oracle}]
All the tradeoffs follow by simple choices for $T$ and $f$, which are described in the next bullets.
\begin{itemize}
\item For the first tradeoff let $T=\log n$ (assume w.l.o.g this is an integer), and take the function $f(k)=k+1$, so that $F(k)=k+1$ as well for all $k$. Thus the space is indeed $O(n\cdot\sum_{k=1}^T(k+1))=O(n\log^2 n)$, and the prioritized stretch is $4\tau(j)-1$ by the first assertion of \theoremref{thm:simple-oracle}.

\item For the second tradeoff, using $T=\log\log n$, it suffices to take $f(k)=2k$, so that $F(k)=2^k$. The space is now $O(n\cdot\sum_{k=1}^T2^k)=O(n\log n)$ and the prioritized stretch is as promised applying the first assertion of \theoremref{thm:simple-oracle} again.

\item In the third tradeoff we use again $T=\log\log n$, $f(k)=2k$ and $F(k)=2^k$. This time using the second assertion of \theoremref{thm:simple-oracle}, the space is $O(n\log\log n)$, and the prioritized stretch is $O\left(\tau(j)\right)$.

\item In the fourth tradeoff we use $T=1+\log\log\log n$, and let $f(1)=2$ and for $k\ge 2$, $f(k)=k^2$. It implies that $F(k)=2^{2^{k-1}}$, so by the second assertion of \theoremref{thm:simple-oracle} the space is indeed $O(n\log\log\log n)$, and the prioritized stretch follow similarly.

\item The final tradeoff holds by taking $T=\log^*n-1$, and setting $f(k)=2^k$, so that $F(k)=\tower(k)$.\footnote{$\tower(k)$ is defined as $\tower(0)=1$ and $\tower(k)=2^{\tower(k-1)}$, so that $\tower(\log^*n)=n$.} The bounds on the space and the prioritized stretch follow as before.
\end{itemize}
\end{proof}

We now turn to proving the theorem, and start with the following lemma.
\begin{lemma}\label{lem:part-oracle}
For any $t\ge 1$, and any graph $G=(V,E)$ on $n$ vertices with a subset $K\subseteq V$ of size $|K|=k$, there exists a distance oracle which can answer in $O(1)$ time queries on every pair in $K\times V$ with either:
\begin{itemize}
\item Stretch $4t-1$, using space $O(t\cdot k^{1+1/t}+n)$.
\item Stretch $O(t)$, using space $O(k^{1+1/t}+n)$.
\end{itemize}

\end{lemma}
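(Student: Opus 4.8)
The plan is to build this oracle as a direct adaptation of the classical Thorup--Zwick construction (or of \cite{MN06} for the small-space variant), restricted so that only pairs in $K\times V$ need to be served. First I would recall the skeleton of the \cite{TZ01} oracle for $K$ viewed as a metric space $(K, d_G|_K)$: sample a decreasing chain of sets $K = A_0 \supseteq A_1 \supseteq \cdots \supseteq A_{t-1} \supseteq A_t = \emptyset$, where each level is obtained by retaining every element of the previous level independently with probability $k^{-1/t}$. Each $w\in K$ stores its pivots $p_0(w),\dots,p_{t-1}(w)$ (nearest element in each $A_i$) and its bunch, the set of $x\in K$ closer to $w$ than $w$'s $(i+1)$-st pivot when $w\in A_i\setminus A_{i+1}$. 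The expected size of each bunch is $O(t\cdot k^{1/t})$, so the whole structure occupies $O(t\cdot k^{1+1/t})$ words; the query between two elements of $K$ achieves stretch $2t-1$ in $O(t)$ time, and by \cite{C14,WN13} this can be made $O(1)$. For the second bullet I would instead invoke the \cite{MN06} oracle on $(K,d_G|_K)$, which gives stretch $O(t)$ with space $O(k^{1+1/t})$ and $O(1)$ query; everything else in the argument is the same.

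Next I would handle the ``extension to $V$'' step, exactly as in \lemmaref{lem:spanner} (which is in the commented-out section but whose idea is reused): for every vertex $u\in V\setminus K$, precompute and store $k_u\in K$, the nearest point of $K$ to $u$, together with the distance $d_G(u,k_u)$; this is $O(n)$ additional words and is computed by a single multi-source Dijkstra from $K$. The query for a pair $(v,u)$ with $v\in K$, $u\in V$: if $u\in K$ answer with the internal $K$-oracle directly; otherwise return $\hat d(v,u) := d_G(u,k_u) + \hat d_K(k_u,v)$, where $\hat d_K$ is the estimate produced by the $K$-oracle.

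For correctness, non-contraction (i.e.\ $\hat d(v,u)\ge d_G(v,u)$) follows from the triangle inequality since $\hat d_K(k_u,v)\ge d_G(k_u,v)$, so $\hat d(v,u)\ge d_G(u,k_u)+d_G(k_u,v)\ge d_G(u,v)$. For the upper bound, the key observation (again as in \lemmaref{lem:spanner}) is that $k_u$ being the closest point of $K$ to $u$ gives $d_G(k_u,v)\le d_G(k_u,u)+d_G(u,v)\le d_G(u,u')+d_G(u,v)$ for any $u'\in K$; taking $u'=v$ we get $d_G(k_u,u)\le d_G(u,v)$ and hence $d_G(k_u,v)\le 2 d_G(u,v)$. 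Therefore, in the first case,
\[
\hat d(v,u)=d_G(u,k_u)+\hat d_K(k_u,v)\le d_G(u,v)+(2t-1)d_G(k_u,v)\le d_G(u,v)+(2t-1)\cdot 2 d_G(u,v)=(4t-1)d_G(u,v),
\]
and in the second case the same computation with stretch $O(t)$ in place of $2t-1$ yields stretch $O(t)$. The query time is $O(1)$ in both cases since it is one lookup, one internal query, and one addition.

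I do not expect a serious obstacle here; the only point needing mild care is that the \cite{TZ01}/\cite{MN06} oracles are stated for a metric, so one must be explicit that we feed them the metric $(K, d_G|_K)$ induced by shortest-path distances in $G$ (not the graph $G$ itself), and that space is counted in machine words of $O(\log n)$ bits, which is why storing $k_u$ and $d_G(u,k_u)$ costs $O(1)$ per vertex. A secondary subtlety, which I would mention but not belabor, is that the stored distances $d_G(u,k_u)$ are exact, so no rounding issues arise, and the $O(1)$ query time inherited from \cite{C14,WN13} is preserved because our wrapper adds only constant overhead.
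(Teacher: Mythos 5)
Your proposal is correct and follows essentially the same route as the paper: build the \cite{TZ01}/\cite{C14} (or \cite{MN06}) oracle on the induced metric over $K$, store for each $u\in V\setminus K$ its nearest $K$-point $k_u$ and $d_G(u,k_u)$, answer by $d_G(u,k_u)+\hat d_K(k_u,v)$, and bound the stretch via $d_G(k_u,v)\le 2d_G(u,v)$. The only cosmetic difference is that you explicitly branch on $u\in K$, whereas the paper treats $u\in V\setminus K$ and relies on the internal oracle for $u\in K$, but the argument and the bounds are identical.
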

\begin{proof}
For the first assertion, apply the distance oracle of \cite{C14} on the complete graph $G'=(K,E')$ with parameter $t$, where the weight of each edge in $E'$ is the shortest path distance in $G$ between its endpoints. This gives stretch $2t-1$ for any pair in $K\times K$ and requires space $O(t\cdot k^{1+1/t})$. For every vertex $u\in V\setminus K$, store only $d_G(u,K)$ and the name of the vertex $k_u\in K$ that manifests this distance (that is, $d_G(u,k_u)=d_G(u,K)$). We obtain a data structure of space $O(t\cdot k^{1+1/t}+n)$. To answer a distance query between $v\in K$ and $u\in V$, report $\tilde{d}(v,k_u)+d_G(k_u,u)$ where $\tilde{d}$ is the distance reported by the oracle of $G'$. It remains to bound the stretch: Observe that since $k_u$ is the closest vertex to $u$ in $K$, we have that $d_G(v,k_u)\le d_G(v,u)+d_G(k_u,u)\le 2d_G(u,v)$, and thus the reported distance is bounded as follows,
\[
\tilde{d}(v,k_u)+d_G(k_u,u)\le (2t-1)d_G(v,k_u)+d_G(u,v)\le (4t-1)d_G(u,v)~.
\]
Using the triangle inequality, the reported distance is never larger than the original,
\[
\tilde{d}(v,k_u)+d_G(k_u,u)\ge d_G(v,k_u)+d_G(k_u,u)\ge d_G(u,v)~.
\]

The second assertion follows by applying the oracle of \cite{MN06} rather than that of \cite{C14}, which yields stretch $O(t)$ on $K\times V$, and space $O(k^{1+1/t}+n)$, by a similar argument.
\end{proof}

We are finally ready to prove \theoremref{thm:simple-oracle}.
\begin{proof}[Proof of Theorem \ref{thm:simple-oracle}]
We begin with the first assertion of the theorem.
Let $x_1,\dots,x_n\in V$ the priority ranking of $V$. For each $i\in[T]$, let $S_i=\{x_j~:~ 1\le j\le n^{1-1/F(i)}\}$, and apply the first oracle of \lemmaref{lem:part-oracle} on $G$ with the set $S_i$ and parameter $t_i=F(i)-1$, let $O_i$ be the resulting oracle.\footnote{Since $F(0)=1$ and $f$ is strictly monotone, it follows that $F(i)\ge 2$ for all $i\ge 1$, so that $t_i\ge 1$.} Also invoke the oracle $O_{MN}$ of \cite{MN06} on $G$, that has stretch $\log n$ on all pairs using only $O(n)$ space (with $O(1)$ query time).

Observe that for each $i\in [T]$, the stretch $t_i$ was chosen so that $(1-1/F(i))\cdot(1+1/t_i) = 1$, so that the oracle $O_i$ has space
\[
O(t_i\cdot|S_i|^{1+1/t_i}+n) = O(F(i)\cdot n)~.
\]
The total space is thus $O\big(\sum_{i=1}^TF(i)\cdot n\big)$, as promised. It remains to prove the prioritized stretch guarantee. Fix any $v=x_j$, and let $i$ be the minimal such that $x_j\in S_i$ (observe that if $j>n/2$ there is not necessarily any such $i$). For $i=1$ the stretch guaranteed by $O_1$ is $4t_i-1=4(F(1)-1)-1=3$, as promised (recall that $f(k)\ge 2$ for all $k\ge 1$, so the required stretch is never smaller than 3). For $i>1$, by minimality of $i$ it follows that $j>n^{1-1/F(i-1)}$, that is, $F(i-1)\le\left\lfloor\frac{\log n}{\log(n/j)}\right\rfloor = \tau(j)$ (since $F(i-1)$ is an integer).  The stretch of $O_i$ for $v$ with any other point is at most
\[
4(F(i)-1)-1=4F(i)-5 = 4f(F(i-1))-5 \le 4f\left(\tau(j)\right)-5~,
\]
while the stretch of $O_{MN}$ is at most $\log n$ for all pairs, which handles the case no $i$ exists, and allows us to report the minimum of the two terms. The query time is clearly $O(1)$.

The proof of the second assertion is very similar, the only difference is using the second oracle given by \lemmaref{lem:part-oracle}. This implies oracle $O_i$ has space $O(n)$, and thus the total space is only $O(T\cdot n)$. Albeit the stretch of this oracle is worse by a constant factor.
\end{proof}


\subsection{Prioritized Distance Oracles with Bounded Prioritized Stretch}\label{sec:prior-oracles}

In this section we prove the following theorem, which prioritized the stretch of the distance oracle of \cite{TZ01}. Unlike the oracles of \theoremref{thm:simple-oracle}, this oracle can also support path queries, that is, return a path in the graph that achieves the required stretch, in time proportional to its length (plus the distance query time). Additionally, it can be distributed as a labeling scheme, which we exploit in the next section. Furthermore, this oracle matches the best known bounds for the worse-case stretch of \cite{TZ01}, which are conjectured to be optimal.

\begin{theorem}\label{thm:prior-stretch}
Let $G=(V,E)$ be a graph with $n$ vertices. Given a parameter $t\ge 1$, there exists a distance oracle of space $O(t n^{1+1/t})$ with prioritized stretch $2\lceil\frac{t\log j}{\log n}\rceil-1$ and query time $O(\lceil\frac{t\log j}{\log n}\rceil)$.
\end{theorem}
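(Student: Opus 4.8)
The approach is to revisit the Thorup--Zwick construction and make two modifications: first, a non-black-box change to the query algorithm that exploits how many times a vertex was sampled; second, a change to the sampling process that forces high-priority vertices to be sampled many times. Recall that the TZ oracle for stretch $2t-1$ builds a chain $V = A_0 \supseteq A_1 \supseteq \cdots \supseteq A_{t-1} \supseteq A_t = \emptyset$, where $A_i$ is obtained from $A_{i-1}$ by sampling each element independently with probability $n^{-1/t}$. For each vertex $v$ one stores its \emph{pivots} $p_i(v)$ (the closest element of $A_i$) and its \emph{bunch} $B(v) = \bigcup_i \{w \in A_{i-1}\setminus A_i : d_G(v,w) < d_G(v, A_i)\}$; the expected bunch size is $O(t n^{1/t})$, giving total space $O(t n^{1+1/t})$. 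The query algorithm for $(u,v)$ walks up the levels $i = 0,1,2,\dots$ alternating between $u$ and $v$, until it finds a level where the current pivot lies in the other vertex's bunch, incurring stretch $2t-1$ in the worst case.

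\textbf{Step 1: improved stretch from repeated sampling.} I would first prove the key observation: if a vertex $x$ happens to survive into $A_i$ (i.e., it was ``chosen $i$ times''), then for \emph{any} pair involving $x$, the query can be answered with stretch $2(t-i)-1$. The reason is that when $x \in A_i$, the standard analysis of the TZ query, started appropriately, only needs to climb the $t-i$ remaining levels $A_i \subseteq A_{i+1} \subseteq \cdots$: since $x$ is itself a level-$i$ center, $p_i(x) = x$, and one can run the alternating walk starting from level $i$ rather than level $0$, so it terminates within $t-i$ further steps. This gives stretch $2(t-i)-1$ and query time $O(t-i)$ for the pair $(x, y)$. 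At this stage we have only shown \emph{existence} of a good priority ranking (namely, rank vertices by how deep into the chain they survived), not a construction for an arbitrary prescribed ranking.

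\textbf{Step 2: forcing the ranking.} To handle an arbitrary given ranking $x_1, \dots, x_n$, I would modify the sampling: when building $A_i$ from $A_{i-1}$, in addition to the usual independent $n^{-1/t}$-sampling, \emph{deterministically} include every vertex $x_j$ with $j \le n^{i/t}$ (equivalently, $\lceil t \log j / \log n \rceil \le i$). Thus $x_j$ is guaranteed to lie in $A_{i}$ for $i = \lceil t\log j/\log n\rceil$, so by Step 1 the pair $(x_j, \cdot)$ gets stretch $2\lceil t\log j/\log n\rceil - 1$ and query time $O(\lceil t\log j/\log n\rceil)$, as claimed. The content of the proof is then the space bound: I must show that this forced inclusion at most doubles the expected bunch sizes. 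The point is that $|A_i|$ now equals its old expected value $n^{1-i/t}$ plus the $n^{i/t}$ forced vertices --- but $n^{i/t} \le n^{1-i/t}$ exactly when $i \le t/2$, and for $i > t/2$ the forced set has size $n^{i/t}$ which can be larger; however one observes that the \emph{number} of forced vertices at level $i$ is $n^{i/t}$, and the expected number of vertices whose bunch contains a level-$(i{-}1)$ element is governed by $1/\Pr[\text{survive to } A_i]$, which behaves well; summing the forced contributions across levels telescopes geometrically, contributing only an extra $O(n^{1+1/t})$ overall. One also must recheck that $p_i(v)$ and $B(v)$ are still well-defined and that the query-correctness argument (the distance telescoping inequality $d(u, p_{i+1}) \le d(u,p_i) + d(u,v)$-type bounds) is unaffected by extra centers --- it is monotone in $A_i$, so adding centers only helps correctness.

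\textbf{Main obstacle.} The delicate part is the space analysis in Step 2: verifying that deterministically injecting high-priority vertices into the upper levels of the chain does not blow up the bunches. One has to argue carefully that although $|A_i|$ may grow, the bunch $B(v)$ --- defined via the \emph{distance} to $A_i$, which shrinks as $A_i$ grows --- does not grow in expectation by more than a constant factor, and that the forced vertices contribute their own bunch entries in a controlled, geometrically-summable way. I would handle this by bounding $\E[|B(v)|]$ level by level: at level $i$, a forced vertex $x_j \in A_{i-1}$ is in $B(v)$ only if it is closer to $v$ than $p_i(v)$, and charging these against the $n^{i/t}$-sized forced population with the geometric decay in $i$ gives the claimed factor-$2$ loss. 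The rest (path-reporting, which uses the standard TZ technique of storing, with each bunch element, the next hop on the shortest path tree to it) is routine and I would only remark on it.
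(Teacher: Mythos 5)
Your high-level plan matches the paper's proof exactly: exploit the observation that a vertex surviving to $A_i$ can have the TZ query started at level $i$ (giving stretch $2(t-i)-1$), then force high-priority vertices deep into the chain, and argue the forced inclusions only cost a constant factor in space. However, the concrete forcing criterion in Step~2 is wrong in a way that makes the construction fail.

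You set the forced set at level $i$ to be $\{x_j : j \le n^{i/t}\}$, which \emph{grows} with $i$. This has three problems. First, since the forced set at level $i$ is not contained in the forced set at level $i-1$, the nesting $A_0 \supseteq A_1 \supseteq \cdots \supseteq A_{t-1}$ is broken, and the entire TZ pivot/bunch machinery relies on it. Second, the arithmetic does not match: by your Step~1, placing $x_j$ into $A_i$ for $i = \lceil t\log j/\log n\rceil$ gives stretch $2\bigl(t - \lceil t\log j/\log n\rceil\bigr) - 1$, not the claimed $2\lceil t\log j/\log n\rceil - 1$; to get the claimed stretch you need $x_j$ at level $i = t - \lceil t\log j/\log n\rceil$, i.e., the forced set at level $i$ must be $\{x_j : j \le n^{1-i/t}\}$, which \emph{shrinks} with $i$ (this is the paper's $S_i$). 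Third, the space blows up: with your criterion, $A_{t-1}$ contains all $x_j$ with $j \le n^{(t-1)/t} = n^{1-1/t}$, and since $B_{t-1}(u) = A_{t-1}$ for every $u$ (as $A_t = \emptyset$), every bunch has size at least $n^{1-1/t}$, giving total space $\Omega(n^{2-1/t})$, not $O(t n^{1+1/t})$. Your hand-waved telescoping argument cannot rescue this.

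With the corrected (shrinking) forced sets $S_i = \{x_j : j \le n^{1-i/t}\}$ the nesting holds and everything lines up, but the space analysis still needs care. The paper halves the sampling probability to $n^{-1/t}/2$ and then argues two things: for levels $i < t-1$, $|B_i(u)|$ is stochastically dominated by a geometric random variable with parameter $n^{-1/t}/2$, because every vertex of $A_i$ is either in $S_{i+1}$ (so certainly in $A_{i+1}$) or sampled with that probability, giving $\E[|B_i(u)|] \le 2n^{1/t}$; and for the last level, $\E[|A_{t-1}|]$ is bounded directly by summing, over $i$, the forced population $|S_i\setminus S_{i+1}| \le n^{1-i/t}$ weighted by its survival probability $(n^{-1/t}/2)^{t-1-i}$, yielding $\E[|A_{t-1}|] < 2n^{1/t}$. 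Your sketch gestures at the geometric dominance but never separates out the $A_{t-1}$ term, which is the delicate part precisely because it equals the whole bunch at the top level. These steps would need to be spelled out for the proof to go through.
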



\paragraph{Overview}

Recall that in the distance oracle construction of \cite{TZ01}, a sequence of sets $V=A_0\supseteq A_1\supseteq\dots\supseteq A_t=\emptyset$ is sampled randomly, by choosing each element of $A_{i-1}$ to be in $A_i$ with probability $n^{-1/t}$. We make the crucial observation that the distance oracle provides improved stretch of $2(t-i)-1$, rather than $2t-1$, to points in $A_i$.
However, as these sets are chosen randomly, they have no correlation with our given priority list over the vertices. We therefore alter the construction, to ensure that points with high priority will surely be chosen to $A_i$ for sufficiently large $i$.

\begin{proof}[Proof of Theorem \ref{thm:prior-stretch}]
Let $x_1,\dots,x_n\in V$ be the priority ranking of $V$.
For each $i\in\{0,1,\dots,t-1\}$ let $S_i=\{x_j~:~ 1\le j \le n^{1-i/t}\}$.
Let $A_0=V$, $A_t=\emptyset$, and for each $1\le i\le t-1$ define $A'_i$ by including every element of $A_{i-1}$ with probability $n^{-1/t}/2$, and let $A_i=A'_i\cup S_i$. For each $v\in V$ and $0\le i\le t-1$, define the $i$-th pivot $p_i(v)$ as the nearest point to $v$ in $A_i$, and $B_i(v)=\{w\in A_i~:~ d(v,w) < d(v,A_{i+1})\}$.\footnote{We assume that $d(v,\emptyset)=\infty$ (this is needed as $A_t=\emptyset$).} Also the {\em bunch} of $v$ is defined as $B(v)=\bigcup_{0\le i\le t-1}B_i(v)$. The distance oracle will store in a hash table, for each $v\in V$, all the distances to points in $B(v)$, and also the $p_i(v)$ vertices.

The query algorithm for the distance between $u,v$ is essentially the same as in \cite{TZ01}, with the main difference is that we start the process at level $i$ rather than level 0, for a specified value of $i$.
\begin{algorithm}
\caption{$\texttt{Dist}(v,u,i)$}\label{alg:query}
\begin{algorithmic}[1]
\STATE $w\leftarrow v$;
\WHILE {$w\notin B(u)$}
\STATE $i\leftarrow i+1$;
\STATE $(u,v)\leftarrow (v,u)$;
\STATE $w\leftarrow p_i(v)$;
\ENDWHILE
\RETURN $d(w,u)+d(w,v)$;
\end{algorithmic}
\end{algorithm}

\paragraph{Stretch.} Let $v=x_j$ be the $j$-th point in the ordering for some $j>1$, and fix any $u\in V$. (Observe that every vertex of
$A_{t-1}$ lies in all the bunches, so when considering $x_1\in A_{t-1}$, we have that $x_1\in B(u)$ and so \algref{alg:query} will return
the exact distance.) Let $0\le i\le t-1$ be the integer satisfying that $n^{1-(i+1)/t}<j\le n^{1-i/t}$, that is, the maximal $i$ such
that $v\in S_i$. By definition we have that $v\in A_i$ as well, so we may run $\texttt{Dist}(v,u,i)$. Assuming that all operations in
the hash table cost $O(1)$, the query time is $O(t-i)$. The stretch analysis is similar to \cite{TZ01}: let $u_k$, $v_k$ and $w_k$ be the values of $u$, $v$ and $w$ at the $k$-th iteration, it suffices to show that at every iteration in which the algorithm did not stop, $d(v_k,w_k)$ increases by at most $d(u,v)$. It suffices because there are at most $t-1-i$ iterations (since $w_{t-1}\in A_{t-1}$, it lies in all bunches), so if $\ell$ is the final iteration, it must be that $d(v_\ell,w_\ell)\le (\ell-i)\cdot d(u,v)$ (initially $d(w_i,v_i)=0$), and by the triangle inequality $d(w_\ell,u_\ell)\le d(u,v)+d(v_\ell,w_\ell)\le (\ell-i+1)\cdot d(u,v)$, and as $\ell\le t-1$ we conclude that
\[
d(w,u)+d(w,v) \le (2(t-i)-1)\cdot d(u,v)~.
\]
To see the increase by at most $d(u,v)$ at every iteration, we first note that $w_i=v_i\in A_i$ (this fact enables us to start at level $i$ rather than in level 0). In the $k$-th iteration, observe that as $w_k\notin B(u_k)$ but $w_k\in A_k$, it must be that $d(u_k,p_{k+1}(u_k))\le d(u_k,w_k)$. The algorithm sets $w_{k+1}=p_{k+1}(u_k)$, $v_{k+1}=u_k$ and $u_{k+1}=v_k$, so we get that
\[
d(v_{k+1},w_{k+1}) = d(u_k,p_{k+1}(u_k))\le d(u_k,w_k)\le d(u_k,v_k) + d(v_k,w_k)= d(u,v)+d(v_k,w_k)~.
\]

Note that as $n^{1-(i+1)/t}<j\le n^{1-i/t}$, it follows that $t-i-1< \frac{t\log j}{\log n}\le t-i$, so that $t-i=\lceil\frac{t\log j}{\log n}\rceil$. The guaranteed stretch for pairs containing $x_j$ is thus bounded by $2\lceil\frac{t\log j}{\log n}\rceil-1$ (or stretch 1 for $x_1$).

\paragraph{Space.}

Fix any $u\in V$, and let us analyze the expected size of $B(u)$. Fix any $0\le i\le t-2$, and consider $B_i(u)$. Assume we have already chosen the set $A_i$, and arrange the vertices of $A_i=\{a_1,\dots a_m\}$ in order of increasing distance to $u$. Note that if $a_r$ is the first vertex in the ordering to be in $A_{i+1}$, then $|B_i(u)|=r-1$. Every vertex of $A_i$ is either in $S_{i+1}$ and thus will surely be included in $A_{i+1}$, otherwise it has probability $n^{-1/t}/2$ to be in $A'_{i+1}$ and so in $A_{i+1}$ as well. The number of vertices that we see until the first success (being in $A_{i+1}$) is stochastically dominated by a geometric distribution with parameter $p=n^{-1/t}/2$, which has expectation of $2n^{1/t}$. For the last level $t-1$, note that each vertex in $S_i\setminus S_{i+1}$ has probability exactly $(n^{-1/t}/2)^{t-1-i}=n^{-1+(i+1)/t}/2^{t-1-i}$ to be included in $A_{t-1}$, independently of all other vertices. As $|S_i\setminus S_{i+1}|\le |S_i|= n^{1-i/t}$, the expected number of vertices in $A_{t-1}$ is
\begin{equation}\label{eq:last-bunch}
\sum_{i=0}^{t-1}n^{1-i/t}\cdot n^{-1+(i+1)/t}/2^{t-1-i} <2 n^{1/t}~.
\end{equation}
This implies that $\E[|B_{t-1}(u)|]\le 2n^{1/t}$ as well, and so $\E[|B(u)|]\le 2t\cdot n^{1/t}$. The total expected size of all bunches is therefore at most $2t\cdot n^{1+1/t}$.

\end{proof}

\section{Prioritized Distance Labeling}\label{sec:label}

In this section we discuss distance labeling schemes, in which every vertex receives a short label, and it should be possible to approximately compute the distance between any two vertices from their labels alone. The novelty here is that we would like "important" vertices, those that have high priority, to have both improved stretch and also short labels.

\subsection{Distance Labeling with Prioritized Stretch and Size}\label{sec:part-label}

We begin by showing that the stretch-prioritized oracle of \theoremref{thm:prior-stretch} can be made into a labeling scheme, with the same stretch guarantees, and small label for high ranking points.
The result has some dependence on $n$ in the label size, and it seems to be interesting particularly for large values of $t$. Indeed, we shall use this result with parameter $t=\log n$ in the sequel, to obtain fully prioritized label size which will be independent of $n$, and can support any desired maximum stretch. Furthermore, this result is the basis for our routing schemes with prioritized label size and stretch.

\begin{theorem}\label{thm:label}
For any graph $G=(V,E)$ with $n$ vertices and any $t\ge 1$, there exists a distance labeling scheme with prioritized stretch $2\lceil\frac{t\log j}{\log n}\rceil-1$ and prioritized label size $O(n^{1/t}\cdot \log j )$.
\end{theorem}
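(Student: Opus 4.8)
The plan is to follow the construction of \theoremref{thm:prior-stretch}, namely keep the same sampled sets $A_0 \supseteq A_1 \supseteq \dots$ with the forced subsets $S_i$, and to turn the resulting oracle into a labeling scheme in the natural Thorup--Zwick manner: the label of a vertex $v$ stores the names and distances of all vertices in its bunch $B(v)$, together with the pivots $p_0(v),\dots,p_{t-1}(v)$. Given labels of $x_j$ and $u$, one simulates \algref{alg:query} starting at level $i$ where $i$ is the maximal index with $x_j \in S_i$; this only requires reading the two labels, since at each step we need either a pivot (stored in the label) or to test membership $w \in B(u)$ and retrieve $d(w,u)$ (also stored in $u$'s label). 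This immediately gives the claimed prioritized stretch $2\lceil \frac{t\log j}{\log n}\rceil - 1$, inherited verbatim from \theoremref{thm:prior-stretch}.

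The real content is the label-size bound. For a vertex $x_j$, the label stores $t$ pivots (each a word) plus $|B(x_j)|$ entries, so I must show that with high probability $|B(x_j)| = O(n^{1/t}\cdot \log j)$ — as opposed to the $O(t\, n^{1/t}\log^{1-1/t} n)$ bound one would get from a naive concentration argument applied across all levels. The key is the level-by-level analysis already sketched in the proof of \theoremref{thm:prior-stretch}: for $0 \le i \le t-2$, $|B_i(u)|$ is stochastically dominated by a geometric random variable with parameter $p = n^{-1/t}/2$, and for level $t-1$ the size is a sum of independent indicators with mean $< 2n^{1/t}$ as in \eqref{eq:last-bunch}. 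A geometric variable with mean $2n^{1/t}$ exceeds $c\cdot n^{1/t}\log j$ with probability at most $j^{-\Omega(c)}$, and likewise a Chernoff bound on the level-$(t-1)$ sum gives deviation probability $j^{-\Omega(c)}$ once the deviation is a $\log j$ factor above the mean (here one should be a little careful and use the multiplicative Chernoff bound in the regime where the mean itself may be small, but the mean is $\Theta(n^{1/t})$ which is $\ge 1$, so this is fine). Taking a union bound over the $\le t \le \log n$ levels, $|B(x_j)| = O(n^{1/t}\log j)$ with probability at least $1 - j^{-\Omega(1)}$; summing the failure probabilities over all $n$ vertices (and tuning the constant $c$) shows that with constant probability \emph{every} $x_j$ simultaneously has a short label, so such a labeling exists.

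The point where I expect the argument to need genuine care — and this is exactly the remark the authors make in the overview — is replacing the $n$-dependent deviation bound of \cite{TZ01} by a $j$-dependent one. In \cite{TZ01} the $\log^{1-1/t}n$ factor arises because one wants \emph{all} $n$ bunches short and so needs failure probability $\ll 1/n$ at every level, forcing the threshold up to $\Theta(n^{1/t}\log^{1/t}n)$ per level and summing over $t$ levels. Here the insight is that the $j$-th vertex $x_j$ lies in $S_{i}$ for all $i$ up to roughly $t(1 - \frac{\log j}{\log n})$, i.e. it is \emph{forced} into all the high levels, so for $x_j$ the bunches $B_i(x_j)$ at those forced levels are empty (since $p_{i+1}(x_j) = x_j$ itself when $x_j \in A_{i+1}$, so $d(x_j, A_{i+1}) = 0$ and $B_i(x_j) = \emptyset$). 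Hence only $O(\frac{t\log j}{\log n})$ levels contribute at all to $B(x_j)$, and — more importantly — it suffices to control the deviation of $x_j$'s own bunch, for which failure probability $1/\poly(j)$ is enough. The technical work is to make the geometric- and Chernoff-tail estimates clean at threshold $\Theta(n^{1/t}\log j)$ (note the constant in front of $\log j$ must be chosen large enough to beat the union bound over levels and over all vertices of rank $\le j$), and to handle the corner cases $j = 1$ (label is just the pivots, of size $O(t) = O(\log n)$, which is $O(n^{1/t}\log j)$ only if we interpret $\log 1$ appropriately — so one states the bound as $O(n^{1/t}\cdot \max\{1,\log j\})$ or absorbs it) and small $t$, where $n^{1/t}$ already dominates. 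Once these estimates are in place the theorem follows.
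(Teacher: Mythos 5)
Your overall blueprint matches the paper's: the label is the Thorup--Zwick hash table plus pivots, the stretch is inherited verbatim, the crucial observation is that for $x_j \in S_i$ the bunches $B_0(x_j),\dots,B_{i-1}(x_j)$ are empty so only $\ell := t-1-i \le \lceil t\log j/\log n\rceil \le \log j$ levels contribute, and the goal is a $j$-dependent deviation bound so that a union bound over $j$ sums to a constant. The edge case $j=1$ and the top level $t-1$ (Markov on $|A_{t-1}|$) are also handled essentially as in the paper.

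However, there is a real gap in the concentration step. You bound \emph{each} geometric variable $X_k$ at threshold $c\,n^{1/t}\log j$ and then union-bound over the $\le \ell$ active levels. That certainly controls every $|B_k(x_j)|$, but it only bounds the \emph{sum} $|B(x_j)| = \sum_{k=i}^{t-2}|B_k(x_j)|$ by $\ell \cdot c\,n^{1/t}\log j$. In the regime this theorem most cares about ($t$ close to $\log n$, the case used in Corollary~\ref{cor:part-label} and Theorem~\ref{thm:full-prior-label}), $\ell$ can be $\Theta(\log j)$, so your argument yields $O(n^{1/t}\log^2 j)$, off by a $\log j$ factor. There is no way to rescue it by shrinking the per-level threshold to $O(n^{1/t}\log j / \ell) = O(n^{1/t})$: a geometric variable with mean $2n^{1/t}$ exceeds $O(n^{1/t})$ with constant probability, which is useless under a union bound over $\ell = \Theta(\log j)$ levels. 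The missing idea is to control the sum $\sum_{k=i}^{t-2}X_k$ \emph{directly}. The paper does this by the standard identity
\[
\Pr\Bigl[\textstyle\sum_{k=i}^{t-2}X_k > s\Bigr] \;=\; \Pr\bigl[\Bin(s,p) < t-1-i\bigr] \;\le\; \Pr\bigl[\Bin(s,p) < \log j\bigr],
\]
and then takes $s = 16\,n^{1/t}\log j$ so that $\E[\Bin(s,p)] = 8\log j$ and a single Chernoff bound gives failure probability $< j^{-3}$. This exploits that the sum of $\ell \le \log j$ geometrics has mean $O(n^{1/t}\log j)$ (not $\ell$ times that), so only a constant-factor deviation of the sum is needed, not a $\log j$-factor deviation of each term. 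Once you replace the per-level union bound by this single negative-binomial/Binomial tail estimate, the rest of your argument goes through.
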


\begin{proof}
Using the same notation as \sectionref{sec:oracle}, the label of vertex $v\in V$ consists of its hash table (which contains distances to all points in the bunch $B(v)$, and the identity of the pivots $p_i(v)$ for $0\le i\le t-1$). Note that \algref{alg:query} uses only this information to compute the approximate distance. The stretch guarantee is prioritized as above, and it remains to give an appropriate bound on the label sizes.

Let $x_1,\dots,x_n\in V$ be the priority ranking of $V$.
Fix a point $v=x_j$ for some $j>1$, and let $i$ be the maximal such that $v\in S_i$. Note that this implies that $t-i-1<\frac{t\log j}{\log n}$. Observe that $B_0(v)\cup\dots\cup B_{i-1}(v)=\emptyset$, so it remains to bound the size of $B_i(v),\dots,B_{t-1}(v)$. For the last set $B_{t-1}(v)=A_{t-1}$, let ${\cal E}$ be the event that $|A_{t-1}|\le 8n^{1/t}$. We already noted in \eqref{eq:last-bunch} that the expected size of $A_{t-1}$ is at most $2n^{1/t}$, thus using Markov, with probability at least $3/4$ event ${\cal E}$ holds.

For $i\le k\le t-2$, let $X_k$ be a random variable distributed geometrically with parameter $p=n^{-1/t}/2$, thus $\E[X_k]=2n^{1/t}$ for all $k$. We noted above that the distribution of $X_k$ is stochastically dominating the cardinality of $B_k(v)$, thus it suffices to bound $\sum_{k=i}^{t-2}X_k$. Observe that for any integer $s$, if $\sum_{k=i}^{t-2}X_k>s$ then it means that in a sequence of $s$ independent coin tosses with probability $p$ for heads, we have seen less than $t-1-i$ heads. That is, if $Z\sim\Bin(s,p)$ is a Binomial random variable then
\[
\Pr\left[\sum_{k=i}^{t-2}X_k>s\right] = \Pr[Z<t-1-i]\le \Pr\left[Z<\frac{t\log j}{\log n}\right]\le \Pr[Z<\log j]~.
\]
Take $s=16 n^{1/t}\cdot \log j$ (assume this is an integer), so that $\mu:=\E[Z]=8\log j$, and by a standard Chernoff bound
\[
\Pr[Z<\log j] = \Pr[Z<\mu/8]\le e^{-3\mu/8} < 1/j^3~.
\]
Let ${\cal F}=\left\{\exists~ 2\le j\le n~:~ \left|\bigcup_{k=0}^{t-2}B_k(x_j)\right|>16n^{1/t}\cdot \log j\right\}$, then by a union bound over all $2\le j\le n$ (note that the bound is non-uniform, and depends on $j$), we obtain that
\[
\Pr[{\cal F}]\le \sum_{j=2}^n\Pr\left[\left|\sum_{k=0}^{t-2}B_k(x_j)\right|>16n^{1/t}\cdot \log j\right]\le\sum_{j=2}^n 1/j^3 < 1/4~.
\]
We conclude that with probability at least $1/2$ both events ${\cal E}$ and $\bar{\cal F}$ hold, which means that the size of the bunch of each $x_j$ is bounded by $O(n^{1/t}\cdot \log j)$, as required. (Recall that $x_1\in A_{t-1}$, so its label size is $|A_{t-1}|\le 8n^{1/t}$ when event ${\cal E}$ holds.)

\end{proof}

\begin{corollary}\label{cor:part-label}
Any graph $G=(V,E)$ has a distance labeling scheme with prioritized stretch $2\lceil\log j\rceil - 1$ and prioritized label size $O(\log j)$.
\end{corollary}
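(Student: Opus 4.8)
The plan is to instantiate \theoremref{thm:label} with the parameter choice $t = \lceil \log n \rceil$, so that the generic bounds collapse to expressions depending only on $j$. First I would substitute $t = \lceil \log n \rceil$ into the prioritized stretch guarantee $2\lceil \frac{t\log j}{\log n}\rceil - 1$; since $t/\log n$ is within a factor $1+o(1)$ of $1$ and the ceiling absorbs the rounding, this becomes $2\lceil \log j\rceil - 1$ (one should be slightly careful that $\lceil \frac{\lceil \log n\rceil \log j}{\log n}\rceil = \lceil \log j\rceil$ for all $j \le n$, which holds because multiplying $\log j \le \log n$ by a number in $[1, 1+1/\log n]$ and then taking the ceiling cannot increase it beyond $\lceil \log j \rceil$ — the added slack is at most $\log j/\log n \le 1$, and it lands exactly right; this is the one elementary inequality worth checking).

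Next I would handle the label size: \theoremref{thm:label} gives $O(n^{1/t}\cdot \log j)$, and with $t = \lceil \log n\rceil$ we have $n^{1/t} = 2^{\log n / \lceil \log n \rceil} \le 2$, so the label size is $O(\log j)$, independent of $n$. Combining the two gives prioritized stretch $2\lceil \log j\rceil - 1$ and prioritized label size $O(\log j)$, which is exactly the claim.

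The main obstacle — really the only point requiring any care — is verifying that the ceiling arithmetic works out cleanly so that the "$t$" inside $2\lceil \frac{t\log j}{\log n}\rceil - 1$ does not contribute an extra additive term when $t$ is not an exact power or an exact integer multiple; but since $\lceil \log n\rceil / \log n \in [1, 1 + 1/\log n)$ and $\log j \le \log n$, the product $\frac{\lceil \log n\rceil \log j}{\log n}$ lies in $[\log j, \log j + 1)$ — wait, more precisely in $[\log j, \log j(1 + 1/\log n)) \subseteq [\log j, \log j + 1)$ — so its ceiling equals $\lceil \log j\rceil$ whenever $\log j$ is not itself extremely close to an integer from below in a way that the tiny slack pushes it over; a cleaner way to see it is that $\lceil \log j \rceil \ge \log j$ and $\lceil \log j \rceil$ is an integer $\ge \frac{\lceil \log n \rceil \log j}{\log n}$ iff $\lceil \log j\rceil \log n \ge \lceil \log n\rceil \log j$, i.e. $\lceil \log j\rceil / \log j \ge \lceil \log n \rceil / \log n$, which holds because $x \mapsto \lceil x\rceil / x$ is... not monotone, so instead I would simply observe $\lceil \log n\rceil \log j \le (\log n + 1)\log j = \log n \log j + \log j \le \log n \log j + \log n$, hence $\frac{\lceil \log n\rceil \log j}{\log n} \le \log j + 1$, and combined with the fact that it is $> \log j - 0$ this forces the ceiling to be either $\lceil \log j\rceil$ or $\lceil \log j\rceil + 1$; to rule out the larger value one uses that it is actually $\le \log j + \log j/\log n$ and $\log j/\log n < 1$, so it is strictly below $\log j + 1 \le \lceil \log j\rceil + 1$, giving ceiling $\le \lceil \log j\rceil$. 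Since it is also $\ge \log j$, the ceiling is exactly $\lceil \log j\rceil$, and the corollary follows. I would state this proof in two or three sentences without belaboring the inequality.
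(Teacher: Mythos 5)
Your approach---instantiate \theoremref{thm:label} with $t\approx\log n$---is the one the paper intends, and your label-size computation $n^{1/\lceil\log n\rceil}\le 2$ is correct. But the ceiling arithmetic for the stretch has a real error: from ``$x:=\frac{\lceil\log n\rceil\log j}{\log n}$ is strictly below $\lceil\log j\rceil+1$'' you conclude ``$\lceil x\rceil\le\lceil\log j\rceil$,'' which is an off-by-one slip. For a real number $x$ and integer $m$, the implication $x<m\Rightarrow\lceil x\rceil\le m$ is all one gets, not $\lceil x\rceil\le m-1$. A concrete counterexample to your claimed identity: take $n=5$, $j=4$, so $\log j=2$, $\lceil\log j\rceil=2$, $\lceil\log n\rceil=3$, and $x=6/\log 5\approx 2.58$ with $\lceil x\rceil=3\neq 2$. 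With your choice of $t$ the certified stretch for this pair is $5$, not the claimed $2\lceil\log j\rceil-1=3$.

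The repair is to round $t$ down rather than up: take $t=\lfloor\log n\rfloor$ (and dispose of $n=1$ trivially). Then $\frac{t\log j}{\log n}\le\log j$, so $2\left\lceil\frac{t\log j}{\log n}\right\rceil-1\le 2\lceil\log j\rceil-1$ with no further arithmetic to check, while $n^{1/t}=n^{1/\lfloor\log n\rfloor}\le 2^{\lceil\log n\rceil/\lfloor\log n\rfloor}\le 2^{1+1/\lfloor\log n\rfloor}\le 4$ for all $n\ge 2$, so the prioritized label size is still $O(\log j)$. Alternatively, pad the graph with dummy lowest-priority vertices so that $\log n$ is an integer (a normalization the paper uses implicitly elsewhere) and take $t=\log n$ exactly; either is a one-line change. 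The underlying point is that multiplicative slack strictly above $1$ inside the ceiling can push it up by one extra integer, whereas slack below $1$ is harmlessly absorbed---so the floor, not the ceiling, is the safe rounding for $t$.
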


\subsection{Distance Labeling with Prioritized Label Size}\label{sec:labelses}

In this section we construct a labeling scheme in which the maximum stretch is fixed for all points, and the label size is fully prioritized and independent of $n$.

\begin{restatable}{theorem}{priorlabel}\label{thm:full-prior-label}
For any graph $G=(V,E)$ and an integer $t \ge 1$, there exists a distance labeling scheme with stretch $2t-1$ and prioritized label size $O(j^{1/t}\cdot\log j)$.
\end{restatable}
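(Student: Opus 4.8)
The plan is to combine the prioritized labeling scheme of \theoremref{thm:label} (instantiated with $t = \log n$, which by \corollaryref{cor:part-label} gives prioritized stretch $2\lceil \log j\rceil - 1$ and prioritized label size $O(\log j)$) with an iterative application of a \emph{source-restricted} distance labeling. Recall that \cite{RTZ05} provide, for a designated set $K \subseteq V$ of sources, a labeling scheme of stretch $2t-1$ in which every label has size $O(|K|^{1/t})$ and queries are answered whenever at least one of the two queried vertices lies in $K$. The obstacle is that \theoremref{thm:label} has a label size with a genuine $\log j$ factor \emph{and} a residual dependence on $n$ hiding inside the $n^{1/t}$ term (with $t=\log n$ this becomes $O(1)$, which is why we choose that $t$) --- but on its own it only gives stretch $2\lceil\log j\rceil-1$, which is unbounded, whereas we want a fixed maximum stretch $2t-1$. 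So the high-priority points need to additionally store, via source-restricted labels, enough information to get stretch $2t-1$ against \emph{all} of $V$.

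First I would partition the vertices into geometrically growing prefixes of the priority ranking: let $K_\ell = \{x_j : 1 \le j \le 2^{2^\ell}\}$ for $\ell = 0, 1, \ldots, \lceil \log\log n\rceil$, so that $K_0 \subseteq K_1 \subseteq \cdots \subseteq K_{\lceil\log\log n\rceil} = V$, and $|K_\ell| = 2^{2^\ell}$. For each $\ell$, apply the source-restricted scheme of \cite{RTZ05} with source set $K_\ell$ and parameter $t$; this gives every vertex a sub-label of size $O(|K_\ell|^{1/t}) = O(2^{2^\ell/t})$, and guarantees stretch $2t-1$ for every pair with at least one endpoint in $K_\ell$. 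The full label of $x_j$ concatenates the sub-labels from all the schemes with $\ell \ge \ell(j)$, where $\ell(j)$ is the smallest index with $x_j \in K_{\ell(j)}$, i.e. $2^{2^{\ell(j)-1}} < j \le 2^{2^{\ell(j)}}$, so that $2^{\ell(j)} = \Theta(\log j)$. To answer a query between $x_j$ and $x_i$ with $j < i$: both vertices carry the $\ell$-th sub-label for every $\ell \ge \ell(i) \ge \ell(j)$, and since $x_j \in K_{\ell(i)}$, the source-restricted scheme at level $\ell(i)$ applies to the pair and returns a $(2t-1)$-approximation.

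The remaining step is the size bound. The label of $x_j$ has size $\sum_{\ell = \ell(j)}^{\lceil\log\log n\rceil} O(2^{2^\ell/t})$. The summand roughly doubles its exponent at each step, so the sum is dominated by its last term; but the last term involves $|K_{\lceil\log\log n\rceil}|^{1/t} = n^{1/t}$, which is not $O(j^{1/t})$ in general. The fix is the standard one used already in \corollaryref{cor:simple-oracle} and the proof of \theoremref{thm:prior-stretch}: instead of letting the coarsest level be the source-restricted scheme with $K = V$, replace it by the scheme of \corollaryref{cor:part-label} (equivalently \theoremref{thm:label} with $t = \log n$), which has label size $O(\log j)$ \emph{for $x_j$} and no dependence on $n$ --- and crucially provides stretch $2\lceil \log j\rceil - 1 \le 2t - 1$ once $j \le 2^t$, so it covers \emph{exactly} the low-priority regime that the source-restricted levels cannot afford. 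Concretely, for $j \le 2^t$ use only the \corollaryref{cor:part-label} label of size $O(\log j) = O(j^{1/t}\log j)$; for $j > 2^t$, cap the source-restricted levels at $\ell \le \lceil\log t\rceil$ (so the largest source set has size $2^{2^{\lceil\log t\rceil}} = O(2^{2t}) $\,---\,wait, here one takes $K$'s up to size roughly $j$ only) and append the \corollaryref{cor:part-label} label to handle pairs where the \emph{other} endpoint has priority above the cap. Then the size of $x_j$'s label is $\sum_{\ell=\ell(j)}^{O(\log\log j)} O(2^{2^\ell/t}) + O(\log j) = O(j^{1/t}) + O(\log j) = O(j^{1/t}\cdot \log j)$, since the geometric-type sum is dominated by its top term $2^{2^{O(\log\log j)}/t} = j^{O(1/t)}$, and a more careful choice of the prefix sizes (growing as $j^{c}$ rather than doubly-exponentially near the top) brings the constant in the exponent down to $1$. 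I expect the bookkeeping of exactly which prefixes to use, and verifying that every pair $(x_j,x_i)$ is covered by some level at stretch $\le 2t-1$ while keeping $x_j$'s label at $O(j^{1/t}\log j)$, to be the main obstacle; the stretch guarantees themselves are immediate from the cited black boxes.
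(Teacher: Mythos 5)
Your high-level plan is the paper's plan (Corollary~\ref{cor:part-label} for the top $2^t$ vertices plus iterated source-restricted labelings of \cite{RTZ05}), but the concrete assembly has a genuine gap: the coverage direction is inverted. You cover the pair $(x_j,x_i)$, $j<i$, at level $\ell(i)$ --- the level of the \emph{lower}-priority endpoint --- which forces $x_j$ to carry sub-labels for every level up to the top one with $|K|=n$, i.e.\ a label of size $\Omega(n^{1/t})$ for \emph{every} vertex. Your attempted patches do not repair this: capping the source-restricted levels at $\ell\le\lceil\log t\rceil$ leaves every pair whose higher-priority endpoint has rank above $2^{2^{\lceil\log t\rceil}}\approx 4^t$ uncovered by any source-restricted level, and Corollary~\ref{cor:part-label} only gives such a pair stretch $2\lceil\log j\rceil-1>2t-1$, violating the fixed stretch bound. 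Even if the coverage direction is fixed (cover each pair at the level of its higher-priority endpoint, and have each vertex store sub-labels only for levels containing at least as high-priority sources as itself), your doubly-exponential prefix sizes $|K_\ell|=2^{2^\ell}$ only yield $|K_{\ell(j)}|\le j^2$, hence label size $O(j^{2/t})$ rather than $O(j^{1/t})$; you flag this but leave the "more careful choice of prefix sizes" unspecified, and that choice is exactly where the exponent is won or lost.

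For comparison, the paper uses \emph{disjoint bands} $S_1=\{x_j: j\le 2^t\}$ and $S_i=\{x_j: 2^{(i-1)t}<j\le 2^{it}\}$ --- sizes geometric in $2^t$, not doubly exponential --- runs the source-restricted scheme with source set $S_i$, and stores its sub-labels only at vertices of $S_i\cup\dots\cup S_m$. A pair with higher-priority endpoint in $S_i$ is then answered at level $i$, and a vertex $x_j\in S_i$ stores sub-labels only for levels $1,\dots,i$, of sizes $O(2^k\log j)$ summing to $O(2^i\log j)=O(j^{1/t}\log j)$ since $2^i<2j^{1/t}$. One further point: the bound $O(|K|^{1/t})$ per label that you attribute to \cite{RTZ05} holds only in expectation; to get a per-vertex guarantee one needs the prioritized concentration argument of Theorem~\ref{thm:label}, which is why the paper's Theorem~\ref{thm:rtz} states prioritized label size $O(|S|^{1/t}\cdot\log j)$. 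This is absorbed by the $\log j$ in the final bound, but it should appear in the accounting.
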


\paragraph{Proof Overview.} The idea is to partition the vertices into $m:=\lceil\frac{\log n}{t}\rceil$ sets $S_1,\dots,S_m$, and to apply the result of \sectionref{sec:part-label} in conjunction with a variation of the {\em source-restricted distance oracles} of \cite{RTZ05}, using a labeling scheme rather than an oracle. In a source restricted labeling scheme on $X$ with a subset $S\subseteq X$, only distances between pairs in $S\times X$ can be queried. Replacing the source restricted oracle with a labeling scheme, demands that we use an analysis similar to \sectionref{sec:part-label} to guarantee a prioritized bound on the label sizes. We will apply this for each $i\in\{2,3,\dots,m\}$ with $X=S_i\cup\dots\cup S_m$ and the subset $S_i$. Thus an element of $S_i$ will have a label which consists of $i$ schemes, and we will guarantee that their sizes form a geometric progression, so that the total label size is sufficiently small.

As it turns out, the construction of \cite{RTZ05} is inadequate for the first $2^t$ elements $S_1$, which have very strict requirement on their label size. We will use the construction of \sectionref{sec:part-label} to handle distances involving the elements in $S_1$. Fortunately, the stretch incurred by this construction is $2\lceil\log j\rceil-1$ which is bounded by $2t-1$ for the first $2^t$ elements in the ranking. We begin by stating the source-restricted distance labeling, based on \cite{RTZ05}.

\begin{theorem}\label{thm:rtz}
For any integer $t\ge 1$, any graph $G=(V,E)$ and a subset $S\subseteq V$, there exists a source-restricted distance labeling scheme with stretch $2t-1$ and prioritized label size $O(|S|^{1/t}\cdot\log j)$.
\end{theorem}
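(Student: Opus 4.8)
The plan is to lift the construction and analysis behind \theoremref{thm:prior-stretch} and \theoremref{thm:label} to the source-restricted setting of \cite{RTZ05}, the only structural change being that all random sampling is confined to $S$, so that every occurrence of $n^{1/t}$ becomes $|S|^{1/t}$. Write $s=|S|$ and let $y_1,\dots,y_s$ be the priority ranking of $S$ induced by that of $V$ (so $y_h=x_j$ with $j\ge h$). Set $A_0=S$, $A_t=\emptyset$, and for $1\le i\le t-1$ let $\hat S_i=\{y_h:1\le h\le s^{1-i/t}\}$, form $A'_i$ by putting each element of $A_{i-1}$ in it independently with probability $s^{-1/t}/2$, and set $A_i=A'_i\cup\hat S_i$. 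Define pivots $p_i(v)$ as the nearest point of $A_i$ to $v$, bunches $B_i(v)=\{w\in A_i:d(v,w)<d(v,A_{i+1})\}$ and $B(v)=\bigcup_{i=0}^{t-1}B_i(v)$; as usual $B_{t-1}(v)=A_{t-1}$. The label of $v$ stores, in a hash table, $d(v,w)$ for every $w\in B(v)$, together with the pivots $p_i(v)$ and distances $d(v,p_i(v))$ (only those with $p_i(v)\ne v$ need be recorded). We may assume $t\le\log s$, since for larger $t$ the guarantee $2t-1$ on the stretch is weaker than $2\log s-1$, already attained with the smaller parameter, whose label size $O(s^{1/\log s}\log j)=O(\log j)$ is dominated by $O(s^{1/t}\log j)$.

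For a query between $u\in S$ and $v\in V$ I would run the modified \cite{TZ01} query of \algref{alg:query} started at level $0$ with $w\leftarrow u$; this is legitimate since $u\in S=A_0$, so $w=p_0(u)=u\in A_0$, which is exactly the configuration the stretch argument of \theoremref{thm:prior-stretch} analyses with the starting level equal to $0$. The increment argument there shows that $d(v_k,w_k)$ grows by at most $d(u,v)$ per iteration, the loop stops within $t-1$ iterations because $w_{t-1}\in A_{t-1}$ lies in every bunch, and hence the returned value lies between $d(u,v)$ and $(2t-1)d(u,v)$. This gives stretch $2t-1$ for all pairs in $S\times V$, matching the non-prioritized guarantee of \cite{RTZ05}, and the query time is $O(t)$.

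The heart of the matter is the prioritized bound on $|B(x_j)|$, and here I would replay the argument of \theoremref{thm:label} with $n$ replaced by $s$. Let $\mathcal E$ be the event $|A_{t-1}|\le 8s^{1/t}$; the computation of \eqnref{eq:last-bunch} gives $\E|A_{t-1}|\le 2s^{1/t}$, so $\Pr[\mathcal E]\ge 3/4$ by Markov, and on $\mathcal E$ every top-level bunch $B_{t-1}(v)$ has size at most $8s^{1/t}$. For $v=y_h\in S$, let $i$ be maximal with $v\in\hat S_i$; then $v\in A_0,\dots,A_i$, so $d(v,A_k)=0$ and $B_{k-1}(v)=\emptyset$ for every $k\le i$, and maximality of $i$ gives $t-i-1<\tfrac{t\log h}{\log s}\le\log h\le\log j$ (using $t\le\log s$). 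Each of $B_i(v),\dots,B_{t-2}(v)$ is stochastically dominated by a geometric variable with parameter $s^{-1/t}/2$, so the probability that their sum exceeds $16s^{1/t}\log j$ is at most the probability that a $\Bin(16s^{1/t}\log j,\,s^{-1/t}/2)$ variable falls below $t-i-1\le\log j$, which by the same Chernoff estimate used in \theoremref{thm:label} is smaller than $1/j^3$. A union bound over $j$, intersected with $\mathcal E$, leaves a positive-probability event on which $|B(x_j)|=O(s^{1/t}\log j)$ for all $x_j\in S$ simultaneously, and the surviving pivots add only a further $O(t-i)=O(\log j)$ words; fixing such an instance yields the claimed labeling for the elements of $S$.

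The one point not handled by \theoremref{thm:label} is a vertex $v\notin S$: it cannot be forced into the hierarchy, which lives inside $S$, so all $t$ levels $B_0(v),\dots,B_{t-1}(v)$ may be nonempty and the natural bound is only $|B(v)|=O(t\cdot s^{1/t})$ (with the same $1/j^3$-type tail, now via the Chernoff bound with $t$ levels, together with $\mathcal E$ for the top level). This equals $O(s^{1/t}\log j)$ precisely in the regime $t\le\log j$, which is exactly how the theorem is used: in \theoremref{thm:full-prior-label} it is invoked with a fixed $t$ on ground sets of size at least $2^{t}$ every element of which has rank $j\ge 2^{t}$, the first $2^t$ points being served instead by the scheme of \sectionref{sec:part-label}. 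I therefore expect the main obstacle to be neither the stretch (routine from \theoremref{thm:prior-stretch}) nor the concentration (routine from \theoremref{thm:label}), but rather isolating this $v\notin S$ case cleanly and pinning down that the claimed label bound for such points is meant under $t\le\log j$; everything else is a faithful transcription of the two earlier proofs with $n$ replaced by $|S|$.
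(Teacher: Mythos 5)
Your proposal is correct and follows essentially the same route as the paper, which itself only sketches this proof (``sample the sets only from $S$ with probability $|S|^{-1/t}$, then reuse the concentration analysis of \theoremref{thm:label}; details left to the reader''); your writeup is precisely that sketch carried out in full, including the forcing of high-ranked elements of $S$ into the $A_i$'s that \theoremref{thm:label}'s argument implicitly requires. Your closing observation is also accurate: for a vertex $v\notin S$ (or more generally when $t\gg\log j$) the bunch size is only $O(t\cdot|S|^{1/t})$, so the stated bound should be read in the regime $t\le O(\log j)$, which is exactly how \theoremref{thm:full-prior-label} invokes the theorem.
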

\begin{proof}
The observation made in \cite{RTZ05} is that to obtain a source-restricted distance oracle, it suffices to sample the random sets $S=A_0\supseteq A_1\supseteq\dots\supseteq A_t=\emptyset$ only from $S$, where each element of $A_{i-1}$ is included in $A_i$ independently with probability $|S|^{-1/t}$. They show that defining the bunches as in \cite{TZ01}, the resulting stretch is $2t-1$ for all pairs in $S\times V$. We shall use a similar analysis as in \theoremref{thm:label} to argue that this can be made into a labeling scheme. The expected label size is $O(|S|^{1/t})$, and we can show that with constant probability, every point $x_j$ pays only an additional factor of $O(\log j)$. As the proof is very similar, we leave the details to the reader.
\end{proof}
\begin{proof}[Proof of Theorem \ref{thm:full-prior-label}]
Let $S_1=\{x_j~:~1\le j\le 2^t\}$, and for each $i\in\{2,3,\dots,m\}$ let $S_i=\{x_j~:~ 2^{(i-1)t} < j \le 2^{it}\}$. We have a separate construction for $i=1$ and for $i>1$. For the case $i=1$, use the labeling scheme of \corollaryref{cor:part-label} on $G=(V,E)$. For each $2\le i\le m$, apply \theoremref{thm:rtz} on $G$ and the subset $S_i$, but append the resulting labels only for vertices in $S_i\cup\dots\cup S_m$.

Fix any $u,v\in V$, and w.l.o.g assume that $v\in S_i$ has higher rank than $u$. This suggests that $u\in S_i\cup\dots\cup S_m$, thus the source restricted labeling scheme for $S_i$ guarantee stretch at most $2t-1$ for the pair $u,v$ (and $u$ indeed stored the appropriate label). Note that in the case of $v=x_j\in S_1$, the stretch can be improved to $2\lceil\log j\rceil-1$ (recall that $\log j\le t$).

We now turn to bounding the label sizes. First consider $v=x_j\in S_1$, then it must be that $j\le 2^t$. The label size of $v$ is by \corollaryref{cor:part-label} at most $O(\log j)$, and this is the final label of $v$. For $v=x_j\in S_i$ when $i\ge 2$, the label of $v$ consists of labels created for the sets $S_1,\dots,S_i$. Notice that $2^{t(i-1)}<j\le 2^{ti}$, so it holds that $2^i= (2^t\cdot 2^{t(i-1)})^{1/t}< 2j^{1/t}$. By \corollaryref{cor:part-label} the label due to $S_1$ is at most $O(\log j)$, and using \theoremref{thm:rtz} the label size of $v$ is at most
\[
O(\log j)+\sum_{k=2}^i O( |S_k|^{1/t}\cdot\log j) = O(\log j)\cdot\sum_{k=1}^i 2^k = O(2^i\cdot\log j) = O(j^{1/t}\cdot\log j)~.
\]

\end{proof}

\subsection{Prioritized Distance Labeling for Graphs with Bounded Separators}\label{sec:exact-label}

\subsubsection{Exact Labeling with Prioritized Size}
In this section we exhibit prioritized exact distance labeling scheme tailored for graphs that admit a small separator. We say that a graph $G=(V,E)$ admit an $s$-separator, if for any weight function $w:V\to\R_+$, there exists a set $U\subseteq V$ of size $|U|=s$, such that each connected component $C$ of $G\setminus U$, has $w(C)\le 2w(V)/3$.\footnote{For a set $C\subseteq V$, its weight is defined as $w(C)=\sum_{u\in C}w(u)$.} It is well known that trees admit a 1-separator, and graphs of treewidth $k$ admit a $k$-separator.

The basic idea for constructing an exact distance labeling scheme based on separators, is to create a hierarchical partition of the graph, each time by applying the separator on each connected component. Then the label of a vertex $u$ consists of all distances to all the vertices in the separators of clusters that contain $u$. To answer a query between vertices $u,v$, we return the minimum of $d(u,s)+d(v,s)$ for all separator vertices $s$ that $u,v$ have in common in their labels (this is the exact distance, because at some point a vertex on the shortest path from $u$ to $v$ must be chosen to be in a separator). Since at every iteration the number of vertices in each cluster drops by at least a constant factor, after $O(\log n)$ levels the process is complete, thus the label size is at most $O(s\log n)$.

Our improved label size for vertices of high priority, will be based on the following observation: If the weight function $w$ is an indicator for a set $S\subseteq V$ (that is, if $u\in S$ then $w(u)=1$, and if $u\in V\setminus S$ then $w(u)=0$), then after $\lceil\log|S|\rceil+1$ iterations, all vertices of $S$ must have been removed from the graph.

\begin{theorem}\label{thm:label-separate}
Let $G=(V,E)$ be a graph admitting an $s$-separator, and let $V=(x_1,\dots,x_n)$ be a priority ranking of the vertices. Then there exists an exact distance labeling scheme with prioritized label size $O(s\cdot\log j)$.
\end{theorem}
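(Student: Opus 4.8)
The plan is to run a single hierarchical separator decomposition of $G$ (as in \cite{LT79}), but to dictate which separator is applied at each step through a sequence of \emph{phases} designed to evict high-priority vertices quickly. Set $S_0=\{x_1,x_2\}$ and, for $p\ge 1$, $S_p=\{x_h:2^{2^{p-1}}<h\le 2^{2^p}\}$; then $|S_p|<2^{2^p}$, the phases $p=0,1,\dots,\lceil\log\log n\rceil$ cover all of $V$, and for $x_j\in S_p$ with $j\ge 3$ we have $2^p<2\log j$. I would maintain a partition of the currently surviving vertices into \emph{pieces}, always equal to the connected components of the subgraph of $G$ induced by the survivors. In phase $p$, perform $O(\log|S_p|)=O(2^p)$ refinement rounds; in each round, for every current piece $C$ let $w$ be the indicator of $S_p\cap C$, and if $w\not\equiv 0$ apply the $s$-separator of the (connected) graph $G[C]$ with weight $w$ to get $U_C\subseteq C$ with $|U_C|\le s$; then append to the label of every vertex of $C$ its exact $G$-distance to each vertex of $U_C$, delete $U_C$, and replace $C$ by the connected components of $G[C\setminus U_C]$. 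Since the $S_p$-weight of every child piece is at most $2/3$ that of its parent, after $O(\log|S_p|)$ rounds every surviving vertex of $S_p$ is deleted and we proceed to phase $p+1$; in particular $x_j\in S_p$ is deleted by the end of phase $p$. (We use that admitting an $s$-separator is hereditary for the relevant classes — forests, bounded-treewidth graphs, minor-closed families — so every $G[C]$ admits one.) The query on $u,v$ returns $\min_w\bigl(d_G(u,w)+d_G(w,v)\bigr)$ over vertices $w$ present in both labels, and $\infty$ if there is none, a case detected by a component identifier stored in every label.

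For correctness I would invoke the classical argument, stressing that it is insensitive to the weights: it uses only that each applied separator is a genuine vertex cut of size $\le s$ of the piece it is applied to. Fix $u,v$ in one component of $G$ and a shortest $u$–$v$ path $P$ in $G$. The claim is that some separator $U$ applied to a piece $C$ containing both $u$ and $v$ meets $P$. If not, then since every vertex is eventually deleted there is a first round at which $u$ and $v$ lie in different pieces, or one of them enters a separator; let $C_m$ be their common piece just before that round and $U_m$ its separator. A short induction — using that up to that round every separator of $u,v$'s common piece misses $P$, so that common piece always contains $P$ — gives $P\subseteq C_m$; then $P$, a $u$–$v$ path inside $G[C_m]$, must meet $U_m$ (or $u$, $v$ itself lies in $U_m$ and hence on $P$), contradicting the assumption. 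Given such $U$ and $w\in U\cap P$: both $u$ and $v$ stored $d_G(\cdot,w)$ since $u,v\in C$ and $w\in U$, and because $w$ lies on a shortest path, $d_G(u,w)+d_G(w,v)=d_G(u,v)$; as every term of the query's minimum is at least $d_G(u,v)$ by the triangle inequality, the scheme outputs the exact distance.

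For the label sizes, a vertex $x_j\in S_p$ belongs to pieces only during phases $0,1,\dots,p$, since it is deleted by the end of phase $p$; at each round it lies in a single piece, contributing at most $s$ separator vertices; and phase $q$ has $O(2^q)$ rounds. Hence $x_j$'s label has size at most $\sum_{q=0}^p O(s\cdot 2^q)=O(s\cdot 2^p)=O(s\log j)$ (and $O(s)$ for $j\le 2$), as required. I expect the delicate point to be the interaction of the phase structure with correctness: \emph{freezing} zero-weight pieces within a phase is precisely what lets high-priority vertices leave early, and one must be sure this cannot prevent the decomposition from eventually cutting every relevant pair — which is exactly what the argument above secures, since every vertex is in fact deleted during its own phase.
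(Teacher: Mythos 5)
Your proposal is correct and follows essentially the same route as the paper: the same doubly-exponential priority blocks $S_p$, the same phase structure applying indicator-weighted $s$-separators so that each $x_j\in S_p$ is evicted within $O(2^p)=O(\log j)$ rounds of its own phase, and the same geometric summation of the label size. The extra details you supply (the explicit correctness argument that some separator of a common piece must meet a shortest path, and the observation that freezing zero-weight pieces is harmless) are consistent with, and merely elaborate on, the paper's argument.
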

\begin{proof}
Let $S_0=\{x_1,x_2\}$, and for $1\le i\le\lceil\log\log n\rceil$ let $S_i=\{x_j~:~ 2^{2^{i-1}}<j\le 2^{2^i}\}$. The hierarchical partition will be performed in $\log\log n$ phases. The $i$-th phase consists of $2^i+1$ levels. In each level of the $i$-th phase, we generate an $s$-separator for each remaining connected component $C$, with the following weight function
\[
w(u)=\left\{\begin{array}{ccc} 1 & u\in S_i\cap C\\
0 & \textrm{otherwise} \end{array} \right.
\]
Then this separator is removed from the component.
By the observation made above, after at most $1+\log|S_i|\le 2^i+1$ levels, all remaining components have no vertices from $S_i$.
The label of a vertex $u\in V$ will be the distances to all points in the separators created for components containing $u$.

Fix some vertex $x_j$ (for $j>1$), and assume $x_j\in S_i$. Notice that $2^{i-1}<\log j$. Then the label size of $x_j$ is at most
\[
\sum_{k=0}^i s\cdot (2^k+1) = O(s\cdot 2^i)=O(s\cdot\log j)~.
\]
\end{proof}

\subsubsection{Planar Graphs and Graphs Excluding a Fixed Minor}

\underline{}While exact distance labeling for planar graphs requires polynomial label size or query time, there is a $1+\epsilon$ stretch labeling scheme for planar graphs with label size $O(\log n)$ \cite{T01,K02}, which was extended to graphs excluding a fixed minor \cite{AG06}. All these constructions are based on {\em path separators}: a constant number of shortest paths in the graph, whose removal induces pieces of bounded weight. The label of a vertex consists of distances to carefully selected vertices on these paths. We may use the same methodology as above; generate these path separators for the sets $S_i$ in order, and obtain the following.
\begin{theorem}\label{thm:app-planar}
Let $G=(V,E)$ be a graph excluding some fixed minor, and $V=(x_1,\dots,x_n)$ a priority ranking of the vertices. Then for any $\epsilon>0$ there exists a distance labeling scheme with stretch $1+\epsilon$ and prioritized label size $O((\log j)/\epsilon)$.
\end{theorem}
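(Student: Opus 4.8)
The plan is to mimic exactly the phased-separator construction used in the proof of \theoremref{thm:label-separate}, but replacing the vertex separators by the {\em path separators} underlying the $(1+\epsilon)$-stretch labeling schemes of \cite{T01,K02,AG06}. Recall that for a graph excluding a fixed minor, the scheme of \cite{AG06} works by recursively partitioning the graph: at each step one finds a separator consisting of a constant number of shortest paths, whose removal splits a piece into components of at most a constant fraction of the (weighted) size. The label of a vertex $u$ consists, for each piece in the recursion containing $u$, of an $O(1/\epsilon)$-size "portal set" of carefully chosen vertices on these shortest paths together with the distances from $u$ to them; the query between $u$ and $v$ then recovers a $(1+\epsilon)$-approximation by minimizing $d(u,p)+d(p,v)$ over common portals $p$. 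The total label size in the non-prioritized scheme is $O((\log n)/\epsilon)$ because the recursion has depth $O(\log n)$ and each level contributes $O(1/\epsilon)$.

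First I would set up the same partition of the priority ranking into blocks as before: $S_0=\{x_1,x_2\}$ and $S_i=\{x_j : 2^{2^{i-1}}<j\le 2^{2^i}\}$ for $1\le i\le\lceil\log\log n\rceil$. I would then run the recursive path-separator decomposition in $\log\log n$ phases, where in phase $i$ the weight function on the current piece is the indicator of $S_i$ (i.e.\ $w(u)=1$ if $u\in S_i$ and $0$ otherwise), and each phase consists of $O(\log|S_i|)=O(2^i)$ levels of recursion. The key structural fact, identical to the one exploited in \theoremref{thm:label-separate}, is that when the weight is an $S_i$-indicator, after $1+\lceil\log|S_i|\rceil$ levels every remaining piece is $S_i$-free, so all vertices of $S_i$ have already been placed on some path separator. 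Hence a vertex $x_j\in S_i$ participates in separators across phases $0,1,\dots,i$ only, contributing $O(1/\epsilon)$ portal vertices per level, for a total label size of $\sum_{k=0}^{i}O(2^k/\epsilon)=O(2^i/\epsilon)=O((\log j)/\epsilon)$, since $2^{i-1}<\log j$. The stretch guarantee $1+\epsilon$ is inherited verbatim from \cite{AG06}: for any pair $u,v$, at some level of the recursion the piece containing both is split, a shortest $u$--$v$ path meets one of the separator shortest paths, and the portal-selection argument of \cite{T01,AG06} certifies that the common portal gives a $(1+\epsilon)$-approximation; the only thing to check is that both $u$ and $v$ stored labels up through the level where they get separated, which holds because we never truncate the recursion on a piece before it becomes a singleton within the relevant phase (or, as in \theoremref{thm:full-prior-label}, we only need the {\em lower-priority} endpoint to have recorded the separating level, and the phase indexed by the higher-priority endpoint's block already does this).

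The main obstacle is making sure the weighted path-separator theorem of \cite{AG06} is robust enough to be applied with an arbitrary $0/1$ weight function and to be {\em iterated within a phase}: the standard statement guarantees that each component has at most a $2/3$ fraction of the total weight, which is exactly what drives the "after $O(\log|S_i|)$ levels the piece is $S_i$-free" bound, but one must confirm that once a phase ends (all pieces $S_i$-free) and phase $i+1$ begins with a new weight function, the separators found in earlier phases are still recorded in the labels and do not interfere — this is fine since labels only accumulate. A second, more delicate point is that path separators, unlike vertex separators, may reuse the same shortest path across multiple pieces; I would appeal to the fact that in \cite{AG06,T01} each vertex ultimately lies on only $O(1/\epsilon)$ portal sets {\em per recursion level} regardless, so the per-level accounting of $O(1/\epsilon)$ still holds, and the phase structure only controls the {\em number of levels} a high-priority vertex survives. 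Once these two points are granted, the calculation is routine and identical in form to the proof of \theoremref{thm:label-separate}.
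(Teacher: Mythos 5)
Your proposal takes exactly the same route the paper does: the paper itself gives only a one-paragraph sketch stating that the phased-separator construction of \theoremref{thm:label-separate} should be repeated with the path separators of \cite{T01,K02,AG06} in place of vertex separators, applying the $S_i$-indicator weight function phase by phase. Your expansion of this sketch (the $\log\log n$ phases, the $O(2^i)$ levels per phase driven by the $2/3$-weight-reduction property, the per-level $O(1/\epsilon)$ portal cost, and the resulting $O(2^i/\epsilon)=O((\log j)/\epsilon)$ label bound) fills in precisely the accounting the paper leaves implicit, and is correct.

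One remark on your "delicate point" about whether both endpoints record the separating level: this is actually handled the same way as in \theoremref{thm:label-separate}. When $x_j \in S_i$ is placed on a separator path at some level $\ell$ of phase $i$, any lower-priority $x_k$ that is queried against $x_j$ was either separated from $x_j$ earlier (in which case both recorded the portals on the separator that split them, since they shared all pieces up to that point), or is still in the same piece at level $\ell$, in which case $x_k$ records portals near $x_j$ on that separator and $x_j$ trivially has itself as a zero-distance portal. So truncating $x_j$'s label accumulation at the end of phase $i$ is sound without any extra condition; the phrase "the phase indexed by the higher-priority endpoint's block already does this" is the right intuition but the dichotomy above is the cleaner justification.
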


\section{Routing}\label{sec:route}

\subsection{Routing in Trees with Prioritized Labels}

In this section we extend a result of \cite{TZ-spaa}, and show a routing scheme on trees. The setting is that each vertex stores a routing table, and when a routing request arrives for vertex $v$, it contains $L(v)$, the label of vertex $v$. We will show the following.
\begin{restatable}{theorem}{RouteTree}\label{thm:route-tree}
For any tree $T=(V,E)$ there is a routing scheme with routing tables of size $O(1)$ and labels of prioritized size $\log j+2\log\log j+4$.
\end{restatable}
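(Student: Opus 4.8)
The plan is to adapt the classical tree-routing scheme of Thorup--Zwick \cite{TZ-spaa} (which gives $O(1)$-size routing tables and labels of size $O(\log n)$ bits via a heavy-path/heavy-child decomposition) so that the label size depends on the priority rank $j$ of the destination rather than on $n$. Recall the standard approach: root the tree, and for each internal vertex designate one child as \emph{heavy} if its subtree contains at least half the vertices of the current subtree (more generally, the subtree is ``large''); a root-to-vertex path then crosses at most $\log n$ light edges. The label of $v$ encodes a DFS interval plus, for each light edge on the path from the root to $v$, the port number leading into the heavy subtree versus the light detour; this is what costs $O(\log n)$. To route to $v$ from any $u$, one walks down the heavy-path structure consulting only $O(1)$ local information at each vertex plus $L(v)$.

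First I would replace the uniform vertex count by a \emph{priority-weighted} potential: assign each vertex $x_j$ a weight, say $\psi(x_j)$ that decreases with $j$ (the natural choice is $\psi(x_j)=1/j$, or $\psi$ constant on the priority blocks $K_0=\{x_1,x_2\}$, $K_i=\{x_h : 2^{2^{i-1}}<h\le 2^{2^i}\}$ used elsewhere in the paper), and define a child to be \emph{heavy} if the total weight of its subtree is at least half the total weight of the parent's subtree. Then along any root-to-$x_j$ path the number of light edges is bounded by $\log$ of the ratio of the total weight $\Psi(V)=\sum_h \psi(x_h)=\Theta(\log n)$ to the weight of the lightest subtree containing $x_j$, which is at least $\psi(x_j)$. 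With $\psi(x_j)=1/j$ this gives roughly $\log(\Psi(V)\cdot j)=\log j + O(\log\log n)$ light edges, which is not quite good enough; the fix is to make the weights \emph{block-constant} so that all of $x_1,\dots,x_j$ together carry a $\Theta(1)$ fraction of the weight once we reach the block containing $x_j$. Concretely, giving block $K_i$ total weight $2^{-i}$ (so $\Psi(V)=O(1)$ and within $K_i$ each point has weight $2^{-i}/|K_i|=2^{-i}/2^{2^{i-1}}$), the number of light edges above $x_j\in K_i$ is at most $\log(1/\text{(its weight)}) = i + 2^{i-1} + O(1)$, and since $x_j\in K_i$ means $\log j > 2^{i-1}$, i.e. $2^{i-1}<\log j$ and $i<\log\log j+1$, this is $\log j + \log\log j + O(1)$ light edges. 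Each light edge contributes one port number of $O(\log(\text{degree}))$ bits, but encoded cleverly (as in \cite{TZ-spaa}, storing the concatenation of port numbers along the light path as a single number, or using the standard trick that the total number of distinct light-child port choices telescopes) the whole light-path information about $x_j$ fits in $\log j + 2\log\log j + O(1)$ \emph{words}; I would also append the DFS interval of $x_j$, which is $O(1)$ words. Adding the constants carefully yields the claimed $\log j + 2\log\log j + 4$.

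The routing tables of size $O(1)$ come for free: exactly as in \cite{TZ-spaa}, each vertex stores only its own DFS interval, the port to its parent, and the port to its heavy child — a constant number of words — and the decision of which port to forward on is made by comparing $L(v)$'s interval against the local intervals, walking toward the heavy child when $v$'s target lies in the heavy subtree and otherwise reading the next light-port from $L(v)$. Correctness of routing and the $O(1)$-table bound are essentially verbatim from \cite{TZ-spaa}; only the accounting of the label length changes, driven by the weighted heavy-child rule.

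The main obstacle I expect is the bookkeeping to get the precise constants $\log j + 2\log\log j + 4$ rather than a loose $O(\log j)$: one must choose the block weights so that $\Psi(V)\le 1$ exactly (not just $O(1)$), control the extra $+1$'s incurred by rounding ($\lceil\cdot\rceil$'s in the block boundaries and in ``at least half''), and verify that concatenating the $i\approx\log\log j$ light-port numbers plus the interval really costs at most $\log j + 2\log\log j + 4$ words and not, say, $\log j + 2\log\log j + \log\log\log j$. A secondary subtlety is handling the lowest-priority block (where $|K_i|$ could be a non-integer or the block could be empty) and the top block $K_0=\{x_1,x_2\}$ separately, so that $x_1$ and $x_2$ genuinely get labels of size $O(1)$; these edge cases should be dispatched by a direct argument rather than the general weighted bound.
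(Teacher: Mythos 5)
Your high-level plan is exactly the paper's: keep the Thorup--Zwick tree-routing machinery verbatim ($O(1)$-word tables with DFS interval, parent port and heavy-child port; label lists the light-edge ports along the root-to-$v$ path), but redefine ``heavy child'' with respect to a priority-weighted subtree mass so that the number of light edges above $x_j$ is $\log\bigl(\Psi(V)/\psi(x_j)\bigr)$. You also correctly diagnose why $\psi(x_j)=1/j$ falls short. However, your concrete fix does not work. With the doubly-exponential blocks $K_i=\{x_h: 2^{2^{i-1}}<h\le 2^{2^i}\}$ you have $|K_i|=2^{2^i}-2^{2^{i-1}}\approx 2^{2^i}$, not $2^{2^{i-1}}$ as you wrote. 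So either (a) you take per-point weight $2^{-i}/|K_i|\approx 2^{-i-2^i}$, in which case $\Psi(V)=O(1)$ holds but $\log(1/\psi(x_j))\approx i+2^i$, and since $x_j\in K_i$ only guarantees $\log j\le 2^i< 2\log j$, you get up to $2\log j+\log\log j+O(1)$ light edges --- a leading constant of $2$, not $1$; or (b) you take the per-point weight $2^{-i-2^{i-1}}$ as literally written, in which case block $K_i$ has total weight $\approx 2^{2^{i-1}-i}$ and $\Psi(V)$ diverges (it is roughly $\sqrt{n}/\log n$), so the light-edge count $\log(\Psi(V)/\psi(x_j))$ is nowhere near the claim. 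The doubly-exponential blocks are fine elsewhere in the paper where a factor $2$ inside a $\log$ is absorbed by $O(\cdot)$, but here the leading constant matters.

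The correct instantiation, which is what the paper uses, is \emph{dyadic} blocks $S_i=\{x_j: 2^{i-1}<j\le 2^i\}$ with per-vertex weight $p(v)=\frac{1}{2^i(i+1)^2}$ for $v\in S_i$ (morally $\psi(x_j)\approx \frac{1}{j\log^2 j}$, i.e.\ a convergent series in $j$ rather than one constant per block). Then $\Psi(V)\le\sum_i (i+1)^{-2}\le 2$, each light edge at least halves the subtree weight, and for $x_j\in S_i$ one gets $\ell(x_j)\le\log\bigl(2\cdot 2^i(i+1)^2\bigr)=i+2\log(i+1)+1\le \log j+2\log(\log j+2)+2$, which together with the vertex name gives the stated $\log j+2\log\log j+4$ words; $x_1$ is handled by putting it in its own block $S_0$. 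Everything else in your write-up (the $O(1)$ tables, the DFS-interval routing decision, correctness) is indeed verbatim from \cite{TZ-spaa} as you say. So the gap is purely in the choice of weight function, but as proposed the construction provably misses the stated bound by a factor of $2$ in the leading term (or fails to normalize), and that is the entire quantitative content of the theorem.
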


\begin{proof}
The proof follows closely the one of \cite{TZ-spaa}, with the major difference being the assignments of weights, which gives preference to the high priority vertices. Thus ensuring that when routing from the root of the tree to a vertex of rank $j$, there are $\approx\log j$ junctions that require routing information from the label of the vertex.

Let $x_1,\dots,x_n$ be the priority ranking of $V$. Let $S_0=\{x_1\}$ and for each $1\le i\le\log n$, let $S_i=\{x_j~:~2^{i-1}<j\le 2^i\}$.
Fix an arbitrary root $r$ of the tree $T$. For every $v\in S_i$ define $p(v)=\frac{1}{2^i\cdot (i+1)^2}$. Note that as $|S_i|\le 2^i$ we have that
\[
\sum_{v\in V}p(v)\le \sum_{i=0}^{\log n}\frac{2^i}{2^i\cdot(i+1)^2}\le 2~.
\]
For each $v\in V$, define the weight of $v$ as $s_v=\sum_{u\in T_v}p(u)$, where $T_v$ is the subtree rooted at $v$ (including $v$ itself). A child $v'$ of $v$ is called {\em heavy} if its weight is greater than $s_v/2$; otherwise it is called {\em light}. The root $r$ of the tree will always be considered heavy.
Observe that any vertex can have at most one heavy child.
The {\em light level} $\ell(v)$ of a vertex $v$ is defined as the number of light vertices on the path from the root to $v$, denoted by $Path(v) = (r = v_0,v_1,\ldots,v_k = v)$. The label size of $v$ will be $\ell(v)$ words.

We enumerate all vertices $T$ in DFS order, where all the light children of a vertex are visited before its heavy child is visited. (The order is otherwise arbitrary.)
We identify each vertex $v$ with its DFS number. Let $f_v$ denote the largest descendant of $v$. Also, let $h_v$ denote its heavy child, if exists. If it does not exist define $h_v  = f_v+1$.
Also, let $P(\pi(v))$ denote the port number of the edge connecting $v$ to its parent $\pi(v)$, and $P(h_v)$ denote the port number connecting $v$ to its heavy child (if it exists). The routing table stored at $v$ is $(v,f_v,h_v,P(\pi(v)),P(h_v))$. It requires $O(1)$ words.

Each time an edge from a vertex to one of its light children is taken, the weight of the corresponding subtree decreases by at least a factor of 2. Note that a vertex $v=x_j\in S_i$ has weight at least $w(v)\ge p(v)=\frac{1}{2^i\cdot (i+1)^2}$, and since the root has weight at most 2, it follows that $\ell(v)\le\log(2\cdot 2^i\cdot(i+1)^2) =i+2\log(i+1)+1$. Since $2^{i-1}<j$, we conclude that
\[
\ell(v)\le \log j +2\log(\log (j)+2)+2~.
\]

For each index $q$, $1 \le q \le \ell(v)$, denote by  $i_q$ the index of $q$-th light vertex of $Path(v)$.
Let $L(v) = (v,(port(v_{i_1-1},v_{i_1}),\ldots,port(v_{i_{\ell(v)-1}},v_{i_{\ell(v)}})))$ be the label of $v$, which consists of its name, and a sequence of at most $\ell(v)$ words containing the port numbers corresponding to the edges leading to light children on $Path(v)$.

The routing algorithm works as follows. Suppose we need to route a message with the header $L(v)$ at a vertex $w$.
The vertex $w$ checks if $w = v$. If it is the case then we are done.
Otherwise, $w$ checks if $v \in [w,f_w]$. If it is not the case, then $v$ is not in the subtree of $w$, and then $w$ sends the message to its parent.
Otherwise $w$ checks if $v \in [h_w,f_w]$. If it is the case then the message is sent to the heavy child. Otherwise $v$ is a descendent of a light child of $w$. The vertex $w$ finds itself in the sequence of $L(v)$, and determines to which light child of $w$ the message should be sent. Then it sends the message to this child.

\end{proof}

\subsection{Routing in General Graphs}

To obtain routing scheme for general graphs, we use the same method as \cite{TZ-spaa}, but replace their distance labeling with our prioritized ones from \theoremref{thm:label}. This routing scheme has the following property: after an initial calculation using the entire label of the destination vertex $v$, all routing decisions are based on a much shorter {\em header} appended to the message. In particular, we obtain the following theorem.
\begin{theorem}\label{thm:general-route}
For any graph $G=(V,E)$ with priority ranking $x_1,\dots,x_n$ of $V$, and any parameter $t\ge 1$, there exists a routing scheme, such that the label size of $x_j$ is at most $\log j\cdot \lceil \frac{t\log j}{\log n}\rceil\cdot(1+o(1))$, its header of size $\log j\cdot(1+o(1))$, and it stores a routing table of size $O(n^{1/t}\cdot\log j)$. Routing from any vertex into $x_j$ will have stretch at most $4\lceil \frac{t\log j}{\log n}\rceil-3$.
\end{theorem}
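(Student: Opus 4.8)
The plan is to follow the template of \cite{TZ-spaa}, which reduces routing in general graphs to a combination of a distance labeling scheme and the tree-routing scheme of \theoremref{thm:route-tree}, and to substitute our prioritized ingredients in place of the non-prioritized ones. First I would recall the Thorup–Zwick routing construction underlying \theoremref{thm:prior-stretch}: for each vertex $v$ we have pivots $p_0(v),\dots,p_{t-1}(v)$ and a bunch $B(v)$. The key structural fact is that when routing a message toward a destination $x_j$, one uses the $i$-th pivot $p_i(x_j)$ where $i = t-\lceil\frac{t\log j}{\log n}\rceil$ is the level at which $x_j$ is forced into $A_i$ (so that $x_j$ surely lies in its own bunch at level $i$ and the sender's bunch-membership test succeeds after at most $\lceil\frac{t\log j}{\log n}\rceil$ hops). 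For each $i$, the set $A_i$ induces a shortest-path forest (each $u$ hangs from $p_i(u)$), and routing within such a tree is done via the tree-routing scheme of \theoremref{thm:route-tree}. The label of $x_j$ consists of: its name, the identities $p_i(x_j)$ for the relevant levels, the distances $d(x_j,p_i(x_j))$, and the tree-routing labels of $x_j$ in each of the (at most $\lceil\frac{t\log j}{\log n}\rceil$) trees it participates in during a query to it.

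Next I would do the bookkeeping on the three quantities. For the \emph{routing table} size: vertex $v$ stores, for the distance-oracle part, its hash table of distances to $B(v)$, which by the analysis of \theoremref{thm:label} has prioritized size $O(n^{1/t}\cdot\log j)$ for $v=x_j$ with constant probability; plus $O(1)$-word tree-routing tables in each of the $t$ trees (these are $O(1)$ each by \theoremref{thm:route-tree}, and $t = O(\log n)$ is absorbed, or one notes that $v$ only needs a table in trees it actually lies in — but even $O(t)$ total is $O(n^{1/t}\log j)$ for $t\le\log n$). For the \emph{label} size of $x_j$: it is dominated by the tree-routing labels, of which there are at most $\lceil\frac{t\log j}{\log n}\rceil$, each of prioritized size $\log j + 2\log\log j + O(1) = \log j\cdot(1+o(1))$ by \theoremref{thm:route-tree}, giving the claimed $\log j\cdot\lceil\frac{t\log j}{\log n}\rceil\cdot(1+o(1))$; the names and distances contribute only $O(\lceil\frac{t\log j}{\log n}\rceil)$ words, which is lower order. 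For the \emph{header}: after the sender reads the full label of $x_j$ and picks the correct level $i$, it fixes the relevant tree-routing label of $x_j$ in that single tree as the header, of size $\log j\cdot(1+o(1))$; all subsequent routing decisions are made tree-locally using only this header plus the $O(1)$ routing tables, exactly as in \cite{TZ-spaa}.

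Finally, the \emph{stretch} analysis: this is the standard Thorup–Zwick argument as in the proof of \theoremref{thm:prior-stretch}, but with the extra factor-$2$ loss inherent to converting a distance oracle into a routing scheme (routing toward $x_j$ proceeds to the chosen pivot $w$ and then down the tree toward $x_j$, and the tree route from $w$ to $x_j$ has length $d(w,x_j)$ since it is a shortest-path tree rooted appropriately, so the total traversed length is $d(u,w)+d(w,x_j) \le 2\sum$ of the per-iteration increments). Running $\lceil\frac{t\log j}{\log n}\rceil$ iterations of the query gives a path of length at most $(2\cdot\lceil\frac{t\log j}{\log n}\rceil - 1)\cdot d(u,x_j)$ in the oracle, and the routing realizes this at a factor $2$ loss minus savings at the endpoints, yielding $4\lceil\frac{t\log j}{\log n}\rceil - 3$; this mirrors the passage from stretch $2t-1$ in \cite{TZ01} to stretch $4t-5$ in \cite{TZ-spaa}. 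I expect the main obstacle to be purely expository: carefully verifying that the prioritized level $i = t-\lceil\frac{t\log j}{\log n}\rceil$ at which $x_j$ is forced into $A_i$ makes both the tree-routing labels short (since $x_j$ only sits in the forests $A_0,\dots,A_i$ in a way that matters for queries to it, and these number $\lceil\frac{t\log j}{\log n}\rceil$) \emph{and} keeps the number of routing hops bounded by the same quantity, so that label size, header size, and stretch all scale with the single parameter $\lceil\frac{t\log j}{\log n}\rceil$; handling the low-rank cases $j$ small (so $\lceil\frac{t\log j}{\log n}\rceil=1$, $x_1\in A_{t-1}$) uniformly with the generic case is the only place care is needed, and everything else is a transcription of \cite{TZ-spaa} with \theoremref{thm:label} and \theoremref{thm:route-tree} plugged in.
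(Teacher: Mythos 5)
Your overall plan is the right one and is the paper's: plug \theoremref{thm:label} and \theoremref{thm:route-tree} into the Thorup--Zwick routing framework from \cite{TZ-spaa}, with the label of $x_j$ storing the pivots $p_i(x_j),\dots,p_{t-1}(x_j)$ and the corresponding tree-routing labels, so that all three quantities scale with $t-i=\lceil\frac{t\log j}{\log n}\rceil$. However, there is a concrete misstep in your description of the tree cover, and it is load-bearing. You write that ``for each $i$, the set $A_i$ induces a shortest-path forest (each $u$ hangs from $p_i(u)$)'', so that there are $t$ forests and $u$ lies in exactly one tree per level. This is \emph{not} the tree cover that makes the routing go through. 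To route from $u$ to $v=x_j$, the sender chooses $w=p_h(v)$ for the minimal $h$ with $p_h(v)\in B(u)$, and then both $u$ and $v$ must lie in a common tree rooted at $w$. In your level-$h$ forest, $u$ hangs from $p_h(u)$, which in general is a \emph{different} root than $w=p_h(v)$; so $u$ need not be in the tree rooted at $w$, and the routing step from $u$ toward $w$ has nowhere to go. The construction that the paper (following \cite{TZ-spaa,TZ01}) actually uses is the cluster tree cover: for every vertex $z$, a shortest-path tree $T_z$ rooted at $z$ spanning $C(z)=\{x\in V: z\in B(x)\}$. Then $w\in B(u)$ is \emph{exactly} the condition $u\in C(w)\subseteq T_w$, so picking the minimal $h$ with $p_h(v)\in B(u)$ guarantees $u$ and $v$ share the tree $T_{p_h(v)}$. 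Accordingly, $v$'s routing table must store an $O(1)$-word tree-routing table for each $z\in B(v)$ (not one per level), yielding $O(|B(v)|)=O(n^{1/t}\log j)$ words, which is the bound the theorem claims.

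Two smaller points. First, your stretch argument is stated as ``the oracle gives $2\lceil\cdot\rceil-1$ and routing loses roughly a factor $2$''; this happens to land on $4\lceil\cdot\rceil-3$, but the mechanism is different: in the routing one cannot alternate roles of $u$ and $v$ between iterations (the sender only knows pivots of the fixed destination $v$), so the inductive increment per level is $2\,d(u,v)$ rather than $d(u,v)$. The paper runs a fresh induction, $d(v,p_k(v))\le 2(k-i)d(u,v)$ and $d(u,p_k(v))\le (2(k-i)+1)d(u,v)$, and concludes $d(u,p_h(v))+d(p_h(v),v)\le (4(h-i)+1)d(u,v)\le(4(t-i)-3)d(u,v)$. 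You should write this induction out rather than rely on a factor-of-two heuristic. Second, the sender does not perform ``hops'' through the oracle; it reads the full label of $x_j$, determines $h$ in one pass, and writes only the pair $(p_h(v),L_h(v))$ into the header, which is where the $\log j\cdot(1+o(1))$ header size comes from.
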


\begin{proof}[Sketch]
We use the definitions of \sectionref{sec:prior-oracles}. Consider the distance labeling scheme given in \theoremref{thm:label}.  Following \cite{TZ01}, this labeling scheme yields a tree-cover: a collection of subtrees such that vertex $v=x_j$ belongs to at most $|B(v)|$ trees. The tree $T_z$ for vertex $z$ contains $z$ as the root, and the shortest path to all the vertices in $C(z)=\{x\in V~:~ z\in B(x)\}$. To route from some vertex $u\in V$ to $v$, it suffices to find an appropriate $z\in B(u)\cap B(v)$, and route in $T_z$ by applying \theoremref{thm:route-tree}.

The {\em routing table} stored at each vertex $v\in V$ contains the hash table for its bunch $B(v)$, and the routing table needed to route in $T_z$ for each $z\in B(v)$. Recall that by \theoremref{thm:label}, $|B(v)|\le O(n^{1/t}\cdot\log j)$ (where $v=x_j$), and by \theoremref{thm:route-tree}, the routing table of each tree is of constant size. Let $i$ be the minimal such that $v=x_j\in S_i$. The {\em label} of $v$ is $\left((p_i(v),L_i(v)),\dots,(p_{t-1}(v),L_{t-1}(v))\right)$, where $L_h(v)$ is the label of $v$ that is required to route in $T_{p_h(v)}$. Note that the label is of size $(t-i)\log j\cdot(1+o(1))=\log j\cdot \lceil \frac{t\log j}{\log n}\rceil\cdot(1+o(1))$ (the equality follows from a calculation done in \sectionref{sec:prior-oracles}).

Finding the tree which guarantees the prioritized stretch as in \theoremref{thm:label} could have been achieved by using \algref{alg:query}, alas, this requires knowledge of the bunches of both vertices $u$ and $v$. It remains to see that using only the label of $v$ and the routing table at $u$, one can find a tree in the cover which has stretch at most $4\lceil\frac{t\log j}{\log n}\rceil-3$ for $u,v$ (routing in the tree does not increase the stretch).
To see this, let $i\le h\le t-1$ be the minimal such that $p_h(v)\in B(u)$. Following \cite{TZ-spaa}, we prove by induction that for each $i\le k\le h$ it holds that
\begin{itemize}
\item $d(v,p_k(v))\le 2(k-i)\cdot d(u,v)$,
\item $d(u,p_k(v))\le (2(k-i)+1)\cdot d(u,v)$.
\end{itemize}
The base case for $k=i$ holds as $v=p_i(v)$, assume for $k$, and for $k+1$ it suffices to prove the first item, as the second follows from the first by the triangle inequality. Since $k<h$ it follows that $p_k(v)\notin B(u)$, thus it must be that $d(u,p_{k+1}(u))\le d(u,p_k(v))$. Now,
\[
d(v,p_{k+1}(v))\le d(v,p_{k+1}(u))\le d(v,u)+d(u,p_{k+1}(u))\le d(v,u)+ d(u,p_k(v))\le (2(k-i)+2)\cdot d(u,v)~,
\]
where the last inequality uses the induction hypothesis. Finally, routing through the shortest path tree rooted at $p_h(v)$ will have stretch at most
\[
d(u,p_h(v))+d(p_h(v),v)\le (4(h-i)+1)\cdot d(u,v)\le (4(t-i)-3)\cdot d(u,v)=(4\bigg\lceil\frac{t\log j}{\log n}\bigg\rceil-3)\cdot d(u,v)~,
\]
using that $h\le t-1$ and that $t-i=\lceil\frac{t\log j}{\log n}\rceil$. This concludes the bound of the stretch. Note that once the vertex $p_h(v)$ is found, all other vertices on route from $u$ to $v$ only require the information $(p_h(v),L_h(v))$, which is appended to the message as a header of size $\log j\cdot(1+o(1))$.

\end{proof}

\begin{corollary}\label{cor:Routing}
Any graph $G=(V,E)$ with a priority ranking $x_1,\dots,x_n$ has a fully prioritized routing scheme, such that the label size of $x_j$ is
at most $\log ^2j\cdot(1+o(1))$, its header will be of size $\log j\cdot(1+o(1))$, and it stores a routing table of size $O(\log j)$.
Routing from any vertex into $x_j$ will have stretch at most $4\lceil\log j\rceil-3$.
\end{corollary}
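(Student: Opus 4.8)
The plan is to obtain \corollaryref{cor:Routing} as a direct instantiation of \theoremref{thm:general-route}, exactly mirroring the way \corollaryref{cor:part-label} is derived from \theoremref{thm:label}: one simply sets the parameter $t=\log n$ (assuming without loss of generality that $\log n$ is an integer; otherwise round up, which changes every bound by at most a constant factor that is absorbed into the $O(\cdot)$ and $(1+o(1))$ notation). This is the ``fully prioritized'' regime, in which the label size and the routing table become independent of $n$ and depend only on the priority rank $j$.

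With $t=\log n$ I would carry out the following substitutions. First, $n^{1/t}=n^{1/\log n}=2$, so the routing-table bound $O(n^{1/t}\cdot\log j)$ of \theoremref{thm:general-route} becomes $O(\log j)$. Second, the recurring quantity $\lceil\frac{t\log j}{\log n}\rceil$ collapses to $\lceil\frac{\log n\cdot\log j}{\log n}\rceil=\lceil\log j\rceil$. Plugging this into the label-size bound $\log j\cdot\lceil\frac{t\log j}{\log n}\rceil\cdot(1+o(1))$ yields $\log j\cdot\lceil\log j\rceil\cdot(1+o(1))=\log^2 j\cdot(1+o(1))$, and plugging it into the stretch bound $4\lceil\frac{t\log j}{\log n}\rceil-3$ yields $4\lceil\log j\rceil-3$. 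The header size $\log j\cdot(1+o(1))$ carries over verbatim, since it has no dependence on $t$. Assembling these four bounds gives precisely the statement of \corollaryref{cor:Routing}.

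There is essentially no obstacle here beyond the routine arithmetic; the only point that warrants a moment of care is the integrality of $\log n$, which, as noted, costs at most a constant factor and is harmless. If one wishes to be slightly more careful, one can instead take $t=\lceil\log n\rceil$ and verify that $n^{1/t}\le 2$ and $\lceil\frac{t\log j}{\log n}\rceil=\Theta(\log j)$ still hold, so that all the stated asymptotics are preserved.
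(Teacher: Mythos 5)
Your derivation is correct and is precisely the intended instantiation: the paper states \corollaryref{cor:Routing} immediately after \theoremref{thm:general-route} without a separate proof, and the implicit argument is exactly the substitution $t=\log n$, which gives $n^{1/t}=2$ and $\lceil t\log j/\log n\rceil=\lceil\log j\rceil$, yielding all four bounds. Your side remark about rounding $\log n$ to an integer is a harmless and accurate observation.
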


\section{Prioritized Embedding into Normed Spaces}\label{sec:embed}

\subsection{Embedding with Prioritized Distortion}\label{sec:distortion-lp}

In this section we study embedding arbitrary metrics into normed spaces, where the distortion is prioritized according to the given ranking of the points in the metric. Our main result is the following

\begin{theorem}\label{thm:lp-strong-forall}
For any $p\in[1,\infty]$, $\epsilon>0$, and any finite metric space $(X,d)$ with priority ranking $X=(x_{1},\dots,x_{n})$,
there exists an embedding of $X$ into $\ell_p^{O(\log^2n)}$ with priority distortion $O(\log j\cdot(\log\log j)^{(1+\epsilon)/2})$.
\end{theorem}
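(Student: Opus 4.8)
The plan is to follow the black-box strategy advertised in the techniques overview: partition the points of $X$ according to their priority into nested blocks and apply a known low-dimensional embedding of $(X,d)$ with distortion depending on the block size, then glue these together. Concretely, for $j = 0,1,\dots,\lceil\log\log n\rceil$ let $K_j$ be the set of points whose rank lies in the dyadic-of-dyadic window $(2^{2^{j-1}}, 2^{2^{j}}]$ (with $K_0 = \{x_1,x_2\}$), so that $|K_j| \le 2^{2^j}$ and $\log|K_j| = O(2^j)$. For each $j$, consider the submetric $(K_j, d)$ and apply to it an embedding into $\ell_p^{O(\log^2|K_j|)}$ with distortion $O(\log|K_j| \cdot (\log\log|K_j|)^{(1+\epsilon)/2})$ — this is exactly the Matou\v{s}ek-type refinement of Bourgain's embedding (the distortion-dimension tradeoff referenced via \cite{bourgain} in the overview), with the $(\log\log)^{(1+\epsilon)/2}$ factor being the price for keeping the dimension at $O(\log^2|K_j|)$ rather than $O(\log|K_j|)$. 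Since each $|K_j| \le n$, every such embedding lands in $\ell_p^{O(\log^2 n)}$, and stacking all $O(\log\log n)$ of them into disjoint coordinate blocks still gives an embedding into $\ell_p^{O(\log^2 n \cdot \log\log n)}$; a small additional argument (or simply rescaling the target dimension claim) is needed to push this back to $O(\log^2 n)$ — I expect one reuses the same $O(\log^2 n)$ coordinates across all blocks after noting the embeddings can share randomness, or one simply absorbs the $\log\log n$ factor, but the cleanest route is the Lipschitz-extension viewpoint below which avoids stacking entirely.

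The key point, as the overview states, is that the Bourgain/Matou\v{s}ek embedding of $(K_j,d)$ is built from (truncated) distances to randomly sampled subsets of $K_j$, and each such coordinate map $x \mapsto \min\{d(x,A), \rho\}$ is $1$-Lipschitz on all of $X$, not just on $K_j$. Hence the embedding $f_j \colon K_j \to \ell_p^{D}$ extends to a map $\bar f_j \colon X \to \ell_p^{D}$ that remains non-expansive (up to the fixed normalization constant) on all of $X$. Define $f = \bigoplus_j c_j \bar f_j$ for suitable normalizing scalars $c_j$ (chosen as in the $\ell_p$-direct-sum so the whole map is non-expansive). Now fix a pair $x_j, x_i$ with $j < i$, and let $a$ be the block index with $x_j \in K_a$; note $2^a \le 2\log j$ so $\log|K_a| = O(\log j)$ and $\log\log|K_a| = O(\log\log j)$. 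The upper bound $\|f(x_j)-f(x_i)\|_p \le O(d(x_j,x_i))$ is immediate from non-expansiveness. For the lower bound, we only need the single block $K_a$: if $x_i \in K_a$ too, then $f_a$ already witnesses $\|f(x_j)-f(x_i)\|_p \gtrsim d(x_j,x_i)/(\log|K_a|\cdot(\log\log|K_a|)^{(1+\epsilon)/2})$, which is the claimed $O(\log j\cdot(\log\log j)^{(1+\epsilon)/2})$ distortion. If $x_i \notin K_a$ — the interesting case — we invoke the property (guaranteed by the Bourgain/Matou\v{s}ek construction, and the reason the overview stresses "it has certain properties") that the contribution to coordinate blocks coming from scales near $d(x_j,x_i)$ is controlled by the behavior of $d(\cdot,\cdot)$ restricted to a ball around $x_j$ whose relevant point count is $|K_a|$, not $n$; this gives the same distortion bound with $|K_a|$ in place of $n$ for the pair.

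The main obstacle is precisely this last point: making rigorous that the extended map $\bar f_a$ retains a distortion guarantee for pairs in $K_a \times X$ (and not merely $K_a \times K_a$). The honest way to do this is to recall the structure of Bourgain's embedding — coordinates indexed by scale $t$ and a random subset sampled at density $2^{-t}$ from the ground set — and to observe that when the ground set is $K_a$, the probabilistic lower bound at the scale separating $d(x_j, B(x_j,r))$ from $d(x_j, B(x_j,r'))$ only uses that $x_j \in K_a$ and that there are at least "enough" points of $K_a$ in the relevant annuli; the sampling density being calibrated to $|K_a|$ is exactly what is needed, and the target point $x_i$ can be arbitrary since its only role is to make the truncated-distance coordinate large, which non-expansiveness on all of $X$ already delivers. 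I would therefore state and prove a lemma: \emph{for any $K \subseteq X$ there is a non-expansive embedding $\bar f_K \colon X \to \ell_p^{O(\log^2|K|)}$ such that for every $u \in K$, $v \in X$, $\|\bar f_K(u)-\bar f_K(v)\|_p \ge \Omega(d(u,v)/(\log|K|\cdot(\log\log|K|)^{(1+\epsilon)/2}))$}, proved by auditing the Bourgain/Matou\v{s}ek analysis to check it never needs $v \in K$; then the theorem follows by applying this to $K = K_a$ for the block containing the higher-priority endpoint and taking the $\ell_p$-sum over all blocks, with the dimension bound $O(\log^2 n)$ coming from the fact that $\log^2|K_j| \le \log^2 n$ for every block and (to handle the $\log\log n$ blocks) reusing a common $O(\log^2 n)$-dimensional coordinate set or observing $\sum_j \log^2|K_j| = \sum_j O(4^j) = O(4^{\log\log n}) = O(\log^2 n)$ — this geometric sum is the trick that keeps the total dimension at $O(\log^2 n)$.
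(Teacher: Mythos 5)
Your overall architecture is the paper's: the same doubly-exponential blocks, the same plan of applying Bourgain per block, extending the Fr\'echet coordinates in a Lipschitz manner to all of $X$, taking a weighted direct sum, and the same geometric sum $\sum_a O(4^a)=O(\log^2 n)$ for the dimension. But the proof of your key lemma does not go through as you describe it. You assert that auditing Bourgain's analysis shows it ``never needs $v\in K$.'' It does need it: the lower bound for a pair $(u,v)$ telescopes over radii $r_t$ defined as the smallest $r$ with both $|B_K(u,r)|\ge 2^t$ and $|B_K(v,r)|\ge 2^t$, and at each scale one needs the random set to hit a small ball around one endpoint while missing a small ball around the other. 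When $v\notin K$ the ball $B_K(v,r)$ is empty for every $r<d(v,K)$, so the radii and the telescoping collapse (already $r_0=0$ presumes $v\in K$), and non-expansiveness of the extended coordinates only gives an upper bound, never the needed contribution. The paper's Lemma~\ref{lemma:lp-term-expand} repairs this with an extra ingredient you are missing: append one additional coordinate $h(x)=d(x,K)$ and case-split on whether $d(x,t)\le 3\alpha\, d(x,k_x)$, where $k_x$ is the nearest point of $K$ to $x$; in one case $h$ alone supplies the contribution, in the other one routes through the pair $(t,k_x)\in K\times K$ by the triangle inequality. Without this your central lemma is unproved.

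Separately, you misattribute the $(\log\log j)^{(1+\epsilon)/2}$ factor to a Matou\v{s}ek-type distortion--dimension tradeoff inside each block. In the paper each block uses plain Bourgain, with distortion $O(\log|S_a|)$ and dimension $O(\log^2|S_a|)$ and no $\log\log$ loss at all; the factor $(\log\log j)^{(1+\epsilon)/p}$ is precisely the reciprocal of the weight $\alpha_a\propto (a+1)^{-(1+\epsilon)/p}$ required so that $\sum_a\alpha_a^p\le 1$ and the direct sum over the $\log\log n$ blocks stays non-expansive (the exponent $1/2$ in the statement is the worst case after passing $p<2$ through $\ell_2$ and dimension reduction). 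Your version pays a $\log\log$ factor twice --- once inside the block embedding and once for the normalizing scalars $c_a$, which cannot all be $\Omega(1)$ when $p<\infty$ --- so even after fixing the lemma your bookkeeping would yield a bound weaker than the one claimed.
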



\paragraph{Proof overview.}
Our improved distortion guarantee for high ranked points comes from a variation of Bourgain's embedding \cite{bourgain} of finite metric spaces into $\ell_p$ space. Bourgain's embedding is based on randomly sampling sets in various densities, and defining the coordinates as distances to these sets. Our first observation (see \lemmaref{lemma:lp-term-expand}) is sampling points only from a subset $K\subseteq X$, suffices to obtain an embedding which is non-expansive for all pairs, and has bounded contraction for pairs in $K\times X$. Furthermore, the contraction depends only on $|K|$, rather than on $|X|$.

We then use a similar strategy as in previous sections, and partition $X$ to roughly $\log\log n$ subsets $S_0,S_1,\dots ,S_{\log\log n}$, where $S_i$ is of size $\approx 2^{2^i}$. The doubly exponential size arises because for any $u,v\in S_i$, the logarithm of the ranking of $u$ and of $v$ differs by at most a factor of 2. For each $i$, we create the embedding $f_i$ that will "handle" pairs in $S_i\times X$, and concatenate all these functions $f=\bigoplus_{i=0}^{\log\log n}\alpha_i\cdot f_i$.
Without the $\alpha_i$ factor, every pair will suffer a $(\log\log n)^{1/p}$ term in the distortion due to expansion. We introduce these factors to the embedding, where $\alpha_i$ is such that $\sum_{i=0}^{\infty} \alpha_i^p\le 1$. In such a way, the function $f$ is non-expansive, but we pay a small factor of $1/\alpha_i$ in the distortion for pairs in $S_i\times X$.

\begin{lemma}\label{lemma:lp-term-expand}
Let $(X,d)$ be a metric space of size $|X|=n$, $K\subseteq X$ a subset of size $|K|=k$ and a parameter $p\in [1,\infty]$.
Then there is an non-expansive embedding of $X$ into $\ell_p^{O(\log^2 k)}$ such that the contraction of any pair in $K\times X$ is at most $O(\log k)$.
\end{lemma}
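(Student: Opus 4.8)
The plan is to adapt Bourgain's embedding \cite{bourgain} so that the random sets used to define coordinates are sampled only from $K$, and then argue separately about expansion (which must hold for \emph{all} pairs in $X\times X$) and contraction (which need only hold for pairs in $K\times X$). First I would fix scales $s=1,2,\dots,\lceil\log k\rceil$, and for each scale $s$ take $\Theta(\log k)$ independent random subsets $A$ of $K$, where each point of $K$ is included in $A$ independently with probability $2^{-s}$. For each such set $A$ I define a coordinate $x\mapsto d(x,A)=\min_{a\in A}d(x,a)$, and finally normalize by dividing the whole map by $(\text{number of coordinates})^{1/p}$, or more precisely scale so that the $\ell_p$-norm aggregation over the $O(\log^2 k)$ coordinates is non-expansive. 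The total dimension is $O(\log k)\cdot O(\log k)=O(\log^2 k)$, as claimed.

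For the \textbf{non-expansiveness} I would use the standard observation that $x\mapsto d(x,A)$ is $1$-Lipschitz for any fixed set $A$: $|d(x,A)-d(y,A)|\le d(x,y)$. This holds regardless of whether $A\subseteq K$ or not, so after the $\ell_p$ normalization across all $O(\log^2 k)$ coordinates the map contracts every pair in $X\times X$; no randomness is needed here. The only subtlety is the exact normalization constant, which I would choose so that the worst-case coordinate-wise Lipschitz bound, aggregated in $\ell_p$, is exactly $1$ — a routine computation I would not carry out in detail.

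For the \textbf{contraction bound on $K\times X$}, fix $u\in K$ and $v\in X$. This is where the argument departs slightly from the textbook proof, since only $u$ (not necessarily $v$) lies in $K$. I would run the usual Bourgain argument "centered at $u$": define radii $\rho_t$ via $|B_K(u,\rho_t)|$ and $|B_K(v,\rho_t)|$ (balls intersected with $K$) crossing powers of $2$, and show that for the scale $s$ matching density $2^{-s}\approx 1/|B_K(u,\rho_t)|$ (or the $v$-side ball, whichever is relevant), a random $A\subseteq K$ hits $B_K(u,\rho_{t-1})$ but misses a larger ball around $v$ with constant probability, contributing $\Omega(\rho_t-\rho_{t-1})$ to $|d(u,A)-d(v,A)|$. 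Summing over the $O(\log k)$ scales $t$ gives a telescoping lower bound of $\Omega(d(u,v)/\log k)$ on the expected $\ell_1$-distance of that block of coordinates, hence (by the repetition over $\Theta(\log k)$ sets per scale and a Chernoff/Markov concentration argument) an $\Omega(d(u,v)/\log k)$ lower bound after $\ell_p$ aggregation with constant probability. A union bound over all relevant pairs $u\in K,v\in X$ — or more carefully, over the $O(k^2)$ pairs in $K\times K$ together with a Lipschitz extension observation, since $d(u,v)$ for $v\notin K$ is controlled by $d(u,u')$ for the nearest $u'\in K$ plus $d(u',v)$ — makes the embedding succeed simultaneously. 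The key point, and the reason the dimension and contraction depend only on $k=|K|$ and not on $n$, is that every ball in the argument is intersected with $K$, so only $\log k$ distinct density scales $s$ are ever needed.

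The \textbf{main obstacle} I anticipate is precisely the contraction argument for pairs $u\in K$, $v\in X\setminus K$: the classical analysis relies on both endpoints lying in the set from which we sample, so that the nested balls around both $u$ and $v$ have controlled $K$-cardinalities. When $v\notin K$, one must argue that the ball $B_K(v,\rho)$ still grows in the right way, or instead bypass $v$ entirely by first proving the contraction bound for all pairs within $K\times K$ and then extending to $K\times X$ via the fact that the map is $1$-Lipschitz (so distortion on a pair $(u,v)$ with $v\notin K$ is inherited, up to constants, from the pair $(u,u')$ where $u'\in K$ is closest to $v$). Getting the constants in this extension step clean — and making sure the "$O(\log k)$" contraction is not secretly an "$O(\log k)+O(1)$ additive" artifact — is the delicate part; everything else is a direct transcription of Bourgain's proof with $K$ in place of $X$.
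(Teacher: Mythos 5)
You correctly identify the right construction (a Fr\'{e}chet-style embedding sampling random subsets of $K$ at $O(\log k)$ density scales with $O(\log k)$ repetitions each), and you also correctly flag the one genuine difficulty: what happens when $v\in X\setminus K$. However, neither of the two workarounds you sketch actually closes this gap, and the proposal is missing the one extra ingredient the paper uses: an additional coordinate $h(x)=d(x,K)$.

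Concretely, consider the ``Lipschitz extension'' route you suggest: prove contraction on $K\times K$, then for $v\notin K$ pass through the nearest $u'\in K$. Write $\alpha=\Theta(\log k)$ for the contraction on $K\times K$. The triangle inequality gives
\[
\|F(u)-F(v)\|_p\;\ge\;\|F(u)-F(u')\|_p-\|F(u')-F(v)\|_p\;\ge\;\frac{d(u,u')}{\alpha}-d(u',v)~,
\]
and since $d(u,u')\ge d(u,v)-d(u',v)$ this is
$\ge d(u,v)/\alpha-(1+1/\alpha)\,d(u',v)$.
If $d(u',v)=d(v,K)$ is on the order of $d(u,v)$ --- say $d(v,K)=d(u,v)/2$, which is entirely possible since the only a priori bound is $d(v,K)\le d(u,v)$ --- this lower bound is negative, so the argument yields nothing. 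The same problem surfaces if you instead try to run the ball-growth argument directly ``centered at $u$'': since $B_K(v,r)=\emptyset$ for $r<d(v,K)$, the smallest radius $r_0$ at which both $K$-balls are nonempty is $r_0=d(v,K)$, and the telescoping sum of increments $\Delta_i=r_i-r_{i-1}$ only covers $r_I-r_0\approx d(u,v)/2-d(v,K)$, which again vanishes when $d(v,K)$ is a constant fraction of $d(u,v)$. Your scales start at $s=1$, so there is no coordinate that ``sees'' this initial gap of length $d(v,K)$.

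The paper resolves exactly this regime by appending a single extra coordinate $h(x)=d(x,K)$ (equivalently, a deterministic ``scale $s=0$'' coordinate where the whole of $K$ is the sampled set) and then doing a two-case analysis. When $d(v,K)\ge d(u,v)/(3\alpha)$, the $h$-coordinate alone contributes $|h(u)-h(v)|=d(v,K)\ge d(u,v)/(3\alpha)$, which is already the required $\Omega(1/\log k)$ contraction; this is precisely the case your extension argument fails to cover. When $d(v,K)<d(u,v)/(3\alpha)$, your Lipschitz extension via the nearest $u'\in K$ works as you describe, because then $d(u,u')\ge (1-1/(3\alpha))\,d(u,v)$ and the Lipschitz error $d(u',v)$ is small enough to be absorbed. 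To repair your proof, add the $h$-coordinate (it costs nothing: still $O(\log^2 k)$ dimensions and the normalization only changes by a factor $2^{1/p}$) and split into these two cases. With that fix, the remainder of your outline --- non-expansiveness by $1$-Lipschitzness of $d(\cdot,A)$, Chernoff over the $\Theta(\log k)$ repetitions per scale, union bound over the $\binom{k}{2}$ pairs in $K$ --- matches the paper's argument.
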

\begin{proof}

Let $m=O(\log^2 k)$, and $f:K\to\ell_p^m$ be a non-expansive embedding with contraction $\alpha=O(\log k)$ on the pairs of $K\times K$, which exists due to \cite{bourgain,llr}.
Since $f$ is a Fr\'{e}chet embedding, we claim that there is an extension $\hat{f}:X\to\ell_p^{O(\log^2 k)}$ of $f$ (that is $f(v)=\hat{f}(v)$ for all $v\in K$), such that $\hat{f}$ is also non-expansive. To see this, note that $f$ is defined as $f(x)=m^{-1/p}\bigoplus_{i=1}^{m} d\left(x,A_{i}\right)$ for some sets $A_i\subseteq K$. One can then simply define $\hat{f}(x)=m^{-1/p}\bigoplus_{i=1}^{m} d\left(x,A_{i}\right)$, which is indeed non-expansive.

Let $h:X\to\R$ be defined by $h(x)=d(x,K)$. The embedding $F:X\to\ell_p^m$ is defined by the concatenation of these maps $F=\hat{f}\oplus h$.
Since both of the maps $\hat{f},h$ are non-expansive, it follows that for any $x,y\in X$,
\[
\|F(x)-F(y)\|_{p}^{p}\le\|\hat{f}(x)-\hat{f}(y)\|_{p}^{p}+|h(x)-h(y)|^{p}\le2\cdot d(x,y)^{p}~,
\]
hence $F$ has expansion at most $2^{1/p}$ for all pairs. Let $t\in K$ and $x\in X$, and let $k_x\in K$ be such that $d(x,K)=d(x,k_x)$ (it could be that $k_x=x$).
If it is the case that $d(x,t)\le 3\alpha\cdot d(x,k_x)$ then by the single coordinate of $h$ we get sufficient contribution for this pair:
\[
\|F(t)-F(x)\|_p\ge |h(t)-h(x)|=h(x)=d(x,k_x)\ge \frac{d(x,t)}{3\alpha}~.
\]
The other case is that $d(x,t)> 3\alpha\cdot d(x,k_x)$, here we will get the contribution from $\hat{f}$. First observe that by the triangle inequality,
\begin{equation}\label{eq:gkp}
d(t,k_x)\ge d(t,x)-d(x,k_x)\ge d(t,x)(1-1/(3\alpha))\ge 2d(t,x)/3~.
\end{equation}
By another application of the triangle inequality, using that $\hat{f}$ is non-expansive, and that $f$ has contraction $\alpha$ on $K$,
we get the required bound on the contraction:
\begin{eqnarray*}
\|F(t)-F(x)\|_p&\ge&\|\hat{f}(t)-\hat{f}(x)\|_p\\
&\ge&\|\hat{f}(t)-\hat{f}(k_x)\|_p-\|\hat{f}(k_x)-\hat{f}(x)\|_p\\
&\ge&\|f(t)-f(k_x)\|_p-d(x,k_x)\\
&\ge&\frac{d(t,k_x)}{\alpha}-\frac{d(t,x)}{3\alpha}\\
&\stackrel{\eqref{eq:gkp}}{\ge}&\frac{2d(t,x)}{3\alpha}-\frac{d(t,x)}{3\alpha}\\
&=&\frac{d(t,x)}{3\alpha}~.
\end{eqnarray*}
In particular, the function $2^{-\frac{1}{p}}\cdot F$ is non-expansive for all pairs, and has contraction at most $2^{\frac{1}{p}}\cdot3\cdot\alpha=O(\log k)$
for pairs in $K\times X$.
\end{proof}

We are now ready to prove \theoremref{thm:lp-strong-forall}.
\begin{proof}[Proof of Theorem \ref{thm:lp-strong-forall}]
Let $S_0=\{x_1,x_2\}$, and for $1\le i\le\left\lceil \log\log n\right\rceil$ let $S_{i}=\left\{ x_{j}~:~ 2^{2^{i-1}}< j\le 2^{2^{i}}\right\}$.
For every $i$, let $f_i:X\to\ell_p$ be the embedding of \lemmaref{lemma:lp-term-expand} with $K=S_i$, and let $\alpha_i=c\cdot (i+1)^{-(1+\epsilon)/p}$ for sufficiently small constant $c$, so that $\sum_{i=0}^{\infty}\alpha_i^p\le 1$. Finally, define the embedding $f:X\to\ell_p$ by
\[
f=\bigoplus_{i=0}^{\left\lceil \log\log n\right\rceil }\alpha_{i}\cdot f_{i}~,
\]
To see that $f$ is indeed non-expansive, we recall that each $f_i$ is non-expansive, we obtain that for any $u,v\in X$
\[
\|f(u)-f(v)\|_p^p\le\sum_{i=0}^{\left\lceil \log\log n\right\rceil }\alpha_i^p\cdot\|f_i(u)-f_i(v)\|_p^p\le d(u,v)^p\sum_{i=0}^{\infty}\alpha_i^p\le d(u,v)^p~.
\]
For the contraction, let $v=x_j$ for some $j>1$, and take any $u\in X$. Let $i$ be the index such that $v\in S_i$, and note that $2^{i-1}<\log j$. By \lemmaref{lemma:lp-term-expand}, the embedding $f_i$ has contraction at most $O(\log |S_i|)=O(2^i)=O(\log j)$ for the pair $u,v$. Observe that $\alpha_i^p=c^p\cdot(i+1)^{-(1+\epsilon)}=\Omega\left((2+\log\log j)^{-(1+\epsilon)}\right)$, thus
\[
\|f(u)-f(v)\|_p^p\ge \alpha_i^p\cdot \|f(u)-f(v)\|_p^p\ge\Omega\left(\frac{d(u,v)^p}{(\log j)^p\cdot(2+\log\log j)^{-(1+\epsilon)}}\right)~.
\]

It is not hard to verify that $x_1$ has constant contraction with any $u$, so the prioritized distortion is $O\left(\log j\cdot(\log\log j)^{-(1+\epsilon)/p}\right)$.  Finally, since the dimension of $f_i$ is $O(\log^2|S_i|)=O(2^{2i})$, the embedding $f$ maps $X$ into $\sum_{i=0}^{\lceil\log\log n\rceil}O(2^{2i}) = O(\log^2n)$ dimensions. For $1\le p\le 2$, one may embed first into $\ell_2$, use \cite{jl} to reduce the dimension to $O(\log n)$, and then apply an embedding to $\ell_p^{O(\log n)}$, while paying a constant factor in the distortion \cite{flm}. The prioritized distortion will thus be at most $O(\log j\cdot (\log\log j)^{(1+\epsilon)/2})$.
\end{proof}

\subsection{Embedding with Prioritized Dimension}\label{sec:prior-dimension}

The main result of this section is an embedding with prioritized distortion {\em and dimension}. This means that a high ranking point will have low distortion (with any other point), and additionally, its image will consist of few nonzero coordinates, followed by zeros in the rest.

\begin{theorem}
\label{thm: Priority distortionn and space}
For any $p\in[1,\infty]$, any fixed $\epsilon>0$ and any metric space $(X,d)$ on $n$ points,
there exists an embedding of $X$ into $\ell_{p}^{O(\log^2n)}$ with priority distortion
$O\left(\log^{4+\epsilon}j\right)$, and prioritized dimension $O(\log^4j)$.
\end{theorem}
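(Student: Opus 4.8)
The plan is to imitate the structure of the proof of \theoremref{thm:lp-strong-forall}: partition $X$ according to the priority ranking into $\lceil\log\log n\rceil+1$ sets $S_0,S_1,\dots$ with $S_i$ of doubly-exponential size $\approx 2^{2^i}$, build a separate embedding $f_i$ that handles pairs in $S_i\times X$, and take a scaled direct sum $f=\bigoplus_i \alpha_i\cdot f_i$ with $\sum_i\alpha_i^p\le 1$ so that the concatenation remains non-expansive. The key new requirement, prioritized dimension, forces two additional design choices. First, the embedding $f_i$ must map every point of higher priority than the points of $S_i$ (that is, every point in $S_0\cup\dots\cup S_{i-1}$) to the zero vector, so that $x_j\in S_i$ only has nonzero coordinates coming from $f_0,\dots,f_i$; since $\dim(f_i)=O(\log^2|S_i|)=O(2^{2i})$, the total number of active coordinates for $x_j$ is $\sum_{k\le i}O(2^{2k})=O(2^{2i})=O(\log^4 j)$, giving the claimed prioritized dimension. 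Second, the sets $S_i$ should be nested-friendly: define $S_0=\{x_1,x_2\}$ and $S_i=\{x_j : 2^{2^{i-1}}<j\le 2^{2^i}\}$ exactly as before, so for $v=x_j\in S_i$ we have $2^{i-1}<\log j$ and $|S_i|\le 2^{2^i}$.

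The heart of the matter is constructing $f_i$. I would take Bourgain's embedding (distances to random subsets at all density scales) on the point set $S_i$, but \emph{add all the higher-priority points} $H_i:=S_0\cup\dots\cup S_{i-1}$ to each random set. Concretely, at scale $r$ sample $A$ by including each point of $S_i$ independently with probability $2^{-r}$, and then set $A\leftarrow A\cup H_i$; the coordinate is $x\mapsto d(x,A)$. Because $H_i\subseteq A$ always, every point of $H_i$ has $d(x,A)=0$ in every coordinate, so $f_i$ vanishes on $H_i$ as required. As in \lemmaref{lemma:lp-term-expand}, one also appends the single coordinate $x\mapsto d(x,S_i\cup H_i)$ (which still vanishes on $H_i$) to handle points far from $S_i$, and then extends everything in the natural Fr\'echet manner to all of $X$ so the map is non-expansive. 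Scaling by $2^{-1/p}$ keeps $f_i$ non-expansive overall. The claim to prove is that $f_i$ has contraction $O(\mathrm{poly}(\log|S_i|))$ on pairs in $S_i\times X$.

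The main obstacle is precisely this contraction bound: forcing $H_i$ into every sampled set destroys the clean "half the mass at the right scale" argument of Bourgain's analysis, because a ball around one endpoint of the pair $(u,v)$ (with $u\in S_i$) may be entirely swallowed by $H_i$, killing the lower bound at that scale. The fix is a case analysis on the position of the pair relative to $H_i$. If the nearest point of $H_i$ to $u$ is at distance $\gtrsim d(u,v)/\mathrm{polylog}$, then on all the relevant scales the added points $H_i$ lie outside the balls that matter and Bourgain's argument goes through essentially unchanged (losing only polylogarithmic factors from the coarser scale granularity, $|S_i|$ rather than $n$). If instead some point $h\in H_i$ is very close to $u$, then — crucially — since $f_i$ vanishes on $h$, we can route the lower bound through $h$: by the triangle inequality $\|f_i(v)\|_p=\|f_i(v)-f_i(h)\|_p$ is comparable to $d(v,h)\approx d(u,v)$, using that $f_i$ has good \emph{expansion-free} behavior and a contraction bound for the pair $(v,h)$ with $v\in S_i$. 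I would organize this so that the contraction for $S_i\times X$ comes out as $O(\log^{2+O(\epsilon)}|S_i|)$ or so; combined with the $1/\alpha_i=O((\log\log j)^{(1+\epsilon)/p})$ penalty from the scaled sum and a final Johnson–Lindenstrauss / $\ell_2\hookrightarrow\ell_p$ dimension-reduction step for $p\le 2$ (as in \theoremref{thm:lp-strong-forall}), this yields prioritized distortion $O(\log^{4+\epsilon}j)$ into $\ell_p^{O(\log^2 n)}$, with prioritized dimension $O(\log^4 j)$ as computed above. The precise exponent $4+\epsilon$ will fall out of being slightly wasteful in the case analysis; a tighter argument would presumably do better, but matching the stated bound only requires the crude estimates.
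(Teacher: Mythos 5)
The high-level plan is right in spirit, but two related choices you make are not consistent with the actual argument, and the second of them is a genuine gap.

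First, you partition with doubly exponential sets $|S_i|\approx 2^{2^i}$ as in \theoremref{thm:lp-strong-forall}, but the paper uses \emph{triple} exponential sets $S_i=\{x_j: 2^{2^{2^{i-1}}}<j\le 2^{2^{2^i}}\}$ with $I=\lceil\log\log\log n\rceil$ phases. This is not an optimization you can add later --- it is forced by the structure of the contraction argument (see below), and your dimension arithmetic already betrays this. With $|S_i|\approx 2^{2^i}$ and $x_j\in S_i$, you have $2^{i-1}<\log j$, hence $2^{2i}<4\log^2 j$, so your prefix dimension $\sum_{k\le i}O(2^{2k})=O(2^{2i})$ is $O(\log^2 j)$, not $O(\log^4 j)$ as you claim. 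The $O(\log^4 j)$ in the theorem comes from the triple-exponential sets, where $\log^2|S_i|=2^{2^{i+1}}$ and $2^{2^{i-1}}<\log j$ gives $2^{2^{i+1}}<\log^4 j$.

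The real gap is in the contraction fallback. Your case analysis on $H_i=S_0\cup\dots\cup S_{i-1}$ is the right instinct, and the ``far from $H_i$'' case is handled exactly as you describe (that is essentially the $\gamma_A$ notion in \lemmaref{lemma:main Partial Bourgain Embedding}). But your ``near $H_i$'' fallback cannot be carried out inside $f_i$ alone. You propose to route through the nearby $h\in H_i$ and invoke ``a contraction bound for the pair $(v,h)$'' from $f_i$. The embedding $f_i$ has $f_i(z)=\vec 0$ for every $z\in H_i$, so $f_i$ cannot distinguish $h$ from any other point of $H_i$; more precisely, in the $\gamma_{A}$-language, $\gamma_{H_i}(v,h)=0$ because $d(h,H_i)=0$, so \lemmaref{lemma:main Partial Bourgain Embedding} gives no lower bound for $(v,h)$. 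Worse, if $v$ itself is close to some $h'\in H_i$ then $\|f_i(v)\|_p=\|f_i(v)-f_i(h')\|_p\le d(v,h')$, which can be tiny. The paper's resolution is to make the construction \emph{iterative}: $f_i$ is built by \lemmaref{lem: Partial-Bourgain-Embedding} with $A=S_{<i}$ and $g=F^{(i-1)}$, and the ``near'' case (Case 1 of that lemma) draws its contribution from $g$'s contraction on $A\times X$ (which is available by the inductive hypothesis), not from $f_i$. This iteration is precisely why the distortion accumulates a $O(\log|S_i|/\alpha_i)$ factor per level and why the sets must be triple exponential to make the total come out to $O(\log^{4+\epsilon}j)$. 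A non-iterative sum $\bigoplus_i\alpha_i f_i$ with independent per-level contraction guarantees, as you propose, does not give a valid contraction bound in the near-$H_i$ case, and hence the claimed distortion does not follow.
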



\paragraph{Proof overview.} The basic framework of this embedding appears at a first glance to be similar to \sectionref{sec:distortion-lp}, which is applying a variation of Bourgain's embedding, while sampling only from certain subsets $S_i$ of the points. However, the crux here is that we need to ensure that high priority points will be mapped to the zero vector in the embeddings that "handle" the lower ranked points.

Recall that the coordinates of the embedding are given by distances to sets. The idea is the following: while creating the embedding for the points in $S_i$, we insert all the points with higher ranking (those in $S_0\cup\dots\cup S_{i-1}$) into every one of the randomly sampled sets. This will certify that the high ranked points are mapped to zero in every one of these coordinates. However, the analysis of the distortion no longer holds, as the sets are not randomly chosen. Fix some point $u\in S_i$ and $v\in X$. The crucial observation is that if none of the higher ranked points lie in certain neighborhoods around $u$ and $v$ (the size of these neighborhoods depends on $d(u,v)$), then we can still use the randomness of the selected sets to obtain some bound (albeit not as good as the standard embedding achieves). While if there exists a high ranked point nearby, say $z\in S_{i'}$ for some $i'<i$, then we argue that $u,v$ should already have sufficient contribution from the embedding designed for $S_{i'}$. The formal derivation of this idea is captured in \lemmaref{lem: Partial-Bourgain-Embedding}.

The calculation shows that the distortion guarantee for $u,v$ deteriorates by a logarithmic factor for each $i$, that is, it is the product of the distortion bound for points in $S_{i-1}$ multiplied by $O(\log|S_i|)$. This implies that the optimal size of $S_i$ is {\em triple exponential} in $i$, which yields the best balance between the price paid due to the size of $S_i$ and the product of the logarithms of $|S_0|,\dots,|S_{i-1}|$.

\begin{lemma}
\label{lem: Partial-Bourgain-Embedding}
Let $p\in[1,\infty]$ and $D\ge 1$. Given a metric space $\left(X,d\right)$, two disjoint subsets $A,K\subseteq X$
where $\left|K\right|=k\ge 2$, and a non-expansive embedding $g:X\rightarrow\ell_{p}$ with contraction at most
$D$ for all pairs in $A\times X$, then there is a non-expansive embedding
$f:X\rightarrow\ell_{p}^{O(\log^2 k)}$ such that the following properties hold:
\begin{enumerate}
\item \label{enu:- f(x)=00003D0 for x in A} For all $x\in A$, $f\left(x\right)=\vec{0}$~.
\item \label{enu: contraction bound} For all $(x,y)\in K\times X$, $\|f(x)-f(y)\|_p\ge \frac{d(x,y)}{1000D\cdot \log k}$, or
    $\|g(x)-g(y)\|_p\ge \frac{d(x,y)}{2D}$~.
\end{enumerate}
\end{lemma}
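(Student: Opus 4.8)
The plan is to realize $f$ as a \emph{forced} variant of Bourgain's embedding. I would sample, for each scale $t=1,\dots,\lceil\log k\rceil$ and each of $N=\Theta(\log k)$ repetitions $l$, a set $R_{t,l}\subseteq K$ by including each point of $K$ independently with probability $2^{-t}$, and define the $(t,l)$-coordinate of $f^{(1)}(w)$ to be $m^{-1/p}\cdot d(w,R_{t,l}\cup A)$, where $m=O(\log^2 k)$ is the number of such coordinates. To this I would append a single coordinate $h(w)=d(w,K\cup A)$ and set $f=2^{-1/p}\bigl(f^{(1)}\oplus h\bigr)$ (with no rescaling when $p=\infty$). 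The first property is then immediate: for $x\in A$ every set $R_{t,l}\cup A$ and $K\cup A$ contains $x$, so all coordinates of $f(x)$ vanish. Non-expansiveness is immediate as well, since every coordinate $w\mapsto d(w,\cdot)$ is $1$-Lipschitz and the factor $2^{-1/p}$ compensates for appending $h$; the dimension is $m+1=O(\log^2 k)$.

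For the second property, fix $(x,y)\in K\times X$, put $\Delta=d(x,y)$ and $r=\Delta/(4D)$, and split into cases. If $d(x,A)\le r$ or $d(y,A)\le r$, pick a witness $z\in A$ within distance $r$ of $x$ or of $y$; then the other of $x,y$ lies at distance $\ge\Delta-r\ge 3\Delta/4$ from $z$, so the contraction bound for $g$ on $A\times X$ gives $\|g(z)-g(\cdot)\|_p\ge 3\Delta/(4D)$ for that far point, while $\|g(z)-g(\cdot)\|_p\le r$ for the near one, and the triangle inequality yields $\|g(x)-g(y)\|_p\ge 3\Delta/(4D)-r=\Delta/(2D)$ -- the second alternative. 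Otherwise $d(x,A),d(y,A)>r$. If $d(y,K)\ge\Delta/(C_1 D\log k)$ for a suitable absolute constant $C_1$, then $h(x)=0$ while $h(y)=\min\{d(y,K),d(y,A)\}\ge\Delta/(C_1 D\log k)$, so $\|f(x)-f(y)\|_p$ already exceeds $\Delta/(1000D\log k)$. If instead $d(y,K)<\Delta/(C_1 D\log k)$, let $k_y\in K$ realize $d(y,K)$; then $d(x,k_y)\ge\Delta/2$ and $\min\{d(x,k_y),d(x,A),d(k_y,A)\}=\Omega(\Delta/D)$, so it suffices to know that $f^{(1)}$ separates the pair $(x,k_y)\in K\times K$ well, and then to subtract the at most $d(y,k_y)$ error incurred by the non-expansive map between $f^{(1)}(k_y)$ and $f^{(1)}(y)$. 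The constants $C_1$, $1000$, $2$ can be matched by the usual bookkeeping, and the trivial range $k=O(1)$ is handled directly by a Fr\'echet-type embedding with coordinates $d(\cdot,\{z\}\cup A)$, $z\in K$.

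The heart of the argument is thus the claim that $\|f^{(1)}(u)-f^{(1)}(v)\|_p\ge\Omega\bigl(\min\{d(u,v),d(u,A),d(v,A)\}/\log k\bigr)$ for every $u,v\in K$, which amounts to redoing Bourgain's contraction analysis in the presence of the forced set $A$. The observation that makes this go through is that inserting $A$ into a sampled set only ever \emph{decreases} distances to that set: this is harmless for the ``hitting'' events, and at any scale $t$ whose Bourgain radius $\rho_t$ is at most $\min\{d(u,A),d(v,A)\}$ one still has $d(u,R_{t,l}\cup A)\ge\rho_t$ on the ``missing'' event, because no point of $A$ lies within $\rho_t$ of $u$. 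Hence the telescoping sum $\sum_t(\rho_t-\rho_{t-1})$ now reaches only $R:=\min\{\Delta/2,d(u,A),d(v,A)\}$ instead of $\Delta/2$ -- which is precisely the extra factor of $D$ in the statement -- and a standard Chernoff bound together with a union bound over the $\binom{k}{2}$ pairs of $K$ shows that $N=\Theta(\log k)$ repetitions per scale suffice (each relevant event depends only on the metric restricted to $K$), keeping the dimension at $O(\log^2 k)$. I expect this re-derivation, namely verifying that the forced points never spoil a ``missing'' event and that the effective radius is exactly $R$, together with the constant-chasing in the case analysis, to be the main work; everything else runs closely parallel to \lemmaref{lemma:lp-term-expand}.
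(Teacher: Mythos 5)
Your construction and case analysis mirror the paper's proof almost exactly: you build $f$ as $2^{-1/p}(f^{(1)}\oplus h)$ where $f^{(1)}$ is the forced Bourgain embedding (sampling from $K$, always appending $A$) and $h(w)=d(w,K\cup A)$, and your three cases — near $A$ use $g$; far from $A$ but far from $K$ use $h$; far from $A$ and near some $k_y\in K$ use $f^{(1)}$ plus a triangle inequality — are precisely the paper's Cases 1–3, with your "key claim" being the paper's inner lemma (the contribution lower bound in terms of $\gamma_A(u,v)=\min\{d(u,v)/2,\,d(u,A),\,d(v,A)\}$). This is the same argument up to constants, so it is correct.
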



We postpone the proof of \lemmaref{lem: Partial-Bourgain-Embedding} to \sectionref{sec:lemma10}, and prove \theoremref{thm: Priority distortionn and space} using the lemma.
\begin{proof}[Proof of Theorem \ref{thm: Priority distortionn and space}]
Let $I=\lceil \log\log\log n\rceil$. Let $S_0=\left\{ x_{1},x_{2},x_3,x_4\right\} $, and for $1\le i\le I$ let $S_{i}=\left\{x_j~:~ 2^{2^{2^{i-1}}} < j\le 2^{2^{2^{i}}}\right\}$.
Also define $S_{< i}=\bigcup_{0\le k< i}S_k$.

The desired embedding $F:X\to\ell_p$ will be created by iteratively applying \lemmaref{lem: Partial-Bourgain-Embedding}, each time using its output function $f$ as part of the input for the next iteration. Formally, for each $0\le i\le I$ apply \lemmaref{lem: Partial-Bourgain-Embedding} with parameters $A=S_{<i}$, $K=S_i$, $g=F^{(i-1)}$ and $D=2^{2^i+5i^2}$, to obtain a map $f_i:X\to\ell_p$. The map $F^{(i)}:X\to\ell_p$ is defined as follows: $F^{(-1)}\equiv 0$, and $F^{(i)}=\bigoplus_{k=0}^i\alpha_k\cdot f_k$, where $(\alpha_k)$ is a sequence that ensures $F^{(i)}$ is non-expansive for all $i$. For concreteness, take $\alpha_k=\left(\frac{6}{\pi^2(k+1)^2}\right)^{1/p}$. The final embedding is defined by $F=F^{(I)}$.

Fix any pair $x,y\in X$. As $f_i$ is non-expansive by \lemmaref{lem: Partial-Bourgain-Embedding}, we obtain that $F$ is non-expansive as well:
\[
\|F(x)-F(y)\|_p^p= \sum_{i=0}^I\alpha_i^p\cdot \|f_i(x)-f_i(y)\|_p^p\le\sum_{i=0}^\infty\frac{6}{\pi^2(i+1)^2}\cdot d(x,y)^p=d(x,y)^p~.
\]
Next, we must show that for each $0\le i\le I$, the embedding $F^{(i-1)}$ has contraction at most $2^{2^i+5i^2}$ for pairs in $S_{<i}\times X$, to comply with the requirement of \lemmaref{lem: Partial-Bourgain-Embedding}. We prove this by induction on $i$, the base case for $i=0$ holds trivially as $F^{(-1)}$ has no requirement on its contraction (since $S_{<0}=\emptyset$). Assume (for $i$) that $F^{(i-1)}$ has contraction at most $2^{2^i+5i^2}$ on pairs in $S_{<i}\times X$. For $i+1$, let $x\in S_{<i+1}$ and $y\in X$. Recall that $F^{(i)}$ is generated by applying \lemmaref{lem: Partial-Bourgain-Embedding} with $A=S_{<i}$, $K=S_i$, $g=F^{(i-1)}$, and $D=2^{2^i+5i^2}$. Then the lemma returns $f_i$, and finally $F^{(i)}=g\oplus(\alpha_i\cdot f_i)$.

We may assume that $x\in S_i$, otherwise $g=F^{(i-1)}$ has the required contraction on $x,y$ by the induction hypothesis. Applying condition (\ref{enu: contraction bound}) of the lemma: if it is the case that $\|g(x)-g(y)\|_p\ge d(x,y)/(2D)$, then clearly $2D<  2^{2^{i+1}+5(i+1)^2}$. The other case is that $\|f_i(x)-f_i(y)\|_p\ge \frac{d(x,y)}{1000D\cdot \log |S_i|}$. Since $\log|S_i|\le 2^{2^i}$ and $1/\alpha_i\le 2(i+1)^2$, the contraction of $F^{(i)}$ is at most the contraction of $\alpha_i\cdot f_i$, which is bounded by
\[
\frac{1000D\cdot \log|S_i|}{\alpha_i}\le 1000\cdot 2^{2^i+5i^2}\cdot 2^{2^i}\cdot 2(i+1)^2<2^{2\cdot 2^i+ 5i^2+2\log(i+1)+11}<2^{2^{i+1}+5(i+1)^2}~.
\]

Observe that if $x=x_j\in S_i$ for some $j>1$, then $2^{2^{i-1}}<\log j$, and thus the distortion of $F$ for any pair containing $x$ is at most $2^{2^{i+1}+5(i+1)^2}=O(\log^4j)\cdot 2^{O((2+\log\log\log j)^2)}=O(\log^{4+\epsilon}j)$. Additionally, note that as the distortion of $F^{(I-1)}$ is at most $D=2^{2^I+5I^2}$, the same argument suggests that the maximal distortion of $F=F^{(I)}$ for any pair is at most
\[
\frac{1000D\cdot\log n}{\alpha_I}\le 1000\cdot 2^{2^I+5I^2}\cdot\log n\cdot 2(I+1)^2=  O(\log^{3+\epsilon}n)~.
\]

Finally, let us bound the number of nonzero coordinates of the points. Recall that $f_i$ maps $X$ into $O(\log^2|S_i|)\le O(2^{2^{i+1}})$ dimensions. Fix some $x=x_j$ for $j>1$, and let $i$ be such that $x_j\in S_i$. Note that $2^{2^{i-1}}<\log j$, so that $2^{2^{i+1}}<\log^4j$. By \lemmaref{lem: Partial-Bourgain-Embedding}, for every $i'>i$, $f_{i'}(x_j)=\vec{0}$, and the number of coordinates used by $F^{(i)}$ is at most
\[
\sum_{k=0}^i O(2^{2^{k+1}}) = O(2^{2^{i+1}}) = O(\log^4j)~.
\]

Since the dimension of $f_I$ is at most $O(\log^2n)$, we get that the total number of coordinates used by $F$  is only
\[
\sum_{k=0}^{I-1} O(2^{2^{k+1}}) +O(\log^2n) \le O(2^{2^{1+\log\log\log n}}) + O(\log^2n)=O(\log^2n)~.
\]

\end{proof}

\subsubsection{Proof of \lemmaref{lem: Partial-Bourgain-Embedding}}\label{sec:lemma10}

The basic approach to the proof is similar to \lemmaref{lemma:lp-term-expand}, which is sampling subsets of $K$, according to various densities. The main difference is that we insert all the points of $A$ into each sampled set, to ensure $f(x)=\vec{0}$ for all $x\in A$. The standard analysis of Bourgain for a pair $x,y$, considers certain neighborhoods defined according to the density of points around $x,y$. We show that the analysis still works as long as no point of $A$ is present in those neighborhoods. Thus we can obtain a contribution which is proportional to the distance of $x,y$ to $A$ (or to $d(x,y)$ if that distance is large). This motivates the following definition and lemma.

\begin{definition}
The $\gamma$-distance between $x$ and $y$ with respect to $A$ is defined to be
\[
\gamma_{A}\left(x,y\right)=\min\left\{ \frac{d(x,y)}{2},d(x,A),d(y,A)\right\}~.
\]
\end{definition}

\begin{lemma}
\label{lemma:main Partial Bourgain Embedding} Let $c=24$. There exists a non-expansive embedding $\varphi:X\to\ell_p^{O(\log^2k)}$, such that for all $z\in A$, $\varphi(z)=\vec{0}$, and for all $x,y\in K$,
\[
\|\varphi(x)-\varphi(y)\|_p\ge \frac{\gamma_A(x,y)}{c\log k}~.
\]
\end{lemma}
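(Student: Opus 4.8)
The plan is to run Bourgain's embedding "restricted to $K$", but with every sampled set augmented by all of $A$, and then to track carefully where the usual argument survives. Concretely, for each scale $t = 1, 2, \dots, \lceil \log k \rceil$ and each repetition $q = 1, \dots, O(\log k)$, I would sample a set $R_{t,q} \subseteq K$ by including each point of $K$ independently with probability $2^{-t}$, and then set $A_{t,q} = R_{t,q} \cup A$. The coordinate indexed by $(t,q)$ of the embedding $\varphi$ is $x \mapsto \mu^{-1/p}\, d(x, A_{t,q})$, where $\mu = O(\log^2 k)$ is the total number of coordinates, so that $\varphi$ is automatically non-expansive (each coordinate is $1$-Lipschitz and we normalize by $\mu^{-1/p}$). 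Property that $\varphi(z) = \vec 0$ for all $z \in A$ is immediate, since $z \in A \subseteq A_{t,q}$ forces $d(z, A_{t,q}) = 0$ in every coordinate.

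For the lower bound, fix $x, y \in K$ and write $\gamma = \gamma_A(x,y) = \min\{ d(x,y)/2,\ d(x,A),\ d(y,A)\}$. I would follow the standard Bourgain potential argument: define radii $\rho_0 = 0 < \rho_1 < \dots$ by letting $\rho_t$ be the smallest radius such that both $|B(x,\rho_t) \cap K| \ge 2^t$ and $|B(y,\rho_t) \cap K| \ge 2^t$, stopping at the first index $\hat t$ where $\rho_{\hat t} \ge \gamma$ (and truncating $\rho_{\hat t}$ to $\gamma$); this $\hat t$ is at most $\lceil \log k \rceil$ since $x \in B(x,\cdot) \cap K$ and $|K| = k$. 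The key point — and this is where augmenting by $A$ must be handled — is that for every $t < \hat t$ we have $\rho_t < \gamma \le d(x,A)$ and $\rho_t < d(y,A)$, so the balls $B(x,\rho_t)$ and $B(y,\rho_t)$ contain \emph{no} point of $A$; hence inside these balls $A_{t,q}$ looks exactly like a random subset of $K$ at density $2^{-t}$, and the usual event — that at scale $t$ the set $A_{t,q}$ meets the smaller of the two balls of radius $\rho_t$ but misses the larger of the two balls of radius $\rho_{t-1}$ — occurs with constant probability. On that event $|d(x, A_{t,q}) - d(y, A_{t,q})| \ge \rho_t - \rho_{t-1}$ (here one also uses $\rho_t \le \gamma \le d(x,y)/2$, so that $A$-points far away cannot spoil the estimate: any point of $A_{t,q}$ that is in $A$ is at distance $\ge \gamma \ge \rho_t$ from both $x$ and $y$, so it never realizes the nearest-point distance at these scales). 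Summing the expected $p$-th-power contributions over $t = 1, \dots, \hat t$ telescopes $\sum_t (\rho_t - \rho_{t-1})$ to $\rho_{\hat t} = \gamma$, and after dividing by the normalization and taking $p$-th roots one gets $\|\varphi(x) - \varphi(y)\|_p \ge \gamma / (c \log k)$ for $c = 24$ (the constant absorbing the constant success probability, the number of repetitions needed for concentration, and the $\ell_p$ vs.\ $\ell_\infty$ loss; for $p = \infty$ one argues directly with the single best coordinate).

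The main obstacle I anticipate is verifying the "no $A$-point interference" claim cleanly at the boundary scale $\hat t$, where $\rho_{\hat t}$ has been truncated to exactly $\gamma$: there one of $d(x,A)$, $d(y,A)$, or $d(x,y)/2$ equals $\gamma$, so an $A$-point may sit exactly on the sphere of radius $\gamma$, and one must check that the one-sided gap estimate $|d(x,A_{\hat t,q}) - d(y,A_{\hat t,q})| \ge \rho_{\hat t} - \rho_{\hat t - 1}$ still holds (it does, because the relevant event only asks $A_{\hat t, q}$ to hit $B(\cdot, \rho_{\hat t})$ and miss $B(\cdot, \rho_{\hat t-1})$, and a point of $A$ at distance exactly $\gamma$ does not lie in the inner ball of radius $\rho_{\hat t-1} < \gamma$, so it can only help). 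A secondary technical point is the case $d(x,y)/2 = \gamma$, i.e.\ $x$ and $y$ are close relative to their distance to $A$; there the symmetric treatment of both balls is exactly the classical Bourgain argument and goes through verbatim once the $A$-freeness of the relevant balls is in hand. Everything else is the routine Chernoff bound over $O(\log k)$ independent repetitions per scale, which I would not spell out.
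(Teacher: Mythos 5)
Your construction and analysis are the same as the paper's: sample subsets of $K$ at density $2^{-t}$, union each with $A$, take normalized distance coordinates, and run the Bourgain potential argument with radii truncated at $\gamma_A(x,y)$, using the fact that the relevant balls are $A$-free. The one substantive error is that you state the good event backwards, and you do so consistently in both places: you ask that $A_{t,q}$ \emph{meet} the sparse ball of radius $\rho_t$ and \emph{miss} the ball of radius $\rho_{t-1}$. The correct event (and the one your conclusion actually requires) is the reverse: $A_{t,q}$ must \emph{miss} the open ball of radius $\rho_t$ around the center whose ball contains at most $2^t$ points of $K$ (probability $\ge (1-2^{-t})^{2^t}\ge 1/4$), and \emph{hit} the ball of radius $\rho_{t-1}$ around the \emph{other} center, which contains at least $2^{t-1}$ points of $K$ (probability $\ge 1-(1-2^{-t})^{2^{t-1}}\ge 1/3$); only then do you get $d(\cdot,A_{t,q})\ge\rho_t$ at one center and $\le\rho_{t-1}$ at the other, hence a gap of $\Delta_t$. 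As literally written, your event neither occurs with constant probability (missing a ball with possibly many points, meeting a ball with possibly one point) nor implies any separation. A smaller point: summing the $p$-th powers of the $\Delta_t$ does not telescope to $\gamma^p$; one needs H\"older's inequality, $\sum_t\Delta_t^p\ge I^{1-p}\bigl(\sum_t\Delta_t\bigr)^p$ with $I=\lceil\log k\rceil$, and this is exactly where the $\log k$ in the denominator of the final bound comes from (it is not a constant absorbed into $c$). With these two points repaired, your argument matches the paper's proof.
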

We defer the proof of \lemmaref{lemma:main Partial Bourgain Embedding}, and proceed first with the proof of \lemmaref{lem: Partial-Bourgain-Embedding}.
Define $h:X\to\R$ for $x\in X$ as $h(x)=d(x,A\cup K)$. Our embedding $f$ is
\[
f=\frac{\varphi\oplus h}{2^{1/p}}~.
\]
Since both $\varphi$ and $h$ are non-expansive and vanish on $A$, clearly $f$ is non-expansive as well, and $f(z)=\vec{0}$ for any $z\in A$. It remains to show property (\ref{enu: contraction bound}) of the lemma.
Fix any $x\in K$ and $y\in X$, and consider the following three cases:
\paragraph{Case 1:} $d\left(\left\{ x,y\right\} ,A\right)\le\frac{d(x,y)}{4D}$.

In this case we shall use the guarantees of the map $g$. Assume w.l.o.g that $z\in A$ is such that $d(y,z)\le \frac{d(x,y)}{4D}$. Then by the triangle inequality
\begin{equation}\label{eq:rree}
d(x,z)\ge d(x,y)-d(y,z)\ge d(x,y)-\frac{d(x,y)}{4D}\ge\frac{3d(x,y)}{4}~.
\end{equation}
Now, using that $g$ is non-expansive, and has contraction at most $D$ for any pair in $A\times X$, we obtain that
\begin{eqnarray*}
\|g(x)-g(y)\|_p&\ge& \|g(x)-g(z)\|_p - \|g(z)-g(y)\|_p\\
&\ge&\frac{d(x,z)}{D} - d(z,y)\\
&\stackrel{\eqref{eq:rree}}{\ge}&\frac{3d(x,y)}{4D} - \frac{d(x,y)}{4D}\\
&=&\frac{d(x,y)}{2D}~,
\end{eqnarray*}
which satisfies property (\ref{enu: contraction bound}).

\paragraph{Case 2:} $d\left(\left\{ x,y\right\} ,A\right)>\frac{d(x,y)}{4D}$ and $d(y,K)\ge\frac{d(x,y)}{20cD\cdot\log k}$ (where $c=24$ is the constant of \lemmaref{lemma:main Partial Bourgain Embedding}).

Here we shall use the map $h$ for the contribution. Since $d(y,A)\ge d(x,y)/(4D)$, we have that $h(y) = d(y,A\cup K)\ge \frac{d(x,y)}{20cD\cdot\log k}$
and of course $h(x)=0$, so that
\[
\|f(x)-f(y)\|_p\ge \frac{|h(x)-h(y)|}{2}\ge\frac{d(x,y)}{40cD\cdot\log k}~,
\]
as required.

\paragraph{Case 3:} $d\left(\left\{ x,y\right\} ,A\right)>\frac{d(x,y)}{4D}$ and $d(y,K)<\frac{d(x,y)}{20cD\cdot\log k}$.

In this case, the function $\varphi$ will yield the required contribution, by employing a similar strategy to \lemmaref{lemma:lp-term-expand}. Let $k_y\in K$ be such that $d(y,k_y)=d(y,K)$. Note that $d(k_y,A)\ge d(y,A)-d(y,k_y)\ge\frac{d(x,y)}{4D}-\frac{d(x,y)}{20cD\cdot\log k}\ge \frac{d(x,y)}{5D}$, and it follows that
\begin{equation}\label{eq:trtrt}
\gamma_A(x,k_y)\ge\frac{d(x,y)}{5D}~.
\end{equation}
By \lemmaref{lemma:main Partial Bourgain Embedding}, since $f$ is non-expansive, and using another application of the triangle inequality, we conclude that
\begin{eqnarray*}
\|f(x)-f(y)\|_p&\ge&\|f(x)-f(k_y)\|_p-\|f(y)-f(k_y)\|_p\\
&\ge&\frac{\|\varphi(x)-\varphi(k_y)\|_p}{2}-d(y,k_y)\\
&\ge&\frac{\gamma_A(x,k_y)}{2c\log k}-\frac{d(x,y)}{20c D\cdot\log k}\\
&\stackrel{\eqref{eq:trtrt}}{\ge}&\frac{d(x,y)}{10c D\cdot\log k}-\frac{d(x,y)}{20c D\cdot\log k}\\
&=&\frac{d(x,y)}{20c D\cdot\log k}~.
\end{eqnarray*}
This concludes the proof of \lemmaref{lem: Partial-Bourgain-Embedding}. It remains to validate \lemmaref{lemma:main Partial Bourgain Embedding}, which is similar in spirit to the methods of \cite{bourgain,llr}, we give full details for completeness.

\begin{proof}[Proof of Lemma \ref{lemma:main Partial Bourgain Embedding}]
Let $I=\lceil \log k\rceil$ and $J=C\cdot\log k$ for a constant $C$ that will be determined later.
For each $i\in[I]$ and $j\in[J]$ sample a set $Q'_{ij}$ by including each $x\in K$ independently with probability $2^{-i}$, and let $Q_{ij}=Q'_{ij}\cup A$. Define maps $\varphi_{ij}:X\to\R$ by letting for each $u\in X$, $\varphi_{ij}(u)=d(u,Q_{ij})$, and $\varphi:X\to\ell_p^{I\cdot J}$ by
\[
\varphi(u)=\frac{1}{(I\cdot J)^{1/p}}\bigoplus_{i\in[I]}\bigoplus_{j\in[J]}\varphi_{ij}(u)~.
\]
Since each $\varphi_{ij}$ is non-expansive, $\varphi$ is non-expansive as well, and in what follows we bound its contraction.

Define for $u\in K$ and $r\ge 0$ the ball restricted to $K$, $B_K(u,r)=B(u,r)\cap K$, and recall that by $B^{\circ}$ we mean the open ball.
Fix a pair $u,v\in K$, and for each $0\le i\le I$, let $r'_i$ be the minimal such that both $|B_K(u,r)|\ge 2^i$ and $|B_K(v,r)|\ge 2^i$. Define $r_i=\min\{r'_i,\gamma_A(u,v)\}$ and let $\Delta_i=r_i-r_{i-1}$. Observe that $r_0=0$ and $r_I=\gamma_A(u,v)$, so that
\begin{equation}\label{eq:frf}
\sum_{i\in[I]}\Delta_i = \gamma_A(u,v)~.
\end{equation}
We first claim that for each $i\in[I]$ and $j\in[J]$,
\begin{equation}\label{eq:poi}
\Pr[|\varphi_{ij}(u)-\varphi_{ij}(v)|\ge \Delta_i]\ge 1/12~.
\end{equation}
If $\Delta_i=0$ then there is nothing to prove. Assume then that $r_{i-1}<r_i$, and note that either $|B_K^\circ(u,r_i)|\le 2^i$ or $|B_K^\circ(v,r_i)|\le 2^i$ (otherwise it contradicts the minimality of $r_i$). W.l.o.g we have that $|B_K^\circ(u,r_i)|\le 2^i$. Furthermore, note that the sets $B_K^\circ(u,r_i)$, $B_K(v,r_{i-1})$ and $A$ are pairwise disjoint. Let ${\cal E}$ be the event that $\{Q_{ij}\cap B_K^\circ(u,r_i)=\emptyset\}$ and ${\cal F}$ be the event that $\{Q_{ij}\cap B_K(v,r_{i-1})\neq\emptyset\}$. Observe that if both events hold then $d(u,Q_{ij})\ge r_i$ and $d(v,Q_{ij})\le r_{i-1}$, so that
\[
|\varphi_{ij}(u)-\varphi_{ij}(v)|\ge r_i-r_{i-1}=\Delta_i~.
\]
Since both balls are disjoint from $A$, we have that
\begin{eqnarray*}
\Pr[{\cal E}]  =  \prod_{x\in B_{K}^\circ\left(u,r_{i}\right)}\Pr\left[x\notin Q'_{ij}\right]
=\left(1-2^{-i}\right)^{\left|B_{K}^\circ\left(u,r_{i}\right)\right|}
 \ge  \left(1-2^{-i}\right)^{2^{i}}\ge\frac{1}{4}\ .
\end{eqnarray*}
And similarly,
\begin{eqnarray*}
\Pr[{\cal F}] =  1-\prod_{x\in B_{K}\left(v,r_{i-1}\right)}\Pr\left[x\notin Q'_{ij}\right]
=1-\left(1-2^{-i}\right)^{\left|B_{K}\left(v,r_{i-1}\right)\right|}
 \ge 1-\left(1-2^{-i}\right)^{2^{i-1}}
\ge 1-e^{-\frac{1}{2}}\ge\frac{1}{3}~ .
\end{eqnarray*}
Since the events ${\cal E}$ and ${\cal F}$ are independent, this concludes the proof of \eqref{eq:poi}.
Let $X_{ij}$ be an indicator random variable for the event that $|\varphi_{ij}(u)-\varphi_{ij}(v)|\ge \Delta_i$, and $X_i=\sum_{j=1}^JX_{ij}$. Using the independence for different values of $j$, and that $\E[X_i]\ge J/12$, a Chernoff bound yields that for any $i$
\[
\Pr[X_i< J/24]\le e^{-J/100}\le 1/k^3~,
\]
when $C$ is sufficiently large.
Note that if indeed $X_i\ge J/24$ for all $1\le i\le I$ then
\begin{eqnarray*}
\|\varphi(u)-\varphi(v)\|_{p}^{p} & = & \frac{1}{I\cdot J}\sum_{i=1}^I\sum_{j=1}^J|\varphi_{ij}(u)-\varphi_{ij}(v)|^{p}\\
& \ge & \frac{1}{24I}\sum_{i=1}^I\Delta_{i}^{p}\\
& \ge & \frac{I^{1-p}}{24I}\Big(\sum_{i=1}^I\Delta_{i}\Big)^{p}\\
&\stackrel{\eqref{eq:frf}}{\ge}& \frac{\gamma_A(u,v)^p}{24I^p}~,
\end{eqnarray*}
where the second inequality uses H\"{o}lder's inequality.
Applying a union bound over the ${k\choose 2}$ possible pairs in ${K\choose 2}$, and the $I=\lceil\log k\rceil$ possible values of $i$, there is at least a constant probability that for every pair $\|\varphi(u)-\varphi(v)\|_{p}\ge \frac{\gamma_A(u,v)}{24^{1/p}\cdot \log k}$.

\end{proof}
\bibliographystyle{alpha}
\bibliography{art}

\end{document}